\journal{Journal}
\newcommand\liu[1]{\textcolor{red}{#1}}
\def\var{\textnormal{var}}
\def\ccov{\textnormal{ccov}}
\def\imp{\textnormal{imp}}
\def\mim{\textnormal{mim}}
\def\adj{\textnormal{F}}
\def\int{\textnormal{L}}
\def\tom{\textnormal{T}}
\def\sadj{\textnormal{sF}}
\def\sint{\textnormal{sL}}
\def\stom{\textnormal{sT}}
\def\qk{q_{[k]}}
\def\pk{p_{[k]}}
\def\ss{\textnormal{ss}}
\def\sumi{\sum_{i=1}^{n}}
\def\sumk{\sum_{k=1}^{K}}
\def\nk{n_{[k]}}
\def\nkt{n_{[k]1}}
\def\nkc{n_{[k]0}}
\def\var{\textnormal{var}}
\def\cov{\textnormal{cov}}
\begin{document}

\title{Regression adjustment in covariate-adaptive randomized experiments with missing covariates}

\author[1]{Wanjia Fu}

\author[2]{Yingying Ma}

\author[3]{Hanzhong Liu}

\authormark{Fu \textsc{et al.}}
\titlemark{Regression adjustment in covariate-adaptive randomized experiments with missing covariates}

\address[1]{\orgdiv{Department of Statistics and Data Science}, \orgname{Tsinghua University}, \orgaddress{\state{Beijing}, \country{China}}}

\address[2]{\orgdiv{School of Economics and Management}, \orgname{Beihang University}, \orgaddress{\state{Beijing}, \country{China}}}

\address[3]{\orgdiv{Department of Statistics and Data Science}, \orgname{Tsinghua University}, \orgaddress{\state{Beijing}, \country{China}}}

\corres{Hanzhong Liu, Department of Statistics and Data Science, Tsinghua University, China. \email{lhz2016@tsinghua.edu.cn}}


\abstract[Abstract]{Covariate-adaptive randomization is widely used in clinical trials to balance prognostic factors, and regression adjustments are often adopted to further enhance the estimation and inference efficiency. In practice, the covariates may contain missing values. Various methods have been proposed to handle the covariate missing problem under simple randomization. However, the statistical properties of the resulting average treatment effect estimators under stratified randomization, or more generally, covariate-adaptive randomization, remain unclear. To address this issue, we investigate the asymptotic properties of several average treatment effect estimators obtained by combining commonly used missingness processing procedures and regression adjustment methods. Moreover, we derive consistent variance estimators to enable valid inferences. Finally, we conduct a numerical study to evaluate the finite-sample performance of the considered estimators under various sample sizes and numbers of covariates and provide recommendations accordingly. Our analysis is model-free, meaning that the conclusions remain asymptotically valid even in cases of misspecification of the regression model. 
}

\keywords{Causal Inference, covariate-adaptive randomization, missing values, regression adjustments, tyranny-of-the-minority.}

\jnlcitation{\cname{%
\author{W. Fu},
\author{Y. Ma}, and
\author{H. Liu}}.
\ctitle{Regression adjustment in covariate-adaptive randomized experiments with missing covariates.} \cjournal{\it Statistics in Medicine.} \cvol{2025;00(00):1--15}.}

\maketitle

\renewcommand\thefootnote{}
\footnotetext{\textbf{Abbreviations:} ATE, average treatment effect; CC, complete case analysis; CCOV, complete covariate analysis; IMP, single imputation; MIM, missing indicator method; MPM, missing pattern method.}

\renewcommand\thefootnote{\fnsymbol{footnote}}
\setcounter{footnote}{1}

\section{Introduction}
\label{sec1}

Originating from the seminal work of \cite{Fisher:1926}, the advocacy for randomization in experimental design emphasized its pivotal role in estimating the treatment effects. Simple randomization, the primary randomization method, assigns treatment levels independently to units based on predetermined probabilities. However, this method may lead to unbalanced treatment assignments within important covariates that are predictive of the outcomes. For example, imagine a scenario where all men are assigned to the treatment group while all women are to the control--an obvious imbalance with potential ramifications. 

To mitigate such risks, various randomization methods have been proposed to balance treatment allocations among covariates, such as biased coin design \citep{Efron:1971}, permutation block randomization \citep{Zelen:1974}, Pocock--Simon's minimization \citep{Pocock:1975}, collectively termed covariate-adaptive randomization. These randomization schemes have been widely used across disciplines like clinical trials, biomedical, and social sciences. \cite{Lin:2015} and \cite{Ciolino:2019} underscored their prevalence, with approximately 80\% of papers in leading medical journals utilizing covariate-adaptive randomization.

In addition to stratification variables, experimenters often collect additional covariates \citep{Liu:2020,Ma:2020,Ye:2020}. To address remaining imbalances, regression adjustment has become a stalwart technique, enhancing estimation and inference efficiency \citep[see, e.g.,][]{Lin:2013,Liu:2020,Shi:2022,Zhao2022reconciling,Zhao2022fact,Wang:2023,Zhao:2023,Liu:2024}. The rigorous establishment of asymptotic properties of treatment effect estimators under covariate-adaptive randomization was pioneered by \cite{Bugni:2018}, followed by significant advancements in the theory of regression adjustment and robust inference \citep{Bugni:2019,Ma:2020,Ye:2020,LiuH:2022, Bai:2022, Bai:2023, Bai:2024}. Robustness refers to the validity of the resulting estimates and inference procedures under model misspecification.

However, practical scenarios often present challenges, such as missing covariate values, complicating the implementation of regression adjustment. To tackle this issue, one approach is to employ parametric methods such as maximum likelihood estimation, Bayesian techniques, or multiple imputations. These methods leverage available information to enhance efficiency \citep{Rubin:1987, Little:1992}. Nevertheless, the effectiveness of such approaches depends heavily on the validity of underlying assumptions, and deviations from these assumptions may result in efficiency loss \citep{White:2010}. Alternatively, nonparametric methods provide a strategy that does not rely on assumptions regarding the missingness mechanism. \cite{Zhao:2022} conducted a comprehensive analysis of missingness processing methods in conjunction with Fisher's additive \citep{Fisher:1935} and Lin's \citep{Lin:2013} with-interaction regression adjustments within the finite population framework; while our paper considers a super-population framework. Under covariate-adaptive randomization, while \cite{Wang:2023} explored the missing outcomes problem, they assumed full observance of covariates. 

There's a dearth of comprehensive investigations into missingness processing methods combined with Fisher's and Lin's regression adjustments under covariate-adaptive randomization within the super-population framework that allows for model-misspecification. Moreover, under complete randomization, Fisher's regression lacks assurances of efficiency gains \citep{Freedman:2008a}, and the incorporation of treatment-by-covariate interactions in Lin's regression significantly increases the degrees of freedom and influences the robustness of resulting treatment effect estimators \citep{Lu:2022}. Here, robustness refers to the estimator's insensitivity to minor changes in the sample \citep{huber2004robust}, an expanding dimension of the covariates, and an unbalanced design (the treated proportion $\ne$ 0.5). This issue becomes more pronounced under covariate-adaptive randomization, especially when considering the treatment-by-stratum interactions with a large number of strata.

To bridge these gaps, our study evaluates eighteen regression-adjusted average treatment effect (ATE) estimators under covariate-adaptive randomization with missing covariates. These estimators are derived from the combination of three commonly used nonparametric missingness processing procedures, along with six regression adjusting methods, encompassing both stratum-common and stratum-specific versions \citep{Ma:2020} of Fisher's, Lin's, and tyranny-of-the-minority (ToM) regressions. It is important to note that the analysis conducted by \cite{Zhao:2022} does not incorporate ToM regression, which constitutes one of the main differences between our work and theirs. ToM regression allocates greater weights to units within smaller sample size groups. While it is asymptotically equivalent to Lin's regression, ToM regression demonstrates enhanced robustness in finite samples, attributed in part to its resolution of nearly half of the degrees of freedom. These properties were scrutinized by \cite{Lu:2022} under complete randomization and stratified randomization without missing covariates within a finite population framework. We extend this theory to encompass more general covariate-adaptive randomization within the super-population framework, accommodating scenarios with missing covariates. Moreover,  we propose consistent variance estimators to facilitate valid inferences. Based on both theoretical insights and finite sample performance, we identify the most efficient estimator and offer practical recommendations regarding the utilization of missingness processing and regression adjustment methods under covariate-adaptive randomization with missing covariates. Importantly, our analysis is model-free, that is, the validity of our results does not require modeling assumptions on the missingness mechanism and allows the linear regression model to be misspecified. Moreover, the recommended estimator can be readily implemented using standard statistical software packages.

The remainder of the paper is organized as follows. In Section \ref{sec2}, we introduce the framework and notation. In Section \ref{sec3}, we introduce the missingness processing procedures. In Section \ref{sec4}, we apply the missingness processing procedures introduced in Section \ref{sec3} to six different regression adjustment methods and present the corresponding theoretical results. In Section \ref{sec5}, we conduct a numerical study to explore the finite sample performance of the considered missingness processing and regression adjusting methods. We conclude the paper in Section \ref{sec7}. Proofs are relegated to the Supplementary Material.

\section{Framework and notation}
\label{sec2}


	Consider a randomized experiment involving $n$ units. For each unit $i$, $i=1,\ldots,n$, denote $X_i = (X_{i1},\ldots,X_{ip})^\top \in \mathbb R^p$ and $Z_i = (Z_{i1},\ldots,Z_{ip_1})^\top \in \mathbb R^{p_1}$ the baseline covariates with $Z_i$ being used for stratification. Let $B$: $\textnormal{support}(Z)\rightarrow\{1,\ldots,K\}$ be the stratum label, which divides all units into $K$ strata according to the values of $Z$. Denote $B_i = B(Z_i)$. We assume that $p$ and $K$ are fixed, and that the probability $p_{[k]}= P(B_i=k)$ of units belonging to stratum $k$ is greater than zero for $k=1,\ldots,K$. In each stratum, units are assigned to the treatment and control groups by a randomization method. Let $A_i$ be the treatment indicator for unit $i$, where $A_i = 1$ represents treatment and $A_i = 0$ represents control. 
Denote $n_{[k]} = \sum_{i \in [k]} 1$, $n_{[k]1} = \sum_{i \in [k]} A_i$ and $n_{[k]0} = \sum_{i \in [k]}(1-A_i)$ the number of units, the number of treated units, and the number of control units in stratum $k$, respectively, where $[k]$ is the index set of units that belong to the $k$th stratum. Let $p_{n[k]} ={n_{[k]}}/{n}$ and $\pi_{n[k]} = n_{[k]1}/n_{[k]}$ be the proportion of stratum-size and the proportion of treated units in stratum $k$. Let $n_1 = \sumi A_i$ and $n_0 = \sumi (1 - A_i)$ denote the total numbers of treated and control units, respectively. We use the Neyman--Rubin potential outcome model \citep{Neyman:1923, Rubin:1974} to define the treatment effect. Let $Y_i (1)$ and $Y_i(0)$ be the potential outcomes of unit $i$ under treatment and control, respectively, and $Y_i=A_i Y_i (1)+(1-A_i)Y_i (0)$ be the observed outcome. We are interested in estimating and inferring the average treatment effect (ATE), $\tau = E\{Y_i(1) - Y_i(0)\}$.

The covariate $X_{ij}$ ($i=1,\ldots,n$, $j=1,\ldots,p$) may have missing values. Let $M_{ij} = A_i M_i(1) + (1 - A_i) M_i(0)$ be the missing indicator of whether $X_{ij}$ is missing or not, where $M_i(1)$ and $M_i(0)$ represent the potential missing indicators under treatment and control, respectively. The observed data is $\{(Y_i,X_i, B_i, A_i,M_i )\}_{i=1}^n$ with $X_i$ being observed if $M_i=0$ and missing otherwise. Denote $W_i = (Y_i(1),Y_i(0),X_i, B_i, M_i(1), M_i(0))^\top$, $A^{(n)} = (A_1,\ldots,A_n)$, $B^{(n)} = (B_1,\ldots,B_n)$, and $W^{(n)} = (W_1,\ldots,W_n)$. Let $D_{n[k]} = \sum_{i=1}^n (A_i-\pi)I_{i \in [k]},$ $k = 1,\ldots,K$, measure the covariate imbalance \citep{Bugni:2018},  where $\pi \in (0,1)$ is the target probability of assigning units to the treatment group and $I$ is the indicator function. Denote $\mathcal{R}_2  =\{V: \max_{k = 1, \ldots, K} \var (V \vert B = k) > 0\}$ the set of random variables with at least one non-zero stratum-specific variance. We make the following assumptions on the data-generating process and treatment assignment mechanism and missingness.


\begin{assumption}
\label{as1}
$\{W_i\}_{i=1}^{n}$ are independent and identically distributed (i.i.d.), $E\{X_{ij}^2\} < \infty$, $E\{ Y_i^2(a) \} < \infty$, and $Y_i(a) \in \mathcal{R}_2$ for $i=1,\ldots,n$, $j=1,\ldots,p$, and $a=0,1$.

\end{assumption}

\begin{assumption}
\label{as2}

$W^{(n)} \perp \!\!\! \perp A^{(n)} \vert B^{(n)}$.

\end{assumption}

\begin{assumption}
\label{as4}

$D_{n[k]} = O_p (\sqrt{n}), ~ k=1,\ldots,K$.

\end{assumption}

\begin{assumption}
\label{as5}

$M_i(1) = M_i(0) =  M_i$.

\end{assumption}

Assumptions \ref{as1}--\ref{as4} are parallel to Assumptions 1--2 and 4 in \cite{Ma:2020}, with the only difference being that the potential missing indicators $M_i(1)$ and $M_i(0)$ are included in $W_i$. 
Assumption~\ref{as2} requires that conditional on the stratum indicators, the treatment assignments are conditionally independent of the potential outcomes, additional covariates $X_i$, and the potential missing indicators. Assumption~\ref{as4} does not necessitate (asymptotic) independence between the treatment assignments across different strata. This allows for the consideration of covariate-adaptive randomization methods with intricate dependence structures across strata, such as minimization \citep{Pocock:1975} and the designs proposed by \cite{Hu2012}. It's worth highlighting that Assumption~\ref{as4} is quite general and is met by most encountered covariate-adaptive randomization methods in practice. 
Assumption \ref{as5} holds automatically if baseline covariates are collected before the treatment assignment, such that they are not affected by treatment. Within a finite-population framework, Assumption \ref{as5} is indispensable for deriving design-based asymptotic properties of ATE estimators obtained by commonly used missingness processing and regression adjustment methods \citep{Zhao:2022}. 
This assumption is also required for model-free analysis of covariate-adaptive randomized experiments under the super-population framework.
 

{\bf Notation.} Let random variable $r_i(a)$ ($i=1,\ldots,n$, $a=0,1$) be a transformed outcome, such as potential outcome $Y_i(a)$, covariates $X_i$ ($X_i(1) = X_i(0) = X_i$ as baseline covariates are not affected by treatment), or their linear transformations. Let $r_i = A_i r_i(1) + (1 - A_i) r_i(0)$. Denote $\bar{r}_1 = ({1}/{n_1})\sum_{i=1}^n A_{i}r_i$ and $\bar{r}_0 = ({1}/{n_0})\sum_{i=1}^n (1-A_{i})r_i$ as the sample means of $r_i(a)$ in the treatment and control groups, respectively, and the corresponding stratum-specific sample means are denoted by $\bar{r}_{[k]1} = ({1}/{n_{[k]1}}) \sum_{i \in [k]} A_{i}r_i$ and $\bar{r}_{[k]0} = ({1}/{n_{[k]0}}) \sum_{i \in [k]}(1-A_{i})r_i$. Let $\bar{r} = (1/n) \sum_{i=1}^{n} r_i$ and $\bar{r}_{[k]} = (1/n_{[k]}) \sum_{i \in [k]} r_i$. For a random variable or vector $V_i$, let $\tilde{V}_i = V_i - E(V_i \mid B_i)$.
In our theoretical results, the asymptotic variance of ATE estimators depends on the following two quantities:
$$
\varsigma_{r}^2(\pi) = \pi^{-1} \var\big[ r_i(1) - E\{ r_i(1) \mid B_i \} \big] +  (1 - \pi)^{-1} \var\big[ r_i(0) - E\{ r_i(0) \mid B_i \} \big],
$$
$$
\varsigma_{Hr}^2 =E \big([ E\{ r_i(1) \mid B_i \} - E\{ r_i(1) \} ] - [ E\{ r_i(0) \mid B_i \} - E\{ r_i(0) \} ] \big)^2.
$$
They can be estimated by the corresponding sample version quantities:
\begin{eqnarray*}
\hat{\varsigma}_{r}^2(\pi) &=& \frac{1}{\pi} \sum_{k=1}^K \Bigl\{\frac{p_{n[k]}}{n_{[k]1}} \sum_{i \in [k]}A_i{(r_i-\bar{r}_{[k]1})}^2\Bigr\} +  \frac{1}{1- \pi} 
 \sum_{k=1}^K \Bigl\{\frac{p_{n[k]}}{n_{[k]0}} \sum_{i \in [k]}(1-A_i)  {(r_i-\bar{r}_{[k]0})}^2\Bigr\},    
\end{eqnarray*}
\begin{eqnarray*}
\hat{\varsigma}_{Hr}^2 = \sum_{k=1}^K p_{n[k]}{\bigl\{(\bar{r}_{[k]1}-\bar{r}_{1})-(\bar{r}_{[k]0}-\bar{r}_{0})\bigr\}}^2. 
\end{eqnarray*}


\section{Missingness Processing}
\label{sec3}

	There are five commonly used missingness processing methods: complete case analysis (CC), complete covariate analysis (CCOV), single imputation (IMP), missing indicator method (MIM), and missing pattern method (MPM). Specifically,

\begin{itemize}
\item CC is the default missingness processing method for various software packages. It uses observations that do not include missing values to estimate the ATE;

\item CCOV uses all covariates that do not include missing values. Recall that $M_{ij}$ is the missing indicator of whether $X_{ij}$ is missing or not, which is equal to one if $X_{ij}$ is missing and zero otherwise. Denote $\mathscr{J} = \{j:M_{ij}=0, ~ i=1,\ldots,n\}$ and $X_{i,\ccov} = {(X_{ij})}_{j \in \mathscr{J}}$ as covariates without missing values. CCOV uses $X_{i,\ccov}$ instead of $X_i$ in further analysis;

\item IMP imputes missing values using observed data. It interpolates the same value $\hat{c}_j$ for the missing items in the $j$th covariate, where $\hat{c}_j$ is a constant or a random variable, such as the mean of the observed covariate values. Denote $\hat{c} = (\hat{c}_1,\ldots,\hat{c}_p)^\top$, and $X_{i,\imp}(\hat{c}) = (X_{i1,\imp} (\hat{c}_1),\ldots,X_{ip,\imp} (\hat{c}_p))^\top$, where $X_{ij,\imp} (\hat{c}_j)  = (1-M_{ij}) X_{ij} + M_{ij} \hat{c}_j$. IMP uses $X_{i,\imp}(\hat{c})$ instead of $X_i$ in further analysis;

\item MIM uses $X_{i,\imp}(\hat{c})$ and all missing indicators $M_{ij}$ as covariates. Here, perfect colinearity is adjusted, and all-zero vectors are deleted. Denote $M_i = (M_{i1}, \ldots, M_{ip})$. MIM uses $(X_{i,\imp}(\hat{c})^\top, M_i^\top)^\top$ instead of $X_i$ in further analysis;

\item MPM is a generalization of MIM. It first conducts a separate regression within each \emph{missing pattern} to obtain an ATE estimator. Then the final ATE estimator is a weighted average of the ATE estimators within each \emph{missing pattern}, with the weights of their proportions. Here, a \emph{missing pattern} is a collection of units with the same pattern of covariate missingness.

\end{itemize}



The CC method may introduce bias into the regression-adjusted ATE estimator, which will be discussed in Section A in the Supplementary Material, while MPM requires the sample size in each missing pattern to be large enough \citep{Zhao:2022}, which is often unrealistic in practice. Thus, we will mainly focus on the remaining three missingness processing methods in this paper.

\section{Regression adjustment}
\label{sec4}

Let $Y_i \sim \alpha + \tau A_i$ denote the ordinary least squares (OLS) fit of regressing $Y_i$ on $A_i$ with an intercept, and $Y_i \overset{w_i}{\sim} \alpha + A_i \tau $ the weighted least squares (WLS) with weight $w_i$. We use ``$+$" or ``$\sum$" if we add more covariates into the regression. Notably, we use the software-type formula to derive point and variance estimators and will evaluate their asymptotic properties in a model-assisted manner, without assuming linear model assumptions.

Denote $\hat \tau_{\adj}$ and $\hat \tau_{\int}$ as the OLS estimators of $\tau$ in the regressions $Y_i \sim \alpha + A_i \tau + \sum_{k=1}^{K-1}\alpha_k  I_{i \in [k]}$ and $Y_i \sim \alpha + A_i \tau + \sum_{k=1}^{K-1}\alpha_k  I_{i \in [k]} + \sum_{k=1}^{K-1} \nu_{k} A_i (I_{i \in [k]} - p_{n[k]} )$, respectively. \cite{Ma:2020} showed that $\hat \tau_{\int}$ is optimal (achieving the smallest asymptotic variance) among the estimators adjusting only stratification variables. Moreover, $\hat \tau_{\adj}$ has the same asymptotic distribution as $\hat \tau_{\int}$ for the case of equal allocation ($\pi=0.5$). The authors then suggested using $\hat \tau_{\adj}$ for equal allocation and $\hat \tau_{\int}$ for unequal allocations when additional covariates are not available. Thus, we consider $\hat \tau_{\adj}$ and $\hat \tau_{\int}$ as the benchmark estimators for equal and unequal allocations, respectively.

Denote $U_{i,\ccov}(c) = X_{i,\ccov}$, $U_{i,\imp}(c) = X_{i,\imp}(c)$, $U_{i,\mim}(c) = (\{X_{i,\imp}(c)\}^\top, M_i)^\top$. If the imputation value is a random variable $\hat{c}$, the above notations have the same form with $c$ replaced by $\hat{c}$. Let $\Sigma_{PQ} = E\bigl[\{P-E(P)\}\{Q-E(Q)\}^\top\bigr]$ denote the covariance between two random variables or vectors $P$ and $Q$. Throughout the paper, we assume that {$U_{i,\dag}(c)$} has a finite second moment and {$\Sigma_{\tilde U_{\dag}(c) \tilde U_{\dag}(c)}$} is strictly positive-definite. Define $\beta_{U_{\dag}(c)}(a) = \Sigma_{\tilde U_{\dag}(c) \tilde U_{\dag}(c)}^{-1} \Sigma_{\tilde U_{\dag}(c) \tilde Y (a)}$ and $r_{i, \dag}(a) = Y_i(a) - U_{i, \dag}(c)^\top \beta_{U_{\dag}(c)}$, where $\beta_{U_{\dag}(c)} = (1 - \pi) \beta_{U_{\dag}(c)}(1) + \pi \beta_{U_{\dag}(c)}(0)$ and $\tilde U_{i,\dag}(c) = U_{i,\dag}(c) - E\{U_{i,\dag}(c) \mid B_i\}$. Let 
$\sigma_{\adj}^2 = \sigma_{\int}^2 = \varsigma_{Y}^2 (\pi) + \varsigma_{HY}^2$ and $\sigma_{\dag}^2 = \varsigma_{r_{\dag}}^2 (\pi) + \varsigma_{Hr_{\dag}}^2$.


\subsection{Fisher's regression for equal allocation}

When additional covariates are available, \cite{Ma:2020} suggested using Fisher's regression under equal allocation ($\pi = 1/2$). In this subsection, we integrate Fisher's regression with three methods for handling missingness data and determine the most efficient approach. The regression formula is given by
$$
Y_i \sim \alpha + A_i \tau + \sum_{k=1}^{K-1}\alpha_k  I_{i \in [k]} + {U_{i,\dag}(\hat c)^\top} \beta, \quad {\dag} \in \{\ccov,\imp,\mim\}.
$$

Let $\hat{\tau}_{\adj,\dag}$ denote the ATE estimators of $\tau$ under three missingness processing methods, where ${\dag} \in \{\ccov,\imp,\mim\}$. Correspondingly, denote the OLS variance estimators (scaled by $n$) as $\hat{\sigma}^2_{\adj, \dag}$.

\begin{proposition}
\label{th1}
Suppose that Assumptions \ref{as1}--\ref{as5} hold, $ r_{i, \dag}(a) \in \mathcal{R}_2$ for $\dag \in \{\ccov,\imp,\mim\}$ and $a=0,1$, $\hat{c} \xrightarrow {P} c$, and $\pi = 1/2$. Then, we have
$\sqrt{n}(\hat{\tau}_{\adj, \dag}-\tau)\xrightarrow{d} N(0,\sigma_{\dag}^2)
$, $\hat{\sigma}_{\adj,\dag}^{2} \xrightarrow{P} \sigma_{\dag}^2$, and $\sigma_{\mim}^2  \leq \sigma_{\imp}^2  \leq  
 \sigma_{\ccov}^2 \leq \sigma_{\adj}^2$.


\end{proposition}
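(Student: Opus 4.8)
The plan is to prove the three assertions by reducing $\hat\tau_{\adj,\dag}$ to the benchmark strata-only estimator $\hat\tau_{\adj}$ evaluated on the oracle residualized outcome $r_{i,\dag}=Y_i-U_{i,\dag}(c)^\top\beta_{U_\dag(c)}$, and then reading the variance and efficiency claims off the explicit form $\sigma_\dag^2=\varsigma_{r_\dag}^2(1/2)+\varsigma_{Hr_\dag}^2$. Since Fisher's regression constrains the slope on $U_{i,\dag}$ to be common across arms, under $\pi=1/2$ the probability limit of the estimated slope is the symmetric average $\beta_{U_\dag(c)}=\tfrac12\beta_{U_\dag(c)}(1)+\tfrac12\beta_{U_\dag(c)}(0)$, which is exactly the quantity used to define $r_{i,\dag}$; this coincidence is what makes the reduction to the benchmark clean at equal allocation.

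First I would establish that the estimated slope is consistent, $\hat\beta(\hat c)\xrightarrow{P}\beta_{U_\dag(c)}$. Because $U_{i,\imp}(\hat c)$ and $U_{i,\mim}(\hat c)$ are linear in $\hat c$ and $\hat c\xrightarrow{P}c$, this follows from the law of large numbers together with the assumed invertibility of $\Sigma_{\tilde U_\dag(c)\tilde U_\dag(c)}$ and continuity of the projection in $c$. I would then derive the asymptotic linear expansion $\sqrt n(\hat\tau_{\adj,\dag}-\tau)=\sqrt n(\hat\tau_{\adj}^{r_\dag}-\tau)+o_p(1)$, where $\hat\tau_{\adj}^{r_\dag}$ denotes the benchmark estimator computed on the oracle outcomes $r_{i,\dag}$, and finally invoke the distribution of $\hat\tau_{\adj}$ established by \cite{Ma:2020} (and quoted above) with $r_{i,\dag}(a)\in\mathcal R_2$ to conclude $\sqrt n(\hat\tau_{\adj,\dag}-\tau)\xrightarrow{d}N(0,\sigma_\dag^2)$. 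The main obstacle lies in this expansion: I must show that replacing the estimated slope by $\beta_{U_\dag(c)}$ and $\hat c$ by $c$ perturbs the $A_i$-coefficient only at order $o_p(n^{-1/2})$. The slope perturbation is handled by the orthogonality (projection) property that makes the score for $\tau$ uncorrelated with $\tilde U_{i,\dag}$ at the truth, while the imputation perturbation enters only through cross terms of the form $(\hat c-c)\times\{\text{treated-minus-control within-stratum averages of }M_i\}$; the latter averages are $O_p(n^{-1/2})$ by randomization balance (Assumptions~\ref{as2}, \ref{as4}, and \ref{as5}, using that $M_i$ is an unaffected baseline variable), so the product is $o_p(n^{-1/2})$ and consistency of $\hat c$ alone suffices.

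For the variance, I would write the scaled OLS estimator $\hat\sigma^2_{\adj,\dag}$ in terms of within-arm residual sums of squares and show that, at $\pi=1/2$, it equals $\hat\varsigma_{r_\dag}^2(1/2)+\hat\varsigma_{Hr_\dag}^2+o_p(1)$, the sample analogues defined in Section~\ref{sec2} evaluated at the fitted residuals. Replacing $\hat\beta(\hat c)$ and $\hat c$ by their limits (again negligible by the argument above) and applying the law of large numbers gives $\hat\varsigma_{r_\dag}^2(1/2)\xrightarrow{P}\varsigma_{r_\dag}^2(1/2)$ and $\hat\varsigma_{Hr_\dag}^2\xrightarrow{P}\varsigma_{Hr_\dag}^2$, hence $\hat\sigma^2_{\adj,\dag}\xrightarrow{P}\sigma_\dag^2$.

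For the efficiency ordering I would exploit two facts. First, because $U_{i,\dag}$ is a baseline (treatment-free) vector entering $r_{i,\dag}(1)$ and $r_{i,\dag}(0)$ with the \emph{same} slope, the adjustment terms cancel in the between-stratum contrast, so $\varsigma_{Hr_\dag}^2=\varsigma_{HY}^2$ for every $\dag$; thus only $\varsigma_{r_\dag}^2(1/2)$ depends on the method. Second, writing $g_\dag(b)=\var\{\tilde Y_i(1)-\tilde U_{i,\dag}^\top b\}+\var\{\tilde Y_i(0)-\tilde U_{i,\dag}^\top b\}$, a direct calculation shows its minimizer is precisely $\beta_{U_\dag(c)}$, so that $\varsigma_{r_\dag}^2(1/2)=2\min_b g_\dag(b)$, the minimum being a projection residual that depends only on the linear span of $\tilde U_{i,\dag}$. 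It remains to note that these spans are nested: the complete columns of $X_{i,\imp}(c)$ coincide with $X_{i,\ccov}$ (there $M_{ij}=0$), and $U_{i,\mim}$ appends the indicators $M_i$ to $U_{i,\imp}(c)$, giving $\mathrm{span}(\tilde U_{\ccov})\subseteq\mathrm{span}(\tilde U_{\imp})\subseteq\mathrm{span}(\tilde U_{\mim})$, with the benchmark corresponding to $b=0$. Monotonicity of the minimized projection residual over nested spans then yields $\sigma_{\mim}^2\le\sigma_{\imp}^2\le\sigma_{\ccov}^2\le\sigma_{\adj}^2$.
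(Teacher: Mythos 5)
Your proposal is correct, and for the asymptotic normality and the variance-estimator consistency it is essentially the paper's own argument: the paper likewise reduces $\hat{\tau}_{\adj,\dag}$ to the benchmark stratified estimator applied to the oracle residuals, proving first (its Lemma~\ref{lm4}) that $\hat{\beta}_{\adj, U_{\dag}(\hat c)} \xrightarrow{P} \beta_{U_{\dag}(c)}$ at $\pi=1/2$ — the same ``coincidence'' you flag, since the pooled slope's limit for general $\pi$ is $\pi\beta_{U_{\dag}(c)}(1)+(1-\pi)\beta_{U_{\dag}(c)}(0)$, which matches the defining weight $(1-\pi)\beta_{U_{\dag}(c)}(1)+\pi\beta_{U_{\dag}(c)}(0)$ only at equal allocation — and then kills the slope and imputation perturbations exactly as you propose, by pairing $o_p(1)$ coefficient errors with $O_p(n^{-1/2})$ within-stratum treated-minus-control contrasts of $U_{i,\dag}$ and of $M_i$ (its Lemma~\ref{lm2}, via Theorem 4.1 of \cite{Bugni:2018} and Assumption~\ref{as5}), before invoking Theorem 2 of \cite{Ma:2020}; the treatment of $\hat{\sigma}^2_{\adj,\dag}$ (design-matrix element times residual sum of squares, with $\hat c$ and $\hat\beta$ replaced by their limits via Cauchy--Schwarz) also matches. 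One caution: the ``law of large numbers'' you invoke for slope consistency must be the covariate-adaptive version (Lemma B.3 of \cite{Bugni:2018}), since within-arm samples are not i.i.d. under Assumption~\ref{as2}; this is precisely the tool the paper uses. Where you genuinely diverge is the ordering $\sigma_{\mim}^2 \leq \sigma_{\imp}^2 \leq \sigma_{\ccov}^2 \leq \sigma_{\adj}^2$: the paper (in the proof of \Cref{th2}, to which \Cref{th1} defers) computes $\varsigma_{r_{\dag}}^2(\pi) = -\{\pi(1-\pi)\}^{-1}\beta_{U_{\dag}(c)}^\top \Sigma_{\tilde{U}_{\dag}(c)\tilde{U}_{\dag}(c)}\beta_{U_{\dag}(c)}$ plus a term not depending on the adjustment covariates, and compares the quadratic forms for nested covariate vectors by an explicit block-inverse (Schur-complement) calculation, whereas you characterize $\varsigma_{r_{\dag}}^2(1/2) = 2\min_b g_{\dag}(b)$ as a minimized projection objective — your observation that $\beta_{U_{\dag}(c)}$ minimizes $g_{\dag}$ at $\pi=1/2$ is correct — and conclude by monotonicity of minima over the nested spans, with $b=0$ recovering the benchmark $\sigma_{\adj}^2$. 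Your variational argument is cleaner and avoids the matrix algebra; the paper's computation is heavier but delivers the ordering for arbitrary $\pi$ in one pass, as needed for \Cref{th2} and \Cref{th3}. In fact your route extends to general $\pi$ as well: weighting the two arms in $g_{\dag}$ by $\pi^{-1}$ and $(1-\pi)^{-1}$ makes the minimizer exactly $(1-\pi)\beta_{U_{\dag}(c)}(1)+\pi\beta_{U_{\dag}(c)}(0) = \beta_{U_{\dag}(c)}$, so the same nested-span monotonicity would reprove the paper's general-$\pi$ comparison.
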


\Cref{th1} indicates that Fisher's regression adjusted estimators combined with CCOV, IMP, and MIM are all consistent and asymptotically normal under the condition $\hat{c} \xrightarrow {P} c$. It is noteworthy that $\sigma_{\adj}^2$ serves as the asymptotic variance of $\hat{\tau}_{\adj}$ (and $\hat{\tau}_{\int}$) when $\pi=1/2$ \citep{Ma:2020}. Therefore, under equal allocation, the asymptotic variances of $\hat{\tau}_{\adj, \dag}$ are all no greater than that of $\hat{\tau}_{\adj}$, demonstrating the advantageous properties of Fisher's regression with or without imputation methods. Among the missingness processing methods, MIM yields the most efficient estimator, followed by IMP and CCOV. Furthermore, since $U_{i,\ccov}(\hat c) \subseteq U_{i,\imp}(\hat{c}) \subseteq U_{i,\mim}(\hat{c})$, augmenting the regression with additional covariates often enhances, or at least does not hurt, the asymptotic efficiency of the regression-adjusted estimators. Additionally, the OLS variance estimators are consistent and can be utilized to construct asymptotically valid confidence intervals or tests.

\begin{remark}
The consistency and asymptotic normality remain valid even when $\pi \neq 1/2$. However, in such cases, Fisher's regression-adjusted estimators may exhibit a larger asymptotic variance than $\hat \tau_{\adj}$, hence they are not recommended.
\end{remark}

\subsection{Lin's regression for unequal allocation}


For scenarios involving unequal allocation, \cite{Ma:2020} advocated for Lin's regression. In this section, we integrate Lin's regression with three methods for handling missing data and determine the most effective approach. For ${\dag} \in \{\ccov,\imp,\mim\}$, Lin's regression is given by
$$
Y_i \sim \alpha + A_i \tau + \sum_{k=1}^{K-1}\alpha_k  I_{i \in [k]} + \sum_{k=1}^{K-1} \nu_{k} A_i (I_{i \in [k]} - p_{n[k]} )+ {U_{i,\dag}(\hat c)^\top} \beta + A_i {\{U_{i,\dag}(\hat c)-\bar{U}_{\dag}(\hat c)\}^\top} \xi.
$$

Let $\hat{\tau}_{\int,\dag}$ denote the OLS estimators of $\tau$ in Lin's regression. $\hat{\tau}_{\int,\dag}$ can be obtained by separately running regressions in the treatment and control groups:
$$
Y_i \sim \alpha(1) + \sum_{k=1}^{K-1}\alpha_k(1)  I_{i \in [k]} + U_{i,\dag}(\hat c)^\top \beta(1), \text{ over } A_i = 1,
$$
$$
Y_i \sim \alpha(0) + \sum_{k=1}^{K-1}\alpha_k(0)  I_{i \in [k]} + U_{i,\dag}(\hat c)^\top \beta(0), \text{ over } A_i = 0.
$$
Let $\hat \beta_{U_{\dag}(\hat c)}(1)$ and $\hat \beta_{U_{\dag}(\hat c)}(0)$ denote the OLS coefficients of $\beta(1)$ and $\beta(0)$ in the respective regressions. Then,
$$
\hat{\tau}_{\int, \dag} = \sum_{k=1}^K p_{n[k]} \bigg\{\bigl[\bar{Y}_{[k]1} - \{\bar{U}_{[k]1, \dag}(\hat c) - \bar{U}_{[k], \dag}(\hat c)\}^\top \hat \beta_{U_{\dag}(\hat c)}(1)\bigr] - \bigl[\bar{Y}_{[k]0} - \{\bar{U}_{[k]0, \dag}(\hat c) - \bar{U}_{[k], \dag}(\hat c)\}^\top \hat{\beta}_{U_{\dag}(\hat c)}(0)\bigr]\biggr\}.
$$
The OLS variance estimators are no longer consistent for unequal allocation \citep{Ma:2020}, necessitating the use of nonparametric plug-in variance estimators. Define $\hat r_{i,\int,\dag}(a) = Y_i(a) - U_{i,\dag}(\hat c) ^\top \hat \beta_{\int, \dag }$, $a=0,1$, $\hat \beta_{\int, \dag } = (1 - \pi) \hat \beta_{U_{\dag}(\hat c)}(1) + \pi \hat \beta_{U_{\dag}(\hat c)}(0) $, and $\hat r_{i,\int,\dag} = A_i \hat r_{i,\int,\dag}(1) + ( 1 - A_i) \hat r_{i,\int,\dag}(0) $. The plug-in variance estimator is $ \hat \sigma^2_{\int,\dag} = \hat{\varsigma}_{\hat{r}_{\int,\dag}}^2 (\pi) + \hat{\varsigma}_{H{ \hat r_{\int,\dag}}}^2$.

\begin{proposition}
\label{th2}

Suppose that Assumptions \ref{as1}--\ref{as5} hold, $ r_{i, \dag}(a) \in \mathcal{R}_2$ for $\dag \in \{\ccov,\imp,\mim\}$ and $a=0,1$, and $\hat{c} \xrightarrow {P} c$. Then, we have
$\sqrt{n}(\hat{\tau}_{\int, \dag}-\tau)\xrightarrow{d} N(0,\sigma_{\dag}^2)
$, $\hat \sigma^2_{\int,\dag} \xrightarrow{P} \sigma_{\dag}^2$, and $\sigma_{\mim}^2  \leq \sigma_{\imp}^2  \leq  
 \sigma_{\ccov}^2 \leq \sigma_{\int}^2$.
\end{proposition}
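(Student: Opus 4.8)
The plan is to treat $\hat\tau_{\int,\dag}$ as the stratified Lin regression estimator of \cite{Ma:2020} run with the (possibly augmented) covariate $U_{i,\dag}(\hat c)$ in place of the raw covariates, and then to handle the two features specific to this setting: the imputation value is \emph{estimated}, and unequal allocation forces the use of the plug-in rather than the OLS variance estimator. First I would fix $c$ at its limit and reduce to a known result. Assumption~\ref{as5} makes $M_{ij}$, hence $U_{i,\dag}(c)$, unaffected by treatment, so $U_{i,\dag}(c)$ is a legitimate pre-treatment covariate taking the same value under both arms; this is exactly what the Lin-regression central limit theorem of \cite{Ma:2020} requires. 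Applying it under Assumptions~\ref{as1}--\ref{as4}, the maintained conditions that $U_{i,\dag}(c)$ has finite second moment and $\Sigma_{\tilde U_{\dag}(c)\tilde U_{\dag}(c)}$ is positive definite, and $r_{i,\dag}(a)\in\mathcal{R}_2$ (which makes the limit nondegenerate), yields $\sqrt n\{\hat\tau_{\int,\dag}(c)-\tau\}\xrightarrow{d}N(0,\sigma_\dag^2)$. Expanding the centered covariate means $\bar U_{[k]a,\dag}-\bar U_{[k],\dag}$ shows that the arm-specific slopes fitted by Lin's regression recombine into the single pooled coefficient $\beta_{U_{\dag}(c)}=(1-\pi)\beta_{U_{\dag}(c)}(1)+\pi\beta_{U_{\dag}(c)}(0)$, which is why the limiting residual $r_{i,\dag}(a)$ uses a common coefficient across arms and $\sigma_\dag^2=\varsigma_{r_\dag}^2(\pi)+\varsigma_{Hr_\dag}^2$.

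The hard part will be showing that replacing $c$ by $\hat c$ is first-order negligible using only $\hat c\xrightarrow{P}c$, with no rate. I would regard $\hat\tau_{\int,\dag}(\cdot)$ as a smooth function of its imputation argument through sample moments and expand $\sqrt n\{\hat\tau_{\int,\dag}(\hat c)-\hat\tau_{\int,\dag}(c)\}$ by the mean value theorem, so that the error equals $\sqrt n(\hat c-c)^\top$ times a derivative. The key observation is that this derivative is $O_p(n^{-1/2})$: because Lin's estimator depends on $\hat c$ only through the \emph{centered} within-stratum means $\bar U_{[k]a,\dag}(\hat c)-\bar U_{[k],\dag}(\hat c)$, the dependence reduces to differences of within-stratum missingness proportions across treatment arms, which Assumption~\ref{as5} and randomization force to be $O_p(n^{-1/2})$. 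Hence the replacement error is $(\hat c-c)^\top O_p(1)=o_p(1)$, so consistency of $\hat c$ suffices. The same continuity argument, combined with the law of large numbers, gives $\hat\beta_{U_{\dag}(\hat c)}(a)\xrightarrow{P}\beta_{U_{\dag}(c)}(a)$ and therefore $\hat\beta_{\int,\dag}\xrightarrow{P}\beta_{U_{\dag}(c)}$.

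For the variance estimator, I would first note that the OLS variance is inconsistent when $\pi\neq1/2$, which is why the plug-in form is used. Given $\hat\beta_{\int,\dag}\xrightarrow{P}\beta_{U_{\dag}(c)}$ and $\hat c\xrightarrow{P}c$, the fitted residuals $\hat r_{i,\int,\dag}$ converge to $r_{i,\dag}$, and the sample functionals $\hat\varsigma^2_{\hat r_{\int,\dag}}(\pi)$ and $\hat\varsigma^2_{H\hat r_{\int,\dag}}$ converge to $\varsigma^2_{r_\dag}(\pi)$ and $\varsigma^2_{Hr_\dag}$ by the law of large numbers and the continuous mapping theorem, so $\hat\sigma^2_{\int,\dag}\xrightarrow{P}\sigma_\dag^2$.

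Finally, since $\sigma_{\int}^2=\sigma_{\adj}^2$, the ordering $\sigma_{\mim}^2\le\sigma_{\imp}^2\le\sigma_{\ccov}^2\le\sigma_{\int}^2$ is exactly that of \Cref{th1} and follows from the same minimization argument. Because a common across-arm coefficient cancels in the between-stratum term, $\varsigma^2_{Hr_\dag}=\varsigma_{HY}^2$ for every covariate choice, and $\beta_{U_{\dag}(c)}$ is precisely the minimizer of $b\mapsto\varsigma^2_{Y-U_{\dag}(c)^\top b}(\pi)+\varsigma_{HY}^2$; thus $\sigma_\dag^2$ is a minimum over the span of $U_{\dag}(c)$. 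The nesting $U_{\ccov}\subseteq U_{\imp}\subseteq U_{\mim}$ and the null choice $b=0$, which returns $\sigma_{\int}^2$, then give the stated chain of inequalities.
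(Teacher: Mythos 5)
Your proposal is correct, and for the asymptotic normality and the consistency of the plug-in variance estimator it follows essentially the paper's own route: reduce to the stratified Lin regression results of Ma (2020) with $U_{i,\dag}(c)$ as the covariate, recombine the arm-specific slopes into the pooled coefficient $\beta_{U_{\dag}(c)}=(1-\pi)\beta_{U_{\dag}(c)}(1)+\pi\beta_{U_{\dag}(c)}(0)$ through the centered-mean identities, and dispose of the estimated imputation value by pairing an $o_p(1)$ factor (either $\hat c - c$ or the slope estimation error) with the $O_p(n^{-1/2})$ within-stratum arm imbalance of the missingness proportions, which is exactly the content of the paper's Lemma~\ref{lm2} via Theorem 4.1 of Bugni et al.\ and Assumption~\ref{as5}. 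Your mean-value-theorem framing is a repackaging of the paper's exact linear identity $U_{i,\dag}(\hat c)-U_{i,\dag}(c)=M_i\odot(\hat c-c)$; note only that $\hat\tau_{\int,\dag}$ depends on $\hat c$ through the fitted slopes as well as through the centered means, which is harmless because the derivative through $\hat\beta_{U_{\dag}(\hat c)}(a)$ again multiplies centered means of order $O_p(n^{-1/2})$ — the paper makes this explicit by proving $\hat\beta_{U_{\dag}(\hat c)}(a)-\beta_{U_{\dag}(c)}(a)=o_p(1)$ separately. Where you genuinely diverge is the variance ordering. The paper computes $\varsigma_{r_{\dag}}^2(\pi)=-\{\pi(1-\pi)\}^{-1}\beta_{U_{\dag}(c)}^\top\Sigma_{\tilde U_{\dag}(c)\tilde U_{\dag}(c)}\beta_{U_{\dag}(c)}+\mathrm{Const}$ and establishes monotonicity of this quadratic form under covariate nesting by an explicit block-inverse (Schur complement) calculation. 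You instead observe that $\beta_{U_{\dag}(c)}$ is precisely the minimizer of $b\mapsto\varsigma^2_{Y-U_{\dag}(c)^\top b}(\pi)$ — a one-line first-order-condition check, since $\nabla_b$ vanishes at $b^{*}=\pi(1-\pi)\Sigma^{-1}\{\pi^{-1}\Sigma_{\tilde U\tilde Y(1)}+(1-\pi)^{-1}\Sigma_{\tilde U\tilde Y(0)}\}=(1-\pi)\beta_{U_{\dag}(c)}(1)+\pi\beta_{U_{\dag}(c)}(0)$ — while $\varsigma^2_{Hr_{\dag}}=\varsigma^2_{HY}$ is free of $b$; the chain $\sigma_{\mim}^2\leq\sigma_{\imp}^2\leq\sigma_{\ccov}^2\leq\sigma_{\int}^2$ then follows because enlarging the covariate vector enlarges the feasible set of coefficients (pad with zeros), and $b=0$ recovers $\sigma_{\int}^2$. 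Your projection argument is more elementary and avoids matrix-inversion formulas; what the paper's computation buys in exchange is the explicit expression for the efficiency gap, $\bigl(F^\top V_{\tilde U_{(1)}\tilde Y}-V_{\tilde U_{(3)}\tilde Y}\bigr)^\top D\bigl(F^\top V_{\tilde U_{(1)}\tilde Y}-V_{\tilde U_{(3)}\tilde Y}\bigr)$, which identifies exactly when each inequality is strict. Both arguments are valid.
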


\Cref{th2} asserts that Lin's regression adjusted estimators $\hat{\tau}_{\int, \dag}$, utilizing three missingness processing methods, are consistent and asymptotically normal. Their asymptotic variances are no greater than that of the benchmark method $\hat{\tau}_{\int}$. Moreover, Lin's regression integrated with MIM attains the smallest asymptotic variance, making it optimal among the considered estimators. Furthermore, the nonparametric variance estimators are consistent. Importantly, these conclusions hold true regardless of whether allocations are equal or unequal.

\begin{remark}
    In finite samples, the variance estimator $\hat{\varsigma}_{\hat{r}_{\int,\dag}}^2 (\pi) + \hat{\varsigma}_{H{\hat{r}_{\int,\dag}}}^2$ may underestimate the asymptotic variance when the dimension of $U_{i,\dag}(\hat c)$ is relatively large compared to the sample size. This limitation can be partially addressed by adjusting the degrees of freedom. Specifically, we adjust $\hat{\varsigma}_{\hat{r}_{\int,\dag}}^2 (\pi)$ by replacing $n_{[k]a},~a=0, 1$, by $n_{[k]a} - p_{[k]a}$, where $p_{[k]a}$ is the dimension of regressor within treatment $a$ and stratum $[k]$. 
\end{remark}

\begin{remark}
    In scenarios with unequal allocation, Lin's regression-adjusted estimators $\hat{\tau}_{\int, \dag}$ typically exhibit greater efficiency compared to the corresponding Fisher's regression-adjusted estimators $\hat{\tau}_{\adj, \dag}$. While under equal allocation, these two regression-adjusted estimators are asymptotically equivalent.
\end{remark}

\subsection{ToM regression for unequal allocation}

Despite Lin's regression generally being more efficient than Fisher's regression in the asymptotic sense, it involves doubling the degrees of freedom used for covariate coefficients and can lead to issues such as extreme calibration weights, large leverage scores, and unstable sample influence curves \citep{Lu:2022}. These characteristics make Lin's regression sensitive to small changes in the sample, increasing dimensions of covariate vectors, and unbalanced designs (i.e., treated proportion $\neq$ 0.5). This problem becomes more serious when using IMP and MIM to address the issue of covariate missingness, as it increases the dimensionality of the covariates. \cite{Lu:2022} investigated a more robust regression method named \textbf{t}yranny-\textbf{o}f-the-\textbf{m}inority (ToM), which assigns greater weights to units in the minority group. ToM regression conserves half of the degrees of freedom and is asymptotically equivalent to Lin's regression in completely randomized experiments without missing covariates. In this section, we investigate the asymptotic properties of ToM regression with three missingness processing methods under covariate-adaptive randomization.

ToM regression is a weighted regression defined as follows:
$$
Y_i \overset{w_i}{\sim} \alpha + A_i \tau + \sum_{k=1}^{K-1}\alpha_k  (I_{i \in [k]} - p_{n[k]} ) + \sum_{k=1}^{K-1} \nu_{k} A_i (I_{i \in [k]} - p_{n[k]} )+ U_{i,\dag}(\hat c) ^\top \beta, \quad \dag \in \{\ccov,\imp,\mim\},
$$
where the weights are given by $w_i = A_i / \pi^{2} + (1-A_i)/(1-\pi)^{2}$. 
When $\pi = 0.5$, ToM regression reduces to Fisher's regression. Let $\hat{\tau}_{\tom,\dag}$ and $\hat{\beta}_{\tom,U_{\dag}(\hat c)}$, $\dag \in \{\ccov,\imp,\mim\}$, denote the WLS estimators of $\tau$ and $\beta$ in ToM regression, respectively. Define $\hat r_{i,\tom,\dag}(a) = Y_i(a) - U_{i,\dag}(\hat c) ^\top \hat{\beta}_{\tom,U_{\dag}(\hat c)}$ for $a=0,1$, and $\hat r_{i,\tom,\dag} = A_i \hat r_{i,\tom,\dag}(1) + ( 1 - A_i) \hat r_{i,\tom,\dag}(0) $. The asymptotic variance can be consistently estimated by $ \hat \sigma^2_{\tom,\dag} = \hat{\varsigma}_{\hat{r}_{\tom,\dag}}^2 (\pi) + \hat{\varsigma}_{H{ \hat r_{\tom,\dag}}}^2$.

\begin{theorem}
\label{th3}
Suppose that Assumptions \ref{as1}--\ref{as5} hold, $ r_{i, \dag}(a) \in \mathcal{R}_2$ for $\dag \in \{\ccov,\imp,\mim\}$ and $a=0,1$, and $\hat{c} \xrightarrow {P} c$. Then, we have
$\sqrt{n}(\hat{\tau}_{\tom, \dag}-\tau)\xrightarrow{d} N(0,\sigma_{\dag}^2)
$, $\hat \sigma^2_{\tom,\dag} \xrightarrow{P} \sigma_{\dag}^2$, and $\sigma_{\mim}^2  \leq \sigma_{\imp}^2  \leq  
 \sigma_{\ccov}^2 \leq \sigma_{\int}^2$.
\end{theorem}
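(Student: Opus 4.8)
The plan is to reduce \Cref{th3} to \Cref{th2} by proving that the ToM estimator $\hat\tau_{\tom,\dag}$ is asymptotically equivalent to Lin's estimator $\hat\tau_{\int,\dag}$. Once this equivalence is in place, the limiting law $N(0,\sigma_\dag^2)$, the consistency of $\hat\sigma^2_{\tom,\dag}$, and the ordering $\sigma_{\mim}^2 \le \sigma_{\imp}^2 \le \sigma_{\ccov}^2 \le \sigma_{\int}^2$ all transfer immediately from \Cref{th2}, since that ordering is stated there verbatim and depends only on the common asymptotic variances $\sigma_\dag^2$ built from the limiting residuals $r_{i,\dag}(a) = Y_i(a) - U_{i,\dag}(c)^\top\beta_{U_\dag(c)}$.

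First I would derive a closed form for the WLS fit. Adapting Proposition~4 of \cite{Lu:2022} to the stratified covariate-adaptive setting, and accounting for the weights $w_i = A_i/\pi^2 + (1-A_i)/(1-\pi)^2$, one obtains $\hat\tau_{\tom,\dag} = \hat\tau_* - \hat\beta_{\tom,U_\dag(\hat c)}^\top\hat\tau_{*x}$, where $\hat\tau_* = \sum_k p_{n[k]}(\bar Y_{[k]1} - \bar Y_{[k]0})$, $\hat\tau_{*x} = \sum_k p_{n[k]}\{\bar U_{[k]1,\dag}(\hat c) - \bar U_{[k]0,\dag}(\hat c)\}$, and $\hat\beta_{\tom,U_\dag(\hat c)}$ is a single pooled slope whose stratum-level cross-products enter with factors $\pi_{n[k]}^{-1}$ and $(1-\pi_{n[k]})^{-1}$. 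The decisive step is the probability limit of $\hat\beta_{\tom,U_\dag(\hat c)}$. Using Assumptions~\ref{as1}--\ref{as5}, $\hat c \xrightarrow{P} c$, and the within-stratum convergence of the sample second moments, the denominator converges to $\{\pi^{-1}+(1-\pi)^{-1}\}\,\Sigma_{\tilde U_\dag(c)\tilde U_\dag(c)}$ --- here the within-stratum covariance of $U_{i,\dag}(c)$ is the same in both arms because the covariates are unaffected by treatment and assignment is conditionally independent within strata --- while the numerator converges to $\pi^{-1}\Sigma_{\tilde U_\dag(c)\tilde Y(1)} + (1-\pi)^{-1}\Sigma_{\tilde U_\dag(c)\tilde Y(0)}$. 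Since $\pi^{-1}+(1-\pi)^{-1} = \{\pi(1-\pi)\}^{-1}$, the two factors recombine to give
\[
\hat\beta_{\tom,U_\dag(\hat c)} \xrightarrow{P} (1-\pi)\beta_{U_\dag(c)}(1) + \pi\beta_{U_\dag(c)}(0) = \beta_{U_\dag(c)},
\]
which is precisely the limiting slope in Lin's residuals.

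With this limit established I would linearize. Because each $\bar U_{[k]a,\dag}(\hat c)$ converges to the common stratum mean $E\{U_{i,\dag}(c)\mid B_i=k\}$, the stratified covariate-mean difference satisfies $\hat\tau_{*x} = O_p(n^{-1/2})$; hence the substitution error is $(\hat\beta_{\tom,U_\dag(\hat c)} - \beta_{U_\dag(c)})^\top\hat\tau_{*x} = o_p(1)\cdot O_p(n^{-1/2}) = o_p(n^{-1/2})$, so $\hat\tau_{\tom,\dag} = \hat\tau_* - \beta_{U_\dag(c)}^\top\hat\tau_{*x} + o_p(n^{-1/2})$. This is the same asymptotically linear expansion that $\hat\tau_{\int,\dag}$ admits in \Cref{th2}, giving $\sqrt{n}(\hat\tau_{\tom,\dag} - \hat\tau_{\int,\dag}) = o_p(1)$ and the first conclusion by Slutsky's theorem. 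For the variance estimator, the estimated residuals $\hat r_{i,\tom,\dag}(a) = Y_i(a) - U_{i,\dag}(\hat c)^\top\hat\beta_{\tom,U_\dag(\hat c)}$ and $\hat r_{i,\int,\dag}(a)$ are built from slopes sharing the limit $\beta_{U_\dag(c)}$ and from $\hat c \xrightarrow{P} c$, so $\hat\varsigma^2_{\hat r_{\tom,\dag}}(\pi)$ and $\hat\varsigma^2_{H\hat r_{\tom,\dag}}$ inherit the limits $\varsigma^2_{r_\dag}(\pi)$ and $\varsigma^2_{Hr_\dag}$ already obtained for Lin's residuals, yielding $\hat\sigma^2_{\tom,\dag}\xrightarrow{P}\sigma_\dag^2$.

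The main obstacle is the coefficient-limit computation: one must verify that the ToM weights exactly rebalance the treated and control sums of squares so that the single pooled slope converges to the convex combination $(1-\pi)\beta_{U_\dag(c)}(1) + \pi\beta_{U_\dag(c)}(0)$ rather than to some other quantity, and this must hold under the general design permitted by Assumption~\ref{as4}, which allows cross-stratum dependence of the assignments, and with a random imputation constant $\hat c$. Controlling the replacement of $\hat c$ by $c$ in $U_{i,\imp}(\hat c)$ and $U_{i,\mim}(\hat c)$ requires the finite-second-moment condition together with the boundedness of the indicators $M_{ij}$, so that $U_{i,\dag}(\hat c) - U_{i,\dag}(c) = O_p(\|\hat c - c\|) = o_p(1)$ strongly enough to preserve every limit above.
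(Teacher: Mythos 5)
Your proposal is correct and follows essentially the same route as the paper's proof: the paper likewise invokes Proposition~4 of \cite{Lu:2022} to write $\hat{\tau}_{\tom,\dag}=\hat{\tau}_{\int}-\hat{\beta}_{\tom,U_{\dag}(\hat c)}^{\top}\hat{\tau}_{\int,U_{\dag}(\hat c)}$, proves the stratum-level second-moment convergences so that $\hat{\beta}_{\tom,U_{\dag}(\hat c)}\xrightarrow{P}\beta_{U_{\dag}(c)}=(1-\pi)\beta_{U_{\dag}(c)}(1)+\pi\beta_{U_{\dag}(c)}(0)$, and then uses $\bar{U}_{[k]1,\dag}(\hat c)-\bar{U}_{[k]0,\dag}(\hat c)=O_p(n^{-1/2})$ to obtain $\hat{\tau}_{\tom,\dag}-\hat{\tau}_{\int,\dag}=o_p(n^{-1/2})$, after which the limiting distribution, the consistency of the variance estimator (replacing $\hat{\beta}_{U_{\dag}(\hat c)}$ by $\hat{\beta}_{\tom,U_{\dag}(\hat c)}$), and the variance ordering all transfer from \Cref{th2} exactly as you argue. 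Your handling of the random imputation value $\hat c$ (via boundedness of $M_{ij}$ and finite second moments) matches the role of Lemmas~\ref{lm1}--\ref{lm3} in the paper.
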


Theorem~\ref{th3} suggests that ToM regression-adjusted estimators share the same asymptotic distributions as Lin's regression-adjusted estimators, thereby extending the conclusions applicable to Lin's regression to ToM regression. Importantly, ToM regression demonstrates superior finite-sample performance compared to Lin's regression, particularly when the dimension of $U_{i,\dag}(\hat c)$ is relatively large compared to the sample size; see our simulations for detailed insights. Therefore, we recommend Fisher's regression for cases of equal allocation ($\pi=1/2$) and ToM regression for cases of unequal allocation ($\pi \neq 1/2$). These results extend the ToM regression theory from a finite-population framework to a super-population framework, and from stratified randomization to more complex, covariate-adaptive randomization methods.

\begin{remark}
    Similar to Lin's regression, we can adjust the degrees of freedom to achieve improved finite-sample performance of the variance estimator.
\end{remark}

\subsection{Stratum-specific regression}

Above, we discussed several regression-adjusted estimators that employ stratum-common adjusted coefficients, where the estimated coefficients of covariates are the same across strata. However, as highlighted by \cite{LiuH:2022}, utilizing stratum-specific regression adjustment can further enhance asymptotic efficiency under covariate-adaptive randomization. Therefore, in this section, we investigate the asymptotic properties of stratum-specific Fisher's, Lin's, and ToM regression-adjusted estimators in the presence of missing covariates. They are defined as follows:

{\bf Stratum-specific Fisher's regression}. We perform Fisher's regression within each stratum: for $k=1,\ldots,K$ and $\dag \in \{\ccov,\imp,\mim\}$,
$$
Y_i \sim \alpha_{[k]} + A_i \tau_{[k]} + U_{i,\dag}(\hat c) ^\top \beta_{[k]}, \quad \textnormal{over } i \in [k].
$$
Let $\hat \tau_{\adj [k],\dag}$ and $\hat \sigma^2_{\adj [k], \dag}$ denote the OLS point estimator and variance estimator  of $\tau_{[k]}$, respectively. The stratum-specific Fisher's regression-adjusted ATE estimator is defined as $\hat \tau_{\adj,\dag, \ss} = \sum_{k=1}^{K} p_{n[k]} \hat \tau_{\adj [k],\dag}$, with an asymptotic variance estimator $ \hat \sigma^2_{\adj,\dag,\ss} =  \sum_{k=1}^{K} p_{n[k]} \hat \sigma^2_{\adj [k], \dag}$.

{\bf Stratum-specific Lin's regression}. We perform Lin's regression within each stratum: for $k=1,\ldots,K$ and $\dag \in \{\ccov,\imp,\mim\}$,
$$
Y_i \sim \alpha_{[k]}(1) + U_{i,\dag}(\hat c) ^\top \beta_{[k]}(1), \text{ over } A_i = 1, \ i \in [k],
$$
$$
Y_i \sim \alpha_{[k]}(0) + U_{i,\dag}(\hat c) ^\top \beta_{[k]}(0), \text{ over } A_i = 0, \ i \in [k].
$$
    Let $\hat \beta_{U_{\dag}(\hat c) [k]}(1)$ and $\hat \beta_{U_{\dag}(\hat c) [k]}(0)$ denote the OLS estimators of $\beta_{[k]}(1)$ and $\beta_{[k]}(0)$, respectively. The stratum-specific Lin's regression-adjusted ATE estimator is defined as

\begin{align*}
\hat{\tau}_{\int,\dag,\ss} = \sum_{k=1}^K p_{n[k]} & \bigg\{\bigl[\bar{Y}_{[k]1} - \{\bar{U}_{[k]1, \dag}(\hat c) - \bar{U}_{[k], \dag}(\hat c)\}^\top \hat \beta_{U_{\dag}(\hat c) [k]}(1)\bigr] \\
&  - \bigl[\bar{Y}_{[k]0} - \{\bar{U}_{[k]0, \dag}(\hat c) - \bar{U}_{[k], \dag}(\hat c)\}^\top \hat \beta_{U_{\dag}(\hat c) [k]}(0)\bigr]\biggr\}.
\end{align*}

Define $\hat r_{i,\int,\dag, \ss}(a) = Y_i(a) - U_{i,\dag}(\hat c)^\top \hat \beta_{U_{\dag}(\hat c) [B_i]}$ for $a=0,1$, where $\hat \beta_{U_{\dag}(\hat c) [B_i]} = (1 - \pi) \hat \beta_{U_{\dag}(\hat c) [B_i]}(1) + \pi \hat \beta_{U_{\dag}(\hat c) [B_i]}(0) $, and $\hat r_{i,\int,\dag, \ss} = A_i \hat r_{i,\int,\dag, \ss}(1) + ( 1 - A_i) \hat r_{i,\int,\dag, \ss}(0) $. The asymptotic variance estimator is defined as $\hat{\varsigma}_{\hat{r}_{\int,\dag, \ss}}^2 (\pi) + \hat{\varsigma}_{H{ \hat r_{\int,\dag, \ss}}}^2 $.

{\bf Stratum-specific ToM regression}. We perform ToM regression within each stratum: for $k=1,\ldots,K$ and $\dag \in \{\ccov,\imp,\mim\}$,
$$
Y_i \overset{w_i}{\sim} \alpha_{[k]} + A_i \tau_{[k]} + U_{i,\dag}(\hat c)^\top \beta_{[k]},  \text{ over } \ i \in [k],
$$
where the weights are given by $w_i = A_i / \pi^{2} + (1-A_i)/(1-\pi)^{2}$. Let $\hat{\tau}_{\tom [k], \dag} $ and $\hat{\beta}_{\tom, U_{\dag}(\hat c) [k]}$ denote the WLS estimators of $\tau_{[k]}$ and $\beta_{[k]}$, respectively. The stratum-specific ToM regression-adjusted ATE estimator is 
$
\hat{\tau}_{\tom,\dag,\ss} = \sum_{k=1}^{K} p_{n[k]} \hat{\tau}_{\tom [k], \dag}.
$
Define the residual $\hat r_{i,\tom,\dag, \ss}(a) = Y_i(a) - U_{i,\dag}(\hat c)^\top \hat{\beta}_{\tom, U_{\dag}(\hat c) [B_i]} $, $a=0,1$, and $\hat r_{i,\tom,\dag, \ss} = A_i \hat r_{i,\tom,\dag, \ss}(1) + ( 1 - A_i) \hat r_{i,\tom,\dag, \ss}(0) $. Define $ \hat \sigma^2_{\tom,\dag, \ss} = \hat{\varsigma}_{\hat{r}_{\tom,\dag, \ss}}^2 (\pi) + \hat{\varsigma}_{H{ \hat r_{\tom,\dag, \ss}}}^2$.

To examine the asymptotic properties of stratum-specific regression-adjusted estimators, we introduce several notations. Let $\Sigma_{PQ[k]} = E\bigl[\{P-E(P\vert B_i = k)\}\{Q-E(Q\vert B_i = k)\}^\top\vert B_i = k\bigr]$ denote the stratum-specific covariance between two random variables or vectors $P$ and $Q$ in stratum $k$. Assume that $ \Sigma_{U_{\dag}(c) U_{\dag}(c)[k]}$ is strictly positive-definite. Define the population regression coefficient $\beta_{U_{\dag}(c)[k]}(a) = \Sigma_{U_{\dag}(c) U_{\dag}(c)[k]}^{-1} \Sigma_{U_{\dag}(c) Y (a)[k]}$ and the residuals $r_{i, \dag, \ss}(a) = Y_i(a) - U_{i, \dag}(c)^\top \beta_{U_{\dag}(c)[k]}$, $i \in [k]$, where $\beta_{U_{\dag}(c)[k]} = (1 - \pi) \beta_{U_{\dag}(c)[k]}(1) + \pi \beta_{U_{\dag}(c)[k]}(0)$. Finally, define $\sigma_{\dag,\ss}^2 = \varsigma_{r_{\dag,\ss}}^2 (\pi) + \varsigma_{Hr_{\dag,\ss}}^2$.

\begin{theorem}
\label{th4}

Suppose that Assumptions \ref{as1}--\ref{as5} hold, $ r_{i, \dag, \ss}(a) \in \mathcal{R}_2$ for $\dag \in \{\ccov,\imp,\mim\}$ and $a=0,1$, and $\hat{c} \xrightarrow {P} c$. For $\dag \in \{\ccov,\imp,\mim\}$, we have
\begin{itemize}
\item[(i)] If $\pi = 1/2$, then $\sqrt{n}(\hat{\tau}_{\adj, \dag, \ss}-\tau)\xrightarrow{d} N(0,\sigma_{\dag,\ss}^2)
$ and $\hat \sigma^2_{\adj,\dag, \ss} \xrightarrow{P} \sigma_{\dag,\ss}^2$;
\item[(ii)] $\sqrt{n}(\hat{\tau}_{\int, \dag, \ss}-\tau)\xrightarrow{d} N(0,\sigma_{\dag,\ss}^2)
$ and $\hat \sigma^2_{\int,\dag, \ss} \xrightarrow{P} \sigma_{\dag,\ss}^2$;
\item[(iii)] $\sqrt{n}(\hat{\tau}_{\tom, \dag, \ss}-\tau)\xrightarrow{d} N(0,\sigma_{\dag,\ss}^2)
$ and $\hat \sigma^2_{\tom,\dag, \ss} \xrightarrow{P} \sigma_{\dag,\ss}^2$;
\item[(iv)] $\sigma_{\mim, \ss}^2  \leq \sigma_{\imp, \ss}^2  \leq  
 \sigma_{\ccov, \ss}^2 \leq \sigma_{\int}^2$ and $\sigma_{\dag, \ss}^2  \leq \sigma_{\dag}^2$.
\end{itemize}
\end{theorem}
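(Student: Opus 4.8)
The plan is to reduce the analysis to a within-stratum problem and then borrow the cross-stratum aggregation already carried out for the stratum-common estimators in \Cref{th1}, \Cref{th2}, and \Cref{th3}. The starting observation is that each stratum-specific estimator $\hat\tau_{\cdot,\dag,\ss}=\sum_{k=1}^K p_{n[k]}\hat\tau_{\cdot[k],\dag}$ has the same algebraic form as its stratum-common counterpart, the only change being that the pooled coefficient is replaced by the within-stratum coefficient $\hat\beta_{U_\dag(\hat c)[k]}(a)$. First I would establish $\hat\beta_{U_\dag(\hat c)[k]}(a)\xrightarrow{P}\beta_{U_\dag(c)[k]}(a)$ using $\hat c\xrightarrow{P}c$, the finite-second-moment assumption, and positive-definiteness of $\Sigma_{U_\dag(c)U_\dag(c)[k]}$. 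Within stratum $k$ the relevant probability limit is the combined coefficient $\beta_{U_\dag(c)[k]}=(1-\pi)\beta_{U_\dag(c)[k]}(1)+\pi\beta_{U_\dag(c)[k]}(0)$ for Lin's and ToM's regressions, and it coincides with the single pooled coefficient of Fisher's regression only when $\pi=1/2$; this is precisely why part (i) is restricted to $\pi=1/2$ whereas parts (ii)--(iii) hold for all $\pi$.

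For parts (i)--(iii) I would linearize each estimator around the limiting residual $r_{i,\dag,\ss}(a)=Y_i(a)-U_{i,\dag}(c)^\top\beta_{U_\dag(c)[B_i]}$. Replacing $\hat\beta_{[k]}$ by its limit and $\hat c$ by $c$ costs an error that is a product of an $O_p(1)$ factor and an $o_p(1)$ factor and is therefore negligible, because the accompanying empirical score is a within-stratum centered average with conditional mean zero --- the standard model-free robustness device. After this reduction $\sqrt n(\hat\tau_{\cdot,\dag,\ss}-\tau)$ has the same structure as the stratum-common estimator but with transformed outcome $r_{i,\dag,\ss}$, so the covariate-adaptive central limit theorem of \cite{Ma:2020} applies and yields $N(0,\sigma_{\dag,\ss}^2)$ with $\sigma_{\dag,\ss}^2=\varsigma_{r_{\dag,\ss}}^2(\pi)+\varsigma_{Hr_{\dag,\ss}}^2$. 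Design-robustness comes from the fact that the treatment-imbalance $D_{n[k]}$ couples to the centered residuals and its population coefficient reduces to $E\{\tilde r_{i,\dag,\ss}(a)\mid B_i=k\}=0$, so Assumption \ref{as4} alone suffices. Consistency of the variance estimators follows by substituting $\hat r\to r$, $p_{n[k]}\to p_{[k]}$, $\pi_{n[k]}\to\pi$ and applying a law of large numbers; the OLS estimator is consistent for Fisher's regression at $\pi=1/2$ as in \Cref{th1}, and the nonparametric plug-in estimators are consistent for Lin's and ToM's as in \Cref{th2} and \Cref{th3}.

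The main obstacle is the cross-stratum aggregation under the dependence that covariate-adaptive randomization induces across strata: one must show that the within-arm fluctuations, the between-stratum heterogeneity, and the vanishing imbalance term combine into exactly the two advertised variance components, with no leakage of the particular design into the limit. The within-stratum centering $\tilde r_{i,\dag,\ss}(a)=r_{i,\dag,\ss}(a)-E\{r_{i,\dag,\ss}(a)\mid B_i\}$ does the essential work here, and extra care is needed because the coefficients are estimated separately in each stratum, so the negligibility of the estimation error must be controlled uniformly over the (finitely many) strata $k=1,\ldots,K$.

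For part (iv) I would first record a simplifying identity: because $r_{i,\dag,\ss}(a)$ uses the same coefficient for $a=0$ and $a=1$ and the covariates are unaffected by treatment, the adjustment cancels in $E\{r(1)\mid B\}-E\{r(0)\mid B\}$, so $\varsigma_{Hr_{\dag,\ss}}^2=\varsigma_{HY}^2$ (and likewise $\varsigma_{Hr_\dag}^2=\varsigma_{HY}^2$). Every comparison thus reduces to the within-arm term, and one writes $\varsigma_{r_{\dag,\ss}}^2(\pi)=\sum_k p_{[k]}\,g_k(\beta_{U_\dag(c)[k]})$ with $g_k(\beta)=\pi^{-1}\var\bigl(Y(1)-U_\dag(c)^\top\beta\mid B=k\bigr)+(1-\pi)^{-1}\var\bigl(Y(0)-U_\dag(c)^\top\beta\mid B=k\bigr)$ a convex quadratic minimized at $\beta_{U_\dag(c)[k]}$. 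The chain $\sigma_{\mim,\ss}^2\le\sigma_{\imp,\ss}^2\le\sigma_{\ccov,\ss}^2\le\sigma_{\int}^2$ then follows stratum by stratum from the nesting $U_{i,\ccov}(c)\subseteq U_{i,\imp}(c)\subseteq U_{i,\mim}(c)$, a larger predictor span giving a smaller minimum of $g_k$, with the last inequality obtained by comparing to $\beta=0$. Finally $\sigma_{\dag,\ss}^2\le\sigma_{\dag}^2$ holds because $\sigma_\dag^2$ corresponds to $\sum_k p_{[k]}\,g_k(\beta_{U_\dag(c)})$ for a single common coefficient, which cannot improve on the stratum-wise minimizers --- the precise sense in which stratum-specific adjustment dominates stratum-common adjustment.
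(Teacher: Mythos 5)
Your argument for parts (i)--(iii) is correct and essentially coincides with the paper's proof: both expand each within-stratum estimator as $\bar Y_{[k]1}-\bar Y_{[k]0}-\{\bar U_{[k]1,\dag}(c)-\bar U_{[k]0,\dag}(c)\}^\top\beta_{U_\dag(c)[k]}+o_p(n^{-1/2})$ by running the stratum-common arguments (Propositions~\ref{th1}--\ref{th2}, Theorem~\ref{th3}) within each of the finitely many strata, aggregate with weights $p_{n[k]}$, and invoke Proposition 3 of \cite{Ma:2020} plus Slutsky's theorem; variance-estimator consistency is likewise obtained per stratum and summed, and you correctly identify that the restriction $\pi=1/2$ in (i) comes from the pooled Fisher coefficient matching the combined coefficient $(1-\pi)\beta_{U_\dag(c)[k]}(1)+\pi\beta_{U_\dag(c)[k]}(0)$ only at equal allocation. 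One small caution: ``a product of an $O_p(1)$ factor and an $o_p(1)$ factor'' is not by itself $\sqrt n$-negligible; what the argument actually needs (and what your subsequent clause supplies) is that the within-stratum covariate contrast $\bar U_{[k]1,\dag}(\hat c)-\bar U_{[k]0,\dag}(\hat c)$ is $O_p(n^{-1/2})$ under covariate-adaptive randomization, so that $o_p(1)\cdot O_p(n^{-1/2})=o_p(n^{-1/2})$ --- state it that way.

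For (iv) you take a genuinely different route from the paper. The paper obtains the chain $\sigma_{\mim,\ss}^2\le\sigma_{\imp,\ss}^2\le\sigma_{\ccov,\ss}^2\le\sigma_{\int}^2$ by applying, within each stratum, the Schur-complement computation from the proof of Proposition~\ref{th2} (showing the quadratic form $\beta_{U}^\top\Sigma_{\tilde U\tilde U}\beta_{U}$ grows when covariates are appended), and it proves $\sigma_{\dag,\ss}^2\le\sigma_{\dag}^2$ by citing Theorem 4 of \cite{Ma:2020} stratum by stratum. You instead observe that $\varsigma^2_{Hr}=\varsigma^2_{HY}$ for every adjusted residual (the adjustment cancels because both arms share one coefficient and the covariates are unaffected by treatment), reducing everything to the within-arm objective $g_k(\beta)$, and you use that $g_k$ is a convex quadratic minimized exactly at $\beta_{U_\dag(c)[k]}$ --- a fact that deserves its one line of verification: the first-order condition $\{\pi^{-1}+(1-\pi)^{-1}\}\Sigma_{U_\dag(c)U_\dag(c)[k]}\beta=\pi^{-1}\Sigma_{U_\dag(c)Y(1)[k]}+(1-\pi)^{-1}\Sigma_{U_\dag(c)Y(0)[k]}$ yields $\beta=\Sigma_{U_\dag(c)U_\dag(c)[k]}^{-1}\{(1-\pi)\Sigma_{U_\dag(c)Y(1)[k]}+\pi\Sigma_{U_\dag(c)Y(0)[k]}\}=\beta_{U_\dag(c)[k]}$, as claimed. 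With that in hand, all four inequalities fall out of a single variational principle: nested predictor spans give the ordering among CCOV, IMP, MIM; the choice $\beta=0$ gives the comparison with $\sigma_{\int}^2$; and the pointwise-in-$k$ optimality of $\beta_{U_\dag(c)[k]}$ over the common $\beta_{U_\dag(c)}$ gives $\sigma_{\dag,\ss}^2\le\sigma_{\dag}^2$ with no external citation. Your approach buys unification and self-containedness; the paper's Schur-complement route buys an explicit expression for the size of the variance gap, namely $(F^\top V_{\tilde U_{(1)}\tilde Y}-V_{\tilde U_{(3)}\tilde Y})^\top D(F^\top V_{\tilde U_{(1)}\tilde Y}-V_{\tilde U_{(3)}\tilde Y})$, which is informative about when the inequalities are strict. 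Both are valid proofs of the stated result.
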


\begin{remark}
Similar to scenarios without covariate missing \citep{Ma:2020}, the OLS variance estimators for Lin's regression can be anti-conservative and therefore are not recommended. This conclusion also applies to ToM regression. In cases of unequal allocation, although Fisher's regression-adjusted estimators remain asymptotically normal, their asymptotic variances can be larger than those of the unadjusted estimators.
\end{remark}

Theorem~\ref{th4}(i)--(iii) parallels \Cref{th1}, \Cref{th2}, and \Cref{th3}, demonstrating that all regression-adjusted ATE estimators using three missingness processing methods are consistent and asymptotically normal, with proposed variance estimators showing consistency. Under equal allocation, all three regression types are asymptotically equivalent. Given the simplicity of Fisher's regression and its readily available variance estimator, we recommend its use in this scenario. However, for unequal allocation, only Lin's regression and ToM regression exhibit asymptotic variances that are no greater than those of Fisher's regression and $\hat \tau_{\int}$. Since ToM regression uses approximately half degrees of freedom and offers greater robustness compared to Lin's regression \citep{Lu:2022}, we recommend its use in cases of unequal allocation. 
Theorem~\ref{th4}(iv) indicates that stratum-specific regression-adjusted estimators generally exhibit greater efficiency than their stratum-common counterparts in the asymptotic sense. Furthermore, regression adjustment typically enhances asymptotic efficiency, especially with the inclusion of more additional covariates. Therefore, MIM yields the most efficient estimators. Additionally, regression with MIM remains invariant to the choice of $\hat{c}$, as demonstrated by Proposition~\ref{ps1} below. For these reasons, we recommend using MIM to address covariate missing problems under covariate-adaptive randomization. It should be noted, however, that MIM might underperform compared to IMP in finite samples, particularly when the number of covariates is large relative to the sample size. In such cases, IMP would be the preferable option.

Let $\hat{\tau}_{\diamond,\mim}(\hat{c})$ and  $\hat{\tau}_{\diamond,\mim,\ss}(\hat{c})$ ($\diamond \in \{\adj,\int,\tom \}$) denote the regression-adjusted ATE estimators integrated with MIM, where missing values are imputed using $\hat{c}$.



\begin{proposition}
\label{ps1}

For $\diamond \in \{\adj,\int,\tom \}$, $\hat{\tau}_{\diamond,\mim}(\hat{c})$ and $\hat{\tau}_{\diamond,\mim,\ss}(\hat{c})$ are invariant to the choice of $\hat{c}$, regardless of whether $\hat{c}$ is a constant or a random variable.

\end{proposition}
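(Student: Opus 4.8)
The plan is to prove the invariance purely algebraically, exploiting the single identity that defines the missing indicator method. For any two imputation vectors $\hat c$ and $\hat c'$, writing $\delta_j = \hat c'_j - \hat c_j$, we have
\[
X_{ij,\imp}(\hat c'_j) = (1-M_{ij})X_{ij} + M_{ij}\hat c'_j = X_{ij,\imp}(\hat c_j) + \delta_j M_{ij}.
\]
Hence the imputed-covariate block under $\hat c'$ differs from that under $\hat c$ by a linear combination of the missing-indicator columns $M_i$, which MIM always carries as additional regressors. Since this identity holds for every realization, it suffices to show that replacing $\hat c$ by any fixed vector (say $\mathbf 0$) leaves each estimator unchanged; this simultaneously disposes of the case of a random $\hat c$, since the common value does not depend on the particular choice.

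First I would treat the stratum-common estimators via the Frisch--Waugh--Lovell (FWL) representation of the coefficient on $A_i$. The estimator $\hat{\tau}_{\diamond,\mim}(\hat c)$ is the (weighted, for ToM) least-squares coefficient obtained after residualizing $Y_i$ and $A_i$ on the remaining regressors; it therefore depends on $\hat c$ only through (i) the outcome and treatment columns, which are free of $\hat c$, and (ii) the linear span of the remaining regressors. For Fisher's and ToM regressions the covariate regressors are exactly $U_{i,\mim}(\hat c) = (X_{i,\imp}(\hat c)^\top, M_i^\top)^\top$, and by the displayed identity $\textnormal{span}\{X_{i,\imp}(\hat c'), M_i\} = \textnormal{span}\{X_{i,\imp}(\hat c), M_i\}$, so that span — and hence the projection used by FWL — is invariant, giving invariance of $\hat{\tau}_{\adj,\mim}$ and $\hat{\tau}_{\tom,\mim}$. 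The ToM weights $w_i = A_i/\pi^{2} + (1-A_i)/(1-\pi)^{2}$ do not involve $\hat c$, so the weighted-projection version of FWL applies verbatim.

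The only extra work is Lin's regression, whose design also contains the treatment--covariate interactions $A_i\{U_{i,\mim}(\hat c) - \bar U_{\mim}(\hat c)\}$. Centering propagates the identity, since $X_{i,\imp}(\hat c') - \bar X_{\imp}(\hat c') = \{X_{i,\imp}(\hat c) - \bar X_{\imp}(\hat c)\} + \textnormal{diag}(\delta)\{M_i - \bar M\}$, so the interacted imputed block shifts by $A_i\,\textnormal{diag}(\delta)\{M_i - \bar M\}$, which lies in the span of the interacted indicator columns $A_i\{M_i - \bar M\}$ that MIM already supplies. Thus the span of the full block of remaining regressors (main effects and interactions together) is again invariant to $\hat c$, and FWL yields invariance of $\hat{\tau}_{\int,\mim}$. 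Equivalently, one can exhibit the invertible, unit-triangular matrix $T$ (with off-diagonal entries $\delta_j$) sending the $\hat c$-design to the $\hat c'$-design; because $T$ fixes the intercept, treatment, and stratum columns, $T^{-1}$ has $e_\tau^\top$ as its $A$-row, so $\hat\theta(\hat c') = T^{-1}\hat\theta(\hat c)$ preserves exactly the $A$-coefficient.

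Finally, the stratum-specific estimators reduce to the stratum-common case applied within each $[k]$: the same identity holds with $M_i$ restricted to stratum $k$, so each $\hat{\tau}_{\diamond[k],\mim}(\hat c)$ is invariant, and $\hat{\tau}_{\diamond,\mim,\ss}(\hat c) = \sum_{k=1}^{K} p_{n[k]}\hat{\tau}_{\diamond[k],\mim}(\hat c)$ inherits invariance because the weights $p_{n[k]}$ do not depend on $\hat c$. The main obstacle — and essentially the only nonroutine step — is the interaction block of Lin's regression: I must confirm that the $\hat c$-induced shift is absorbed by the interacted indicators $A_i\{M_i - \bar M\}$, and that any column removed for collinearity or because an indicator is identically zero (a covariate with no missingness in the relevant subsample) is itself free of $\hat c$, so that the span argument remains intact.
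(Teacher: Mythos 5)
Your proof is correct and is essentially the argument the paper relies on: the paper gives no separate proof of this proposition in its supplement, deferring to the finite-population result of \cite{Zhao:2022}, whose proof is exactly your observation that $X_{i,\imp}(\hat c') = X_{i,\imp}(\hat c) + \mathrm{diag}(\delta)M_i$ makes the two designs differ by an invertible unit-triangular reparametrization that fixes the intercept, stratum, and treatment columns, so the coefficient on $A_i$ (and hence each stratum-specific estimator and their $p_{n[k]}$-weighted averages) is unchanged realization by realization. Your extra care with Lin's interaction block (the shift $A_i\,\mathrm{diag}(\delta)\{M_i-\bar M\}$ being absorbed by the interacted indicator columns), the $\hat c$-free ToM weights, and the deleted all-zero or collinear indicator columns addresses the only delicate points, and the sample-level nature of the identity correctly covers random $\hat{c}$ without any limiting argument.
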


Proposition~\ref{ps1} extends the findings of \cite{Zhao:2022} from a finite-population framework to covariate-adaptive randomization within a super-population framework. This proposition guarantees that we can impute missing covariate values using either the observed sample mean or simply a constant of $0$ for the MIM method.


\section{Numerical study}
\label{sec5}

\subsection{Simulated data}
\label{sec5:1}
	In this section, we conduct simulation studies to evaluate the finite-sample properties of the proposed estimators. The outcomes are generated as follows: for $i=1,\ldots,n$,
\[
	Y_i(1) = \mu_1 + 5\xi_i + g_1(X_i) + \sigma_1\epsilon_{1, i}, \quad
	Y_i(0) = \mu_0 + g_0(X_i) + \sigma_0\epsilon_{0, i},
\]
where $(X_i, \epsilon_{1, i}, \epsilon_{0, i})_{i=1}^n$ are i.i.d. and $\epsilon_{1, i}$, $\epsilon_{0, i}$, and $X_i$ are mutually independent. We set $\mu_1 = 2$, $\mu_0 = 1$, and  $n=200$. Three models are used to generate the covariates $X_i$ and $g_a(X_i)$, which are detailed below. The first two covariates are used for stratification, and the first three are fully observed, while the others include missing values. The binary random variable $\xi_i \sim B(1, 0.2)$ represents the propensity for missingness, where $B(n, p)$ denotes the binomial distribution with parameters $n$ and $p$. Specifically, we generate missing indicators as $M_{ij} \sim B(1, 0.5\xi_i + 0.05)$ for $j=4,\ldots,7$, where $M_{ij} = 1$ indicates that $X_{ij}$ is missing. The noise terms are generated as $\epsilon_{a, i} \sim N(0, \sigma_a^2)$ with $\sigma_a$ chosen to achieve signal-to-noise ratios (SNRs) of $\mathrm{SNR}_1 = 3$ in the treatment group and $\mathrm{SNR}_0 = 1$ in the control group.

Model 1: Stratum-specific model:
\[
	g_1(X_i) = \beta_{11} X_{i1} + \beta_{12} X_{i2} + \beta_{13} X_{i3} + \beta_{14} X_{i1}X_{i4} + \beta_{15} X_{i2}X_{i5},
\]
\[
	g_0(X_i) = \beta_{01} X_{i1} + \beta_{02} X_{i2} + \beta_{04} X_{i1}X_{i4} + \beta_{05} X_{i2}X_{i5},
\]
where $X_{i1}$ take values uniformly from $\{-2, -1, 1, 2\}$, $X_{i2}$ take values from $\{1, 2, 3\}$ with probabilities $\{0.2, 0.3, 0.5\}$, respectively, $X_{i3} \sim \mathcal{E}(\xi_i + 1)$ where $\mathcal{E}(\lambda)$ denotes the exponential distribution with rate parameter $\lambda$, $(X_{i4}, X_{i5})^\top \sim N( (1+\xi_i, \xi_i)^\top, (4 \ 1;   1 \ 1) )$,
 $\beta_{0j} \sim U(-2, 2)$, $\beta_{1j} \sim \beta_{0j} + U(0, 1)$ for $j = 1, 2, 4, 5$, and $\beta_{13} \sim U(0, 10)$. Here, the coefficients are generated only once. All observations are stratified into four strata based on whether $X_{i1} \in \{1, 2\}$ and whether $X_{i2} \in \{1, 2\}$.

	Model 2: Stratum-common linear model:
\[
	g_0(X_i) = g_1(X_i) = \sum_{j = 1}^{5} \beta_j X_{ij},
\]
where $\beta_j \sim U(-2, 2), ~j = 1, \ldots, 5$, with other settings identical to those in Model 1.

	Model 3: Non-linear model:
\[
	g_1(X_i) = \alpha_1 X_{i1} + \alpha_2 X_{i2}^2X_{i3} + X_{i2} \log(\alpha_3 X_{i3} \log(X_{i4}+1)+1) + \alpha_4 e^{X_{i5}},
\]
\[
	g_0(X_i) = \alpha_1 X_{i1}^2 + \alpha_2 X_{i2}X_{i3} + X_{i2}\log(\alpha_3 X_{i3}\log(X_{i4}+1)+1),
\]
where $(X_{i4}, X_{i5})$ are generated as follows: first, generate $(X_{i40}, X_{i5})^\top \sim N( (1, \xi_i)^\top, (4 \ 1; 1 \ 1) )$ and then set $X_{i4} = \max(X_{i40}, 0) + \xi_i$. The coefficients are generated as $\alpha_1, \alpha_2 \sim U(-2, 2)$, $\alpha_3 \sim U(0, 4)$, $\alpha_4 \sim U(-0.4, 0.4)$. Other settings remain identical to those in Model 1.

We evaluate the finite-sample performance of regression adjustment methods integrated with CCOV, IMP, and MIM under simple randomization, stratified block randomization, and minimization. The target treated proportion $\pi$ is set to $1/2$ (equal allocation) and $2/3$ (unequal allocation). The number of covariates used in regression adjustment, denoted as $p$, is set to $5$ and $7$. When $p = 5$, all covariates generating the outcome data are included. For $p = 7$, two additional i.i.d. covariates $X_{i6} \sim N(0, 1)$ and $X_{i7} \sim N(1, 4)$, which are independent of the outcomes, are added to the regression. We repeat the simulation $10,000$ times to compute the bias, standard deviation (SD), standard error estimator (SE), root mean squared error (RMSE), and coverage probability (CP) of $95\%$ confidence intervals.


Tables \ref{tab1}--\ref{tab4} present the results under stratified block randomization with $n = 200$. Due to space constraints, results for other randomization methods are provided in the Supplementary Material. Our findings are summarized as follows: First, all regression-adjusted estimators show a small finite-sample bias, which is negligible compared to the SD. Second, IMP exhibits smaller SD and RMSE than CCOV, with reductions of approximately $7\%$ to $7.9\%$ relative to the benchmark estimators $\hat{\tau}_{\adj}$ for equal allocation and $\hat{\tau}_{\int}$ for unequal allocation. Third, MIM performs the best when the number of covariates is small ($p=5$), but occasionally shows slightly worse performance than IMP with larger numbers of covariates ($p=7$). This is due to the potential instability of coefficient estimators when using many uncorrelated covariates in regression. Fourth, in Model 1 where true regression coefficients vary across strata, stratum-specific regression-adjusted estimators can enhance precision compared to stratum-common estimators. However, in Model 2 where regression coefficients are stratum-common, the efficiency gain is minimal. Notably, with $p=7$, stratum-specific estimators can even perform slightly worse than their stratum-common counterparts, for similar reasons as seen in the IMP versus MIM comparison. Fifth, for unequal allocation with $p=7$, ToM regression improves precision relative to Lin's regression. This finding also holds for stratum-specific regressions, indicating that ToM regression is more robust with increasing covariate dimensions. Sixth, in Model 3, these conclusions generally hold, although the strong non-linearity reduces the percentage of interpretability of SD by linear models. Therefore, efficiency gains from MIM to IMP and from stratum-specific to stratum-common estimators are less pronounced relative to overall SD. Exploring non-linear covariate adjustment methods would be an interesting future research direction. Finally, the OLS variance estimator for Fisher's regression under $\pi = 1/2$, and the plug-in variance estimator for Lin's regression and ToM regression, closely approximate the true variance, resulting in empirical coverage probabilities close to the confidence level.

\subsection{Application-based analysis}
\label{sec6}

The NIDA-CNT-0027 dataset \citep{Saxon:2013} was collected from a stratified randomized trial that compared changes in liver enzymes and other health indicators associated with treatment using buprenorphine/naloxone (BUP/NX) versus methadone (MET) in an outpatient setting over 24 weeks. The participants, all with opioid dependence, were assessed for changes in the Clinical Opiate Withdrawal Scale (COWS), which measures signs and symptoms of opiate withdrawal. These assessments were conducted weekly from baseline (week 0) through the end of the study (week 24). We focus on the COWS at week 24 as the outcome, using the COWS at week 0, high blood pressure, medical and psychiatric history, the total score of the risk behavior survey, and body temperature at week 0 as the baseline covariates. In the experiment, a total of 1297 patients were stratified into two strata based on their liver enzyme levels.

To evaluate the finite-sample performance of different methods, we impute the unobserved potential outcomes using a random forest fit of the observed data. We then take a random sample of size $n = 500$ with replacement to align with our super-population setting and assign treatments using stratified block randomization with $\pi = 2/3$. The missing values for the covariates—temperature and COWS at week 0—are generated using the method described in Section~\ref{sec5:1}. This process is repeated 1000 times. The results are presented in Table~\ref{TAB0}. The conclusions are similar to those in Section~\ref{sec5:1}. In particular, IMP improves the precision of estimation by 8.5\% compared to the baseline estimator. It is worth noting that MIM performs slightly worse than IMP, and the stratum-specific estimators are also marginally less accurate than their stratum-common counterparts in this specific problem.


\section{Discussions}
\label{sec7}

In this paper, we examined the complexities and advancements of regression adjustment methods under covariate-adaptive randomization, with a particular focus on scenarios involving missing covariate data. Building upon foundational works advocating for randomization by Fisher and subsequent developments in covariate-balanced designs, we reviewed various strategies to enhance the precision and robustness of treatment effect estimation. Our evaluation encompassed eighteen regression-adjusted estimators, combining three nonparametric missingness processing techniques with stratum-common and stratum-specific Fisher's, Lin's, and ToM regressions. This comprehensive approach allowed us to address critical gaps in the literature, particularly concerning the integration of missingness processing methods with ToM regression adjustments under covariate-adaptive randomization within the super-population framework. The key findings underscore the efficacy of IMP in reducing the standard deviation and the root mean squared error relative to CCOV. Additionally, MIM demonstrates superior performance with fewer covariates, but shows increased variability in more complex models, underscoring the sensitivity of coefficient estimators to model complexity. Stratum-specific regression adjustments provide notable efficiency gains in scenarios with varied regression coefficients across strata when the sample size is large. Interestingly, ToM regression emerges as a robust alternative to Lin's regression, particularly in handling larger covariate dimensions and unbalanced designs. Based on theoretical analysis and simulation results, we make the following recommendations: For equal allocation, we generally recommend using Fisher's regression integrated with MIM. In cases of unequal allocation, we advise employing ToM regression. For large sample sizes, we advocate using stratum-specific estimators and MIM for handling missing data. Conversely, for small sample sizes, we recommend using IMP with stratum-common estimators.


Several avenues warrant further investigation. Firstly, this paper primarily focuses on three methods for handling missing covariates (CCOV, IMP, and MIM). Additionally, MPM, another method, shows potential for enhancing efficiency within a finite-population framework \citep{Zhao:2022}. However, the dimensionality of covariates post-imputation via MPM may become excessively large, potentially compromising the finite-sample performance of regression adjustments. This issue is particularly pronounced in covariate-adaptive randomization with many strata. Exploring techniques such as Lasso \citep{Tibshirani:1996} to address high-dimensional challenges could be fruitful. Secondly, while this paper focuses on treatment-control studies, extending these findings to scenarios involving multiple treatments \citep{Bugni:2019, Gu2024} is an intriguing prospect. Thirdly, investigating non-linear covariate adjustment methods and their implications for treatment effect estimation remains a promising area for future research. Lastly, this paper only addresses missing covariate data; however, in practice, outcomes may also be affected by missing values \citep{Zhao-Ding-biometrika}. Exploring strategies to handle missing outcome data under covariate-adaptive randomization presents a compelling direction for future research.


\bibliography{main}

\begin{thebibliography}{10}
\providecommand \doibase [0]{http://dx.doi.org/}%

\bibitem{Fisher:1926}
Fisher RA. The arrangement of field experiments. {\it Journal of the Ministry
  of Agriculture of Great Britain.} 1926\string;33(1)\string:503-513.

\bibitem{Efron:1971}
Efron B. Forcing a sequential experiment to be balanced. {\it Biometrika.}
  1971\string;58(3)\string:403-417.

\bibitem{Zelen:1974}
Zelen M. The randomization and stratification of patients to clinical trials.
  {\it Journal of Chronic Diseases.} 1974\string;27(7)\string:365-375.

\bibitem{Pocock:1975}
Pocock SJ, Simon R. Sequential treatment assignment with balancing for
  prognostic factors in the controlled clinical trial. {\it Biometrics.}
  1975\string;31(1)\string:103-115.

\bibitem{Lin:2015}
Lin Y, Zhu M, Su Z. The pursuit of balance: An overview of covariate-adaptive
  randomization techniques in clinical trials. {\it Contemporary Clinical
  Trials.} 2015\string;45\string:21-25.

\bibitem{Ciolino:2019}
Ciolino JD, Palac HL, Yang A, Vaca M, Belli HM. Ideal vs. real: {A} systematic
  review on handling covariates in randomized controlled trials. {\it BMC
  Medical Research Methodology.} 2019\string;19\string:1-11.

\bibitem{Liu:2020}
Liu H, Yang Y. Regression-adjusted average treatment effect estimates in
  stratified randomized experiments. {\it Biometrika.}
  2020\string;107(4)\string:935-948.

\bibitem{Ma:2020}
Ma W, Tu F, Liu H. Regression analysis for covariate-adaptive randomization: A
  robust and efficient inference perspective. {\it Statistics in Medicine.}
  2022\string;41(29)\string:5645-5661.

\bibitem{Ye:2020}
Ye T, Yi Y, Shao J. Inference on the average treatment effect under
  minimization and other covariate-adaptive randomization methods. {\it
  Biometrika.} 2022\string;109(1)\string:33-47.

\bibitem{Lin:2013}
Lin W. Agnostic notes on regression adjustments to experimental data:
  Reexamining Freedman's critique. {\it The Annals of Applied Statistics.}
  2013\string;7(1)\string:295-318.

\bibitem{Shi:2022}
Shi W, Zhao A, Liu H. Rerandomization and covariate adjustment in split-plot
  designs. {\it Journal of Business \& Economics Statistics.} 2024\string;in
  press.

\bibitem{Zhao2022reconciling}
Zhao A, Ding P. Reconciling design-based and model-based causal inferences for
  split-plot experiments. {\it The Annals of Statistics.}
  2022\string;50\string:1170--1192.

\bibitem{Zhao2022fact}
Zhao A, Ding P. Regression-based causal inference with factorial experiments:
  estimands, model specifications, and design-based properties. {\it
  Biometrika.} 2022\string;109\string:799-815.

\bibitem{Wang:2023}
Wang B, Susukida R, Mojtabai R, Amin-Esmaeili M, Rosenblum M. Model-Robust
  Inference for Clinical Trials that Improve Precision by Stratified
  Randomization and Adjustment for Additional Baseline Variables. {\it Journal
  of the American Statistical Association.}
  2023\string;118(542)\string:1152-1163.

\bibitem{Zhao:2023}
Zhao A, Ding P. Covariate adjustment in multiarmed, possibly factorial
  experiments. {\it Journal of the Royal Statistical Society Series B:
  Statistical Methodology.} 2023\string;85(1)\string:1-23.

\bibitem{Liu:2024}
Liu H, Ren J, Yang Y. Randomization-based joint central limit theorem and
  efficient covariate adjustment in randomized block $2^{K}$ factorial
  experiments. {\it Journal of the American Statistical Association.}
  2024\string;119(545)\string:136-150.

\bibitem{Bugni:2018}
Bugni FA, Canay IA, Shaikh AM. Inference under covariate-adaptive
  ran-domization. {\it Journal of the American Statistical Association.}
  2018\string;113(524)\string:1784-1796.

\bibitem{Bugni:2019}
Bugni FA, Canay IA, Shaikh AM. Inference under covariate-adaptive randomization
  with multiple treatments. {\it Quantitative Economics.}
  2019\string;10(4)\string:1747-1785.

\bibitem{LiuH:2022}
Liu H, Tu F, Ma W. Lasso-adjusted treatment effect estimation under
  covariate-adaptive randomization. {\it Biometrika.}
  2023\string;110(2)\string:431-447.

\bibitem{Bai:2022}
Bai Y, Romano JP, Shaikh AM. Inference in experiments with matched pairs. {\it
  Journal of the American Statistical Association.}
  2022\string;117(540)\string:1726-1737.

\bibitem{Bai:2023}
Bai Y, Liu J, Shaikh AM, Tabord-Meehan M. On the efficiency of finely
  stratified experiments. {\it arXiv preprint arXiv:2307.15181.} 2023.

\bibitem{Bai:2024}
Bai Y, Jiang L, Romano JP, Shaikh AM, Zhang Y. Covariate adjustment in
  experiments with matched pairs. {\it Journal of Econometrics.}
  2024\string;241(1)\string:205740.

\bibitem{Rubin:1987}
Rubin DB. {\it Multiple Imputation for Nonresponse in Surveys}. 81.
\newblock John Wiley and Sons., 2004.

\bibitem{Little:1992}
Little RJ. Regression with missing X's: A review. {\it Journal of the American
  Statistical Association.} 1992\string;87(420)\string:1227-1237.

\bibitem{White:2010}
White IR, Carlin JB. Bias and efficiency of multiple imputation compared with
  complete-case analysis for missing covariate values. {\it Statistics in
  Medicine.} 2010\string;29(28)\string:2920-2931.

\bibitem{Zhao:2022}
Zhao A, Ding P. To adjust or not to adjust? {E}stimating the average treatment
  effect in randomized experiments with missing covariates. {\it Journal of the
  American Statistical Association.} 2024\string;in press.

\bibitem{Fisher:1935}
Fisher RA. {\it The {D}esign of {E}xperiments}.
\newblock Edinburgh, London: Oliver and Boyd.
\newblock 1st~ed., 1935.

\bibitem{Freedman:2008a}
Freedman DA. On regression adjustments to experimental data. {\it Advances in
  Applied Mathematics.} 2008\string;40(2)\string:180-193.

\bibitem{Lu:2022}
Lu X, Liu H. Tyranny-of-the-minority regression adjustment in randomized
  experiments. {\it Journal of the American Statistical Association.}
  2024\string;in press.

\bibitem{huber2004robust}
Huber PJ. {\it Robust Statistics}.
\newblock John Wiley \& Sons, Inc., Hoboken, New Jersey, 2004.

\bibitem{Neyman:1923}
Neyman J. On the application of probability theory to agricultural experiments.
  {E}ssay on principles (with discussion). {S}ection 9 (translated).
  {R}eprinted. {\it Statistical Science.} 1923\string:465--472.

\bibitem{Rubin:1974}
Rubin DB. Estimating causal effects of treatments in randomized and
  nonrandomized studies. {\it Journal of Educational Psychology.}
  1974\string;66(5)\string:688-701.

\bibitem{Hu2012}
Hu Y, Hu F. Asymptotic Properties of Covariate-Adaptive Randomization. {\it The
  Annals of Statistics.} 2012\string;40(3)\string:1794-1815.

\bibitem{Saxon:2013}
Saxon AJ, W L, Hillhouse~M. ea. Buprenorphine/Naloxone and Methadone Effects on
  Laboratory Indices of Liver Health: A Randomized Trial. {\it Drug and Alcohol
  Dependence.} 2013\string;128(1-2)\string:71-76.

\bibitem{Tibshirani:1996}
Tibshirani R. Regression shrinkage and selection via the lasso. {\it Journal of
  the Royal Statistical Society: Series B (Methodological).}
  1996\string;58(1)\string:267-288.

\bibitem{Gu2024}
Gu Y, Liu H, Ma W. Regression‐based multiple treatment effect estimation
  under covariate‐adaptive randomization. {\it Biometrics.}
  2023\string;79(4)\string:2869-2880.

\bibitem{Zhao-Ding-biometrika}
Zhao A, Ding P, Li F. Covariate adjustment in randomized experiments with
  missing outcomes and covariates. {\it Biometrika.} 2024\string;in press.

\end{thebibliography}

\bmsection*{Supporting information}

Additional supporting information can be found in the
online version of the article at the publisher’s website.

\newpage

\begin{table}[H]
\tiny
\centering
\caption{Simulation results under stratified block randomization, $p = 5$, $\pi = 1/2$}
\label{tab1}
\begin{threeparttable}
\begin{tabular}{cccccccccccccccc}
\toprule
\multicolumn{3}{c}{~} & \multicolumn{4}{c}{Model 1} & \multicolumn{4}{c}{Model 2} & \multicolumn{4}{c}{Model 3} \\ \hline
estimator	&	bias	&	SD	&	SE	&	RMSE	&	CP	&	bias	&	SD	&	SE	&	RMSE	&	CP	&	bias	&	SD	&	SE	&	RMSE	&	CP	\\ \midrule
$\hat{\tau}_{\adj}$	&	0.01	&	1.27	&	1.27	&	1.27	&	0.95	&	0.00&	0.83	&	0.83	&	0.83	&	0.95	&	0.00	&	0.44	&	0.44	&	0.44	&	0.95	\\
$\hat{\tau}_{\adj,\ccov}$	&	-0.01	&	1.27	&	1.26	&	1.27	&	0.95	&	0.00	&	0.79	&	0.78	&	0.79	&	0.95	&	-0.01	&	0.40&	0.40&	0.40&	0.95	\\
$\hat{\tau}_{\int, \ccov}$	&	0.01	&	1.27	&	1.29	&	1.27	&	0.95	&	0.00&	0.79	&	0.80&	0.79	&	0.95	&	0.00&	0.40&	0.41	&	0.40&	0.95	\\
$\hat{\tau}_{\tom, \ccov}$	&	-0.01	&	1.27	&	1.29	&	1.27	&	0.95	&	0.00&	0.79	&	0.80&	0.79	&	0.95	&	-0.01	&	0.40&	0.41	&	0.40&	0.95	\\
$\hat{\tau}_{\adj, \ccov, \ss}$	&	-0.08	&	1.28	&	1.27	&	1.28	&	0.95	&	0.01	&	0.79	&	0.79	&	0.79	&	0.95	&	-0.03	&	0.39	&	0.38	&	0.39	&	0.94	\\
$\hat{\tau}_{\int, \ccov, \ss}$	&	0.00&	1.28	&	1.28	&	1.28	&	0.95	&	-0.01	&	0.79	&	0.80&	0.79	&	0.95	&	-0.01	&	0.39	&	0.39	&	0.39	&	0.95	\\
$\hat{\tau}_{\tom, \ccov, \ss}$	&	-0.08	&	1.28	&	1.28	&	1.28	&	0.95	&	0.01	&	0.79	&	0.79	&	0.79	&	0.95	&	-0.03	&	0.39	&	0.39	&	0.39	&	0.95	\\
$\hat{\tau}_{\adj, \imp}$	&	-0.01	&	1.17	&	1.17	&	1.17	&	0.95	&	0.01	&	0.60&	0.60&	0.60&	0.95	&	-0.01	&	0.39	&	0.38	&	0.39	&	0.95	\\
$\hat{\tau}_{\int, \imp}$	&	0.01	&	1.17	&	1.24	&	1.17	&	0.96	&	0.01	&	0.60&	0.64	&	0.60&	0.96	&	0.00&	0.39	&	0.41	&	0.39	&	0.96	\\
$\hat{\tau}_{\tom, \imp}$	&	-0.01	&	1.17	&	1.24	&	1.17	&	0.96	&	0.01	&	0.60&	0.64	&	0.60&	0.96	&	-0.01	&	0.39	&	0.41	&	0.39	&	0.96	\\
$\hat{\tau}_{\adj, \imp, \ss}$	&	-0.07	&	1.00&	0.99	&	1.00&	0.95	&	0.02	&	0.62	&	0.62	&	0.62	&	0.95	&	-0.04	&	0.38	&	0.37	&	0.38	&	0.94	\\
$\hat{\tau}_{\int, \imp, \ss}$	&	0.02	&	1.00&	1.01	&	1.00&	0.95	&	0.01	&	0.62	&	0.63	&	0.62	&	0.95	&	0.00&	0.39	&	0.39	&	0.39	&	0.95	\\
$\hat{\tau}_{\tom, \imp, \ss}$	&	-0.07	&	1.00&	1.00&	1.00&	0.95	&	0.02	&	0.62	&	0.62	&	0.62	&	0.95	&	-0.04	&	0.38	&	0.38	&	0.38	&	0.95	\\
$\hat{\tau}_{\adj, \mim}$	&	-0.03	&	1.16	&	1.16	&	1.16	&	0.95	&	-0.01	&	0.58	&	0.58	&	0.58	&	0.95	&	-0.02	&	0.37	&	0.37	&	0.38	&	0.94	\\
$\hat{\tau}_{\int, \mim}$	&	0.01	&	1.17	&	1.28	&	1.17	&	0.97	&	0.00&	0.58	&	0.64	&	0.58	&	0.97	&	0.00&	0.37	&	0.41	&	0.37	&	0.97	\\
$\hat{\tau}_{\tom, \mim}$	&	-0.03	&	1.16	&	1.27	&	1.16	&	0.97	&	-0.01	&	0.58	&	0.64	&	0.58	&	0.97	&	-0.02	&	0.37	&	0.41	&	0.38	&	0.97	\\
$\hat{\tau}_{\adj, \mim, \ss}$	&	-0.15	&	0.99	&	0.98	&	1.00&	0.94	&	-0.06	&	0.60&	0.60&	0.61	&	0.95	&	-0.12	&	0.38	&	0.36	&	0.40&	0.92	\\
$\hat{\tau}_{\int, \mim, \ss}$	&	0.00&	1.01	&	1.01	&	1.01	&	0.95	&	-0.01	&	0.62	&	0.62	&	0.62	&	0.95	&	-0.02	&	0.38	&	0.38	&	0.38	&	0.95	\\
$\hat{\tau}_{\tom, \mim, \ss}$	&	-0.15	&	0.99	&	0.99	&	1.00&	0.95	&	-0.06	&	0.60&	0.61	&	0.61	&	0.95	&	-0.12	&	0.38	&	0.37	&	0.40&	0.93	\\ \bottomrule
\end{tabular}

\end{threeparttable}
\end{table}

\begin{table}[H]
\tiny
\centering
\caption{Simulation results under stratified block randomization, $p = 5$, $\pi = 2/3$}
\label{tab2}
\begin{threeparttable}
\begin{tabular}{cccccccccccccccc}
\toprule
\multicolumn{3}{c}{~} & \multicolumn{4}{c}{Model 1} & \multicolumn{4}{c}{Model 2} & \multicolumn{4}{c}{Model 3} \\ \hline
estimator	&	bias	&	SD	&	SE	&	RMSE	&	CP	&	bias	&	SD	&	SE	&	RMSE	&	CP	&	bias	&	SD	&	SE	&	RMSE	&	CP	\\ \midrule
$\hat{\tau}_{\int}$	&	-0.03	&	1.29	&	1.27	&	1.29	&	0.95	&	-0.01	&	0.84	&	0.83	&	0.84	&	0.95	&	-0.01	&	0.44	&	0.45	&	0.44	&	0.95	\\
$\hat{\tau}_{\adj,\ccov}$	&	-0.05	&	1.28	&	1.27	&	1.29	&	0.95	&	-0.01	&	0.79	&	0.79	&	0.79	&	0.95	&	-0.02	&	0.40&	0.40&	0.40&	0.95	\\
$\hat{\tau}_{\int, \ccov}$	&	-0.03	&	1.28	&	1.29	&	1.28	&	0.95	&	-0.01	&	0.79	&	0.80&	0.79	&	0.95	&	-0.01	&	0.40&	0.41	&	0.40&	0.95	\\
$\hat{\tau}_{\tom, \ccov}$	&	-0.05	&	1.28	&	1.29	&	1.29	&	0.95	&	-0.01	&	0.79	&	0.80&	0.79	&	0.95	&	-0.02	&	0.40&	0.41	&	0.40&	0.95	\\
$\hat{\tau}_{\adj, \ccov, \ss}$	&	-0.12	&	1.29	&	1.28	&	1.30&	0.95	&	0.00&	0.80&	0.80&	0.80&	0.95	&	-0.04	&	0.39	&	0.39	&	0.40&	0.94	\\
$\hat{\tau}_{\int, \ccov, \ss}$	&	-0.04	&	1.30&	1.29	&	1.30&	0.95	&	-0.02	&	0.80&	0.80&	0.80&	0.95	&	-0.02	&	0.39	&	0.39	&	0.39	&	0.95	\\
$\hat{\tau}_{\tom, \ccov, \ss}$	&	-0.12	&	1.29	&	1.28	&	1.30&	0.95	&	0.00&	0.80&	0.80&	0.80&	0.95	&	-0.04	&	0.39	&	0.39	&	0.40&	0.94	\\
$\hat{\tau}_{\adj, \imp}$	&	-0.04	&	1.19	&	1.18	&	1.19	&	0.95	&	-0.01	&	0.61	&	0.61	&	0.61	&	0.95	&	-0.02	&	0.39	&	0.39	&	0.39	&	0.95	\\
$\hat{\tau}_{\int, \imp}$	&	-0.02	&	1.20&	1.24	&	1.20&	0.95	&	-0.01	&	0.61	&	0.64	&	0.61	&	0.96	&	-0.01	&	0.39	&	0.41	&	0.39	&	0.96	\\
$\hat{\tau}_{\tom, \imp}$	&	-0.04	&	1.20&	1.24	&	1.20&	0.95	&	-0.01	&	0.61	&	0.64	&	0.61	&	0.96	&	-0.02	&	0.39	&	0.41	&	0.39	&	0.96	\\
$\hat{\tau}_{\adj, \imp, \ss}$	&	-0.10&	1.01	&	1.00&	1.01	&	0.94	&	0.01	&	0.62	&	0.62	&	0.62	&	0.95	&	-0.05	&	0.39	&	0.38	&	0.39	&	0.94	\\
$\hat{\tau}_{\int, \imp, \ss}$	&	-0.01	&	1.02	&	1.01	&	1.02	&	0.95	&	-0.01	&	0.63	&	0.63	&	0.63	&	0.95	&	-0.01	&	0.39	&	0.39	&	0.39	&	0.95	\\
$\hat{\tau}_{\tom, \imp, \ss}$	&	-0.10&	1.01	&	1.00&	1.01	&	0.94	&	0.01	&	0.62	&	0.62	&	0.63	&	0.95	&	-0.05	&	0.39	&	0.38	&	0.39	&	0.94	\\
$\hat{\tau}_{\adj, \mim}$	&	-0.06	&	1.18	&	1.17	&	1.18	&	0.94	&	-0.02	&	0.59	&	0.59	&	0.59	&	0.95	&	-0.04	&	0.38	&	0.37	&	0.38	&	0.94	\\
$\hat{\tau}_{\int, \mim}$	&	-0.02	&	1.19	&	1.28	&	1.19	&	0.96	&	-0.01	&	0.59	&	0.64	&	0.59	&	0.97	&	-0.01	&	0.38	&	0.41	&	0.38	&	0.96	\\
$\hat{\tau}_{\tom, \mim}$	&	-0.06	&	1.19	&	1.28	&	1.19	&	0.96	&	-0.02	&	0.59	&	0.64	&	0.59	&	0.97	&	-0.04	&	0.38	&	0.41	&	0.38	&	0.96	\\
$\hat{\tau}_{\adj, \mim, \ss}$	&	-0.18	&	1.00&	0.99	&	1.02	&	0.94	&	-0.08	&	0.61	&	0.61	&	0.62	&	0.95	&	-0.14	&	0.38	&	0.37	&	0.41	&	0.92	\\
$\hat{\tau}_{\int, \mim, \ss}$	&	-0.04	&	1.04	&	1.02	&	1.04	&	0.94	&	-0.03	&	0.63	&	0.62	&	0.63	&	0.95	&	-0.04	&	0.39	&	0.39	&	0.39	&	0.95	\\
$\hat{\tau}_{\tom, \mim, \ss}$	&	-0.18	&	1.00&	1.00&	1.02	&	0.94	&	-0.08	&	0.61	&	0.61	&	0.62	&	0.95	&	-0.14	&	0.38	&	0.37	&	0.41	&	0.92	\\ \bottomrule
\end{tabular}

\end{threeparttable}
\end{table}

\begin{table}[H]
\tiny
\centering
\caption{Simulation results under stratified block randomization, $p = 7$, $\pi = 1/2$}
\label{tab3}
\begin{threeparttable}
\begin{tabular}{cccccccccccccccc}
\toprule
\multicolumn{3}{c}{~} & \multicolumn{4}{c}{Model 1} & \multicolumn{4}{c}{Model 2} & \multicolumn{4}{c}{Model 3} \\ \hline
estimator	&	bias	&	SD	&	SE	&	RMSE	&	CP	&	bias	&	SD	&	SE	&	RMSE	&	CP	&	bias	&	SD	&	SE	&	RMSE	&	CP	\\ \midrule
$\hat{\tau}_{\adj}$	&	0.01	&	1.27	&	1.27	&	1.27	&	0.95	&	0.00&	0.82	&	0.83	&	0.82	&	0.95	&	0.00&	0.45	&	0.44	&	0.45	&	0.95	\\
$\hat{\tau}_{\adj,\ccov}$	&	-0.02	&	1.27	&	1.26	&	1.27	&	0.95	&	0.00&	0.78	&	0.78	&	0.78	&	0.95	&	-0.01	&	0.40&	0.40&	0.40&	0.95	\\
$\hat{\tau}_{\int, \ccov}$	&	0.01	&	1.27	&	1.29	&	1.27	&	0.95	&	0.00&	0.78	&	0.80&	0.78	&	0.96	&	0.00&	0.40&	0.41	&	0.40&	0.95	\\
$\hat{\tau}_{\tom, \ccov}$	&	-0.02	&	1.27	&	1.29	&	1.27	&	0.95	&	0.00&	0.78	&	0.80&	0.78	&	0.96	&	-0.01	&	0.40&	0.41	&	0.40&	0.95	\\
$\hat{\tau}_{\adj, \ccov, \ss}$	&	-0.08	&	1.28	&	1.27	&	1.28	&	0.95	&	0.01	&	0.78	&	0.79	&	0.78	&	0.95	&	-0.03	&	0.39	&	0.38	&	0.39	&	0.94	\\
$\hat{\tau}_{\int, \ccov, \ss}$	&	0.00&	1.28	&	1.28	&	1.28	&	0.95	&	-0.01	&	0.79	&	0.80&	0.79	&	0.95	&	-0.01	&	0.39	&	0.39	&	0.39	&	0.95	\\
$\hat{\tau}_{\tom, \ccov, \ss}$	&	-0.08	&	1.28	&	1.28	&	1.28	&	0.95	&	0.01	&	0.78	&	0.79	&	0.78	&	0.95	&	-0.03	&	0.39	&	0.39	&	0.39	&	0.94	\\
$\hat{\tau}_{\adj, \imp}$	&	-0.01	&	1.18	&	1.17	&	1.18	&	0.95	&	0.01	&	0.60&	0.60&	0.60&	0.95	&	-0.01	&	0.39	&	0.39	&	0.39	&	0.94	\\
$\hat{\tau}_{\int, \imp}$	&	0.01	&	1.18	&	1.30&	1.18	&	0.97	&	0.01	&	0.60&	0.67	&	0.60&	0.97	&	0.00&	0.39	&	0.43	&	0.39	&	0.97	\\
$\hat{\tau}_{\tom, \imp}$	&	-0.01	&	1.18	&	1.29	&	1.18	&	0.97	&	0.01	&	0.60&	0.67	&	0.60&	0.97	&	-0.01	&	0.39	&	0.43	&	0.39	&	0.97	\\
$\hat{\tau}_{\adj, \imp, \ss}$	&	-0.05	&	1.02	&	1.01	&	1.02	&	0.95	&	0.04	&	0.63	&	0.63	&	0.63	&	0.95	&	-0.02	&	0.39	&	0.38	&	0.39	&	0.94	\\
$\hat{\tau}_{\int, \imp, \ss}$	&	0.04	&	1.03	&	1.05	&	1.03	&	0.95	&	0.03	&	0.64	&	0.65	&	0.64	&	0.95	&	0.02	&	0.40&	0.40&	0.40&	0.95	\\
$\hat{\tau}_{\tom, \imp, \ss}$	&	-0.05	&	1.02	&	1.03	&	1.02	&	0.95	&	0.04	&	0.63	&	0.64	&	0.63	&	0.95	&	-0.02	&	0.39	&	0.40&	0.39	&	0.95	\\
$\hat{\tau}_{\adj, \mim}$	&	-0.05	&	1.17	&	1.17	&	1.17	&	0.95	&	-0.03	&	0.58	&	0.58	&	0.58	&	0.95	&	-0.05	&	0.37	&	0.37	&	0.38	&	0.94	\\
$\hat{\tau}_{\int, \mim}$	&	0.01	&	1.17	&	1.43	&	1.17	&	0.98	&	0.00&	0.58	&	0.71	&	0.58	&	0.98	&	0.00&	0.37	&	0.46	&	0.37	&	0.98	\\
$\hat{\tau}_{\tom, \mim}$	&	-0.05	&	1.17	&	1.42	&	1.17	&	0.98	&	-0.03	&	0.58	&	0.71	&	0.58	&	0.98	&	-0.05	&	0.37	&	0.45	&	0.38	&	0.98	\\
$\hat{\tau}_{\adj, \mim, \ss}$	&	-0.24	&	1.03	&	1.02	&	1.05	&	0.94	&	-0.14	&	0.62	&	0.63	&	0.64	&	0.94	&	-0.20&	0.39	&	0.37	&	0.44	&	0.90\\
$\hat{\tau}_{\int, \mim, \ss}$	&	-0.02	&	1.09	&	1.16	&	1.09	&	0.96	&	-0.02	&	0.66	&	0.71	&	0.67	&	0.96	&	-0.03	&	0.40&	0.43	&	0.40&	0.96	\\
$\hat{\tau}_{\tom, \mim, \ss}$	&	-0.24	&	1.03	&	1.06	&	1.05	&	0.95	&	-0.14	&	0.62	&	0.65	&	0.64	&	0.95	&	-0.20&	0.39	&	0.39	&	0.44	&	0.91	\\ \bottomrule
\end{tabular}

\end{threeparttable}
\end{table}

\begin{table}[H]
\tiny
\centering
\caption{Simulation results under stratified block randomization, $p = 7$, $\pi = 2/3$}
\label{tab4}
\begin{threeparttable}
\begin{tabular}{cccccccccccccccc}
\toprule
\multicolumn{3}{c}{~} & \multicolumn{4}{c}{Model 1} & \multicolumn{4}{c}{Model 2} & \multicolumn{4}{c}{Model 3} \\ \hline
estimator	&	bias	&	SD	&	SE	&	RMSE	&	CP	&	bias	&	SD	&	SE	&	RMSE	&	CP	&	bias	&	SD	&	SE	&	RMSE	&	CP	\\ \midrule
$\hat{\tau}_{\int}$	&	0.00&	1.28	&	1.30&	1.28	&	0.96	&	0.02	&	0.87	&	0.88	&	0.87	&	0.95	&	0.01	&	0.43	&	0.42	&	0.43	&	0.95	\\
$\hat{\tau}_{\adj,\ccov}$	&	-0.02	&	1.28	&	1.37	&	1.28	&	0.96	&	0.02	&	0.83	&	0.82	&	0.83	&	0.95	&	0.00&	0.38	&	0.46	&	0.38	&	0.98	\\
$\hat{\tau}_{\int, \ccov}$	&	0.00&	1.28	&	1.33	&	1.28	&	0.96	&	0.02	&	0.83	&	0.85	&	0.83	&	0.95	&	0.01	&	0.38	&	0.39	&	0.38	&	0.95	\\
$\hat{\tau}_{\tom, \ccov}$	&	-0.02	&	1.28	&	1.33	&	1.28	&	0.96	&	0.02	&	0.83	&	0.85	&	0.83	&	0.95	&	0.00&	0.38	&	0.39	&	0.38	&	0.95	\\
$\hat{\tau}_{\adj, \ccov, \ss}$	&	-0.09	&	1.29	&	1.38	&	1.29	&	0.96	&	0.02	&	0.83	&	0.83	&	0.83	&	0.95	&	-0.02	&	0.37	&	0.44	&	0.37	&	0.98	\\
$\hat{\tau}_{\int, \ccov, \ss}$	&	0.00&	1.30&	1.32	&	1.30&	0.95	&	0.02	&	0.84	&	0.85	&	0.85	&	0.95	&	0.00&	0.37	&	0.37	&	0.37	&	0.95	\\
$\hat{\tau}_{\tom, \ccov, \ss}$	&	-0.08	&	1.29	&	1.32	&	1.29	&	0.95	&	0.03	&	0.84	&	0.84	&	0.84	&	0.95	&	-0.02	&	0.37	&	0.37	&	0.37	&	0.95	\\
$\hat{\tau}_{\adj, \imp}$	&	-0.02	&	1.20&	1.26	&	1.20&	0.96	&	0.02	&	0.64	&	0.63	&	0.64	&	0.95	&	0.00&	0.38	&	0.44	&	0.38	&	0.98	\\
$\hat{\tau}_{\int, \imp}$	&	0.00&	1.20&	1.38	&	1.20&	0.97	&	0.01	&	0.65	&	0.74	&	0.65	&	0.97	&	0.01	&	0.37	&	0.41	&	0.37	&	0.97	\\
$\hat{\tau}_{\tom, \imp}$	&	-0.02	&	1.20&	1.38	&	1.20&	0.97	&	0.02	&	0.65	&	0.74	&	0.65	&	0.97	&	0.00&	0.37	&	0.41	&	0.37	&	0.97	\\
$\hat{\tau}_{\adj, \imp, \ss}$	&	-0.05	&	1.06	&	1.08	&	1.06	&	0.95	&	0.05	&	0.67	&	0.66	&	0.68	&	0.95	&	-0.01	&	0.39	&	0.44	&	0.39	&	0.97	\\
$\hat{\tau}_{\int, \imp, \ss}$	&	0.00&	1.11	&	1.11	&	1.11	&	0.95	&	0.01	&	0.72	&	0.71	&	0.72	&	0.94	&	0.01	&	0.38	&	0.39	&	0.38	&	0.95	\\
$\hat{\tau}_{\tom, \imp, \ss}$	&	-0.09	&	1.07	&	1.08	&	1.08	&	0.95	&	0.04	&	0.69	&	0.69	&	0.69	&	0.95	&	-0.03	&	0.38	&	0.38	&	0.38	&	0.95	\\
$\hat{\tau}_{\adj, \mim}$	&	-0.07	&	1.20&	1.24	&	1.21	&	0.95	&	-0.03	&	0.63	&	0.59	&	0.63	&	0.93	&	-0.05	&	0.38	&	0.41	&	0.38	&	0.96	\\
$\hat{\tau}_{\int, \mim}$	&	-0.01	&	1.21	&	1.62	&	1.21	&	0.99	&	0.00&	0.64	&	0.87	&	0.64	&	0.99	&	0.00&	0.36	&	0.46	&	0.36	&	0.99	\\
$\hat{\tau}_{\tom, \mim}$	&	-0.07	&	1.21	&	1.62	&	1.21	&	0.99	&	-0.03	&	0.64	&	0.86	&	0.64	&	0.99	&	-0.05	&	0.37	&	0.46	&	0.37	&	0.98	\\
$\hat{\tau}_{\adj, \mim, \ss}$	&	-0.24	&	1.09	&	1.07	&	1.11	&	0.94	&	-0.13	&	0.68	&	0.63	&	0.69	&	0.92	&	-0.20&	0.41	&	0.41	&	0.45	&	0.92	\\
$\hat{\tau}_{\int, \mim, \ss}$	&	-0.06	&	1.26	&	1.27	&	1.26	&	0.95	&	-0.05	&	0.81	&	0.81	&	0.82	&	0.94	&	-0.06	&	0.42	&	0.44	&	0.42	&	0.95	\\
$\hat{\tau}_{\tom, \mim, \ss}$	&	-0.29	&	1.11	&	1.16	&	1.14	&	0.95	&	-0.17	&	0.71	&	0.74	&	0.73	&	0.95	&	-0.23	&	0.40&	0.40&	0.46	&	0.91	\\ \bottomrule
\end{tabular}

\end{threeparttable}
\end{table}

\begin{table}[H]
\footnotesize
\centering
\caption{Results based on NIDA-CNT-0027 data with $p = 5$ and $\pi = 2/3$}
\label{TAB0}
\begin{threeparttable}
\begin{tabular}{cccccccccccc}
\toprule
estimator	&	bias	&	SD	&	SE	&	RMSE	&	CP &
estimator	&	bias	&	SD	&	SE	&	RMSE	&	CP \\ \midrule
$\hat{\tau}_{\int}$	&	-0.007	&	0.294	&	0.297	&	0.294	&	0.953	& $\hat{\tau}_{\adj, \imp, \ss}$	&	-0.014	&	0.268	&	0.274	&	0.269	&	0.949 \\
$\hat{\tau}_{\adj,\ccov}$	&	-0.007	&	0.291	&	0.297	&	0.291	&	0.956	& $\hat{\tau}_{\int, \imp, \ss}$	&	-0.014	&	0.273	&	0.272	&	0.273	&	0.949 \\
$\hat{\tau}_{\int, \ccov}$	&	-0.008	&	0.292	&	0.297	&	0.292	&	0.955 &  $\hat{\tau}_{\tom, \imp, \ss}$	&	-0.013	&	0.271	&	0.272	&	0.271	&	0.950	\\
$\hat{\tau}_{\tom, \ccov}$	&	-0.007	&	0.292	&	0.297	&	0.292	&	0.955	& $\hat{\tau}_{\adj, \mim}$	&	-0.015	&	0.268	&	0.278	&	0.269	&	0.951 \\
$\hat{\tau}_{\adj, \ccov, \ss}$	&	-0.005	&	0.292	&	0.296	&	0.292	&	0.955	& $\hat{\tau}_{\int, \mim}$	&	-0.016	&	0.270&	0.277	&	0.271	&	0.952 \\
$\hat{\tau}_{\int, \ccov, \ss}$	&	-0.004	&	0.293	&	0.295	&	0.293	&	0.954	& $\hat{\tau}_{\tom, \mim}$	&	-0.015	&	0.270&	0.277	&	0.270&	0.948 \\
$\hat{\tau}_{\tom, \ccov, \ss}$	&	-0.003	&	0.293	&	0.295	&	0.293	&	0.954 & $\hat{\tau}_{\adj, \mim, \ss}$	&	-0.014	&	0.270&	0.276	&	0.270&	0.952	\\
$\hat{\tau}_{\adj, \imp}$	&	-0.014	&	0.268	&	0.275	&	0.268	&	0.952	& $\hat{\tau}_{\int, \mim, \ss}$	&	-0.012	&	0.274	&	0.274	&	0.275	&	0.947 \\
$\hat{\tau}_{\int, \imp}$	&	-0.015	&	0.269	&	0.274	&	0.270&	0.951	& $\hat{\tau}_{\tom, \mim, \ss}$	&	-0.012	&	0.272	&	0.273	&	0.272	&	0.950 \\
$\hat{\tau}_{\tom, \imp}$	&	-0.015	&	0.269	&	0.274	&	0.269	&	0.951 \\ \bottomrule
\end{tabular}

\end{threeparttable}
\end{table}

\newpage

\begin{center}
{\Large\bf Supplementary material for ``Regression adjustment in covariate-adaptive randomized experiments with missing covariates"}
\end{center}

\setcounter{section}{0}
\renewcommand\thesection{\Alph{section}}

\Cref{secIII} provides a complete case analysis. \Cref{secI} provides proofs of the theoretical results. \Cref{secII} provides additional simulation results.

\section{Complete case analysis}
\label{secIII}

In this section, we extend the result that complete case analysis (CC) may introduce bias from the finite-population setting to the super-population setting. For the CC method, due to the deletion of observations with missing data, a key technical question arises: For CC, s there a result parallel to that in Theorems \ref{th1}--\ref{th2}? To address the issue, we need the \ref{as6} below, which is stronger than Assumption \ref{as4}:

\begin{assumption}
\label{as6}

$\{\{n^{-\frac{1}{2}} D_{n[k]}\}_{k=1,\ldots,K}\vert B^{(n)}\} \xrightarrow{d} N(0,\Sigma_D) \ a.s.$, where $\Sigma_D = diag\{p_{[k]}\qk \colon k=1,\ldots,K\}$, $0 \leq \qk \leq \pi(1-\pi), \ k=1,\ldots,K$.

\end{assumption}

Assumption~\ref{as6}, proposed by \cite{Bugni:2018}, was introduced to analyze statistical inference under covariate-adaptive randomization. This assumption delineates the asymptotic behavior of jointly independent imbalances within strata, making it particularly pertinent to stratified randomization.

We demonstrate that if Assumption \ref{as6} holds for all $n$ observations with parameter $\qk$, it continues to hold for the remaining observations with a different $\qk^{\prime}$. Let $C_i = 1$ represent that the $i$th observation contains missing values and $C_i=0$ otherwise. 


\begin{proposition}
\label{ps2}
If Assumption \ref{as6} holds for all $n$ observations with parameter $\qk$, then it still holds for the remaining observations after removing those with missing values, with $\qk$ replaced by $\qk^{\prime}$:
\[
	\qk^{\prime} = \frac{\pi(1-\pi)P(C_i = 0 \vert B_i = k)\{1-P(C_i = 0 \vert B_i = k)\} + \qk P(C_i = 0 \vert B_i = k)^2}{\sum_{k^{\prime}=1}^K p_{[k^{\prime}]}P(C_i = 0 \vert B_i = k^{\prime})}.
\]
\end{proposition}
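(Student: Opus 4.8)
The plan is to derive the retained-sample statement directly from Assumption~\ref{as6} via an exact algebraic decomposition of the complete-case imbalance. Write $\rho_k = P(C_i = 0 \mid B_i = k)$ and $\rho = P(C_i = 0) = \sum_{k'=1}^K p_{[k']}\rho_{k'}$, let $n' = \sum_{i=1}^n (1-C_i)$ be the number of retained units, and let $D'_{n'[k]} = \sum_{i \in [k]} (1-C_i)(A_i - \pi)$ be the within-stratum imbalance on the retained units. I would first record three consequences of the assumptions. Under Assumptions~\ref{as2} and \ref{as5}, $C_i$ is a pre-treatment function of $W_i$, so that $A^{(n)} \perp \!\!\! \perp C^{(n)} \mid B^{(n)}$; under Assumption~\ref{as1}, conditionally on $B^{(n)}$ the variables $C_i$ are independent with $1 - C_i \sim \mathrm{Bernoulli}(\rho_{B_i})$; and by the strong law of large numbers $n'/n \to \rho$ almost surely.

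The crux is the centering $1 - C_i = \rho_{B_i} + \tilde\xi_i$ with $\tilde\xi_i = (1-C_i) - \rho_{B_i}$ mean zero given $B^{(n)}$, which splits the retained imbalance into a design part and a thinning-noise part,
\[
D'_{n'[k]} = \rho_k\, D_{n[k]} + R_{n[k]}, \qquad R_{n[k]} = \sum_{i \in [k]} \tilde\xi_i\,(A_i - \pi).
\]
For the design part, Assumption~\ref{as6} gives $n^{-1/2}\rho_k D_{n[k]} \mid B^{(n)} \xrightarrow{d} N(0, \rho_k^2\, p_{[k]}\, q_{[k]})$ almost surely. For the noise part I would invoke a conditional Lindeberg CLT given $(B^{(n)}, A^{(n)})$: the summands $\tilde\xi_i(A_i - \pi)$ are independent across $i$ and uniformly bounded, and their scaled conditional variance $n^{-1}\sum_{i \in [k]}(A_i-\pi)^2\rho_k(1-\rho_k)$ converges almost surely to $p_{[k]}\,\pi(1-\pi)\,\rho_k(1-\rho_k)$, using $\sum_{i\in[k]}(A_i-\pi)^2 = n_{[k]}\pi^2 + (1-2\pi)n_{[k]1}$ together with $n_{[k]1}/n_{[k]} \to \pi$. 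Hence $n^{-1/2}R_{n[k]} \mid (B^{(n)}, A^{(n)}) \xrightarrow{d} N(0, p_{[k]}\,\pi(1-\pi)\rho_k(1-\rho_k))$, a limit that does not depend on $A^{(n)}$.

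It remains to combine the two parts. Since $D_{n[k]}$ is measurable with respect to $(B^{(n)}, A^{(n)})$ while the conditional limit law of $R_{n[k]}$ given $(B^{(n)}, A^{(n)})$ is a fixed centered Gaussian, a characteristic-function argument---condition on $A^{(n)}$, pass the inner expectation to its limit by dominated convergence, then integrate against the conditional law of $D_{n[k]}$ from Assumption~\ref{as6}---shows that the two contributions are asymptotically independent and jointly Gaussian; jointness across strata follows from the block structure, as each $R_{n[k]}$ involves only stratum-$k$ units, so the limiting covariance is diagonal. This yields
\[
n^{-1/2}D'_{n'[k]} \mid B^{(n)} \xrightarrow{d} N\!\Big(0,\ p_{[k]}\big[\rho_k^2 q_{[k]} + \pi(1-\pi)\rho_k(1-\rho_k)\big]\Big) \quad \text{a.s.}
\]
Finally, rescaling from $n^{-1/2}$ to the subsample normalization $(n')^{-1/2}$ through $n/n' \to 1/\rho$ and Slutsky's theorem turns the variance into $p_{[k]}\,q_{[k]}'$ with $q_{[k]}' = \rho^{-1}\{\pi(1-\pi)\rho_k(1-\rho_k) + q_{[k]}\rho_k^2\}$, which is precisely the stated formula. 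Writing the same variance in terms of the retained-sample proportion $p'_{[k]} = p_{[k]}\rho_k/\rho$ gives the equivalent constant $\pi(1-\pi)(1-\rho_k) + \rho_k q_{[k]} \in [0,\pi(1-\pi)]$, confirming that the retained data again satisfy Assumption~\ref{as6}.

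The main obstacle is this combination step: establishing genuine joint asymptotic normality---not merely a matching of variances---of the design term $\rho_k D_{n[k]}$ and the thinning-noise term $R_{n[k]}$ within the almost-sure conditional-convergence formulation of Assumption~\ref{as6}, and doing so jointly over strata to secure the diagonal covariance. The variance bookkeeping is routine once the decomposition is in hand; the care lies in the nested conditioning and in checking that the conditional CLT for $R_{n[k]}$ holds for almost every realization of $(B^{(n)}, A^{(n)})$ with an $A^{(n)}$-free limiting variance.
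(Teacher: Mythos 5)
Your proof is correct and follows essentially the same route as the paper's: the identical decomposition of the complete-case imbalance into the design part $\rho_k D_{n[k]}$ (the paper's $I_1$) and the mean-zero thinning term (the paper's $I_2$), a conditional CLT for the latter given $(A^{(n)}, B^{(n)})$ with an $A^{(n)}$-free limiting variance $\pi(1-\pi)p_{[k]}\rho_k(1-\rho_k)$, asymptotic independence of the two parts via nested conditioning and bounded/dominated convergence, and the final rescaling through $n/n' \to 1/\rho$ with $\rho = \sum_{k'} p_{[k']}\rho_{k'}$. Your closing check that the variance constant, rewritten with the retained-sample stratum proportions $p'_{[k]} = p_{[k]}\rho_k/\rho$, equals $\pi(1-\pi)(1-\rho_k) + \rho_k q_{[k]} \in [0, \pi(1-\pi)]$ is a small bookkeeping refinement that the paper leaves implicit.
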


Proposition~\ref{ps2} implies that if we replace Assumption \ref{as4} with Assumption \ref{as6}, Theorems \ref{th1}--\ref{th4} still hold for CC, with $\tau$ replaced by $\tau_{cc} = E\{Y_i(1) - Y_i(0) \vert C_i = 0\}$ and $\qk$ replaced by $\qk^{\prime}$. Therefore, consistency may be violated for CC, as $\tau_{cc}$ may not equal $\tau$. A significant bias may arise for CC when $M_i$ is strongly related to $Y_i(a)$. For instance, in clinical trials, it becomes more difficult to collect baseline information for patients with more severe conditions. Hence, we do not recommend using CC in practice.

\section{Proofs}
\label{secI}

In the following proofs, Assumption~\ref{as4} is only required by Fisher's regression and can be weakened to Assumption~\ref{as3} below for Lin's regression and ToM regression. 

\begin{assumption}
\label{as3}
$\pi_{n[k]} \xrightarrow {P} \pi \in (0,1).$
\end{assumption}

\subsection{Useful lemmas}

\begin{lemma}
\label{lm1}
Under Assumptions~\ref{as1}--\ref{as5} and $\hat{c} \xrightarrow {P} c$, for $\dag \in \{\ccov, \imp, \mim\}$, we have
\begin{align*}
    &  \frac{1}{\nk} \sum_{i \in [k]} U_{i, \dag}(\hat{c})U_{i, \dag}(\hat{c})^\top = \frac{1}{\nk} \sum_{i \in [k]} U_{i, \dag}({c})U_{i, \dag}({c})^\top + o_p(1), \\
    &\frac{1}{\nk} \sum_{i \in [k]} U_{i, \dag}(\hat{c}) = \frac{1}{\nk} \sum_{i \in [k]} U_{i, \dag}({c}) + o_p(1), \quad \frac{1}{\nk} \sum_{i \in [k]} U_{i, \dag}(\hat{c})Y_i = \frac{1}{\nk} \sum_{i \in [k]} U_{i, \dag}({c})Y_i+ o_p(1), \\
    & \bar{U}_{[k], \dag}(\hat{c}) \bar{Y}_{[k]} = \bar{U}_{[k], \dag}({c}) \bar{Y}_{[k]} + o_p(1).
\end{align*}
\end{lemma}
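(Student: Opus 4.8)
The plan is to reduce every identity to the single observation that the perturbation $\Delta_i := U_{i,\dag}(\hat c) - U_{i,\dag}(c)$ is controlled entrywise by $\hat c - c$. For $\dag = \ccov$ the map $U_{i,\ccov}(\cdot)$ does not depend on the imputation value at all, so $\Delta_i \equiv 0$ and the statements are trivial. For $\dag \in \{\imp, \mim\}$ the only coordinates of $U_{i,\dag}(\cdot)$ that change are the imputed-covariate entries, where $X_{ij,\imp}(\hat c_j) - X_{ij,\imp}(c_j) = M_{ij}(\hat c_j - c_j)$, while the missing-indicator block of $\mim$ is unaffected. Hence every coordinate of $\Delta_i$ is either $0$ or equal to $M_{ij}(\hat c_j - c_j)$, so that $|\Delta_{ij}| \le |\hat c_j - c_j| \le \|\hat c - c\|_\infty = o_p(1)$ uniformly in $i$, using $0 \le M_{ij} \le 1$.

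First I would prove the quadratic identity by expanding
\[
U_{i,\dag}(\hat c) U_{i,\dag}(\hat c)^\top = U_{i,\dag}(c) U_{i,\dag}(c)^\top + \Delta_i U_{i,\dag}(c)^\top + U_{i,\dag}(c) \Delta_i^\top + \Delta_i \Delta_i^\top
\]
and averaging over $i \in [k]$. In each remainder term I would factor out the scalar differences $\hat c_j - c_j$, leaving behind data-only averages such as $n_{[k]}^{-1}\sum_{i\in[k]} U_{ij,\dag}(c) M_{ij'}$ and $n_{[k]}^{-1}\sum_{i\in[k]} M_{ij}M_{ij'}$. The latter is bounded by $1$, while the former is $O_p(1)$ by the within-stratum law of large numbers: conditionally on $B^{(n)}$ the units in stratum $k$ are i.i.d., $n_{[k]}/n \to p_{[k]} > 0$ almost surely, and $E|U_{ij,\dag}(c) M_{ij'}| \le \{E[U_{ij,\dag}(c)^2]\}^{1/2} < \infty$ by Cauchy--Schwarz together with the assumed finite second moment of $U_{i,\dag}(c)$. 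Since each such average is a function of the data alone and is $O_p(1)$, multiplying by $\hat c_j - c_j = o_p(1)$ yields $o_p(1)$ by Slutsky, and summing the finitely many coordinates preserves this.

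The remaining three identities follow the same template. For the mean identity the difference is $n_{[k]}^{-1}\sum_{i\in[k]}\Delta_i$, whose $j$th entry is $(\hat c_j - c_j)\, n_{[k]}^{-1}\sum_{i\in[k]} M_{ij}$ with the trailing average bounded by $1$. For the cross-moment identity the difference is $n_{[k]}^{-1}\sum_{i\in[k]}\Delta_i Y_i$, and here the data-only average $n_{[k]}^{-1}\sum_{i\in[k]} M_{ij} Y_i$ is $O_p(1)$ because $|M_{ij}Y_i| \le |Y_i(1)| + |Y_i(0)|$ has finite mean under Assumption~\ref{as1}. The fourth identity is then immediate: the mean identity gives $\bar U_{[k],\dag}(\hat c) = \bar U_{[k],\dag}(c) + o_p(1)$, while $\bar Y_{[k]} \to E(Y_i \mid B_i = k)$ is $O_p(1)$, so their product differs by $o_p(1)$ again by Slutsky.

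I expect the only real subtlety to be bookkeeping around the dependence between the random $\hat c$ and the sample: one must keep $\hat c$ factored out as a scalar multiplier of a purely data-driven $O_p(1)$ average, rather than absorbing it into the average, so that the $o_p(1) \times O_p(1) = o_p(1)$ argument is legitimate. Beyond that, the argument is routine once the entrywise bound $|\Delta_{ij}| \le \|\hat c - c\|_\infty$ and the within-stratum law of large numbers (justified by $n_{[k]} \to \infty$ and the finite-moment assumptions) are in place.
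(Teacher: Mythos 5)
Your proof is correct and rests on the same decomposition as the paper's: write $U_{i,\dag}(\hat c)-U_{i,\dag}(c)=M_i\odot(\hat c-c)$, expand the quadratic form into cross terms, and conclude via $o_p(1)\times O_p(1)$ using $\vert M_{ij}\vert\le 1$ and $\hat c\xrightarrow{P}c$. The one place you genuinely diverge is in handling averages that involve the observed outcome $Y_i$, and hence the treatment indicators $A_i$, which are not i.i.d.\ under covariate-adaptive randomization: the paper splits such sums by arm, $\sum_{i\in[k]}A_iU_{i,\dag}(c)Y_i(1)+\sum_{i\in[k]}(1-A_i)U_{i,\dag}(c)Y_i(0)$, and invokes the design-based weak law of large numbers (Lemma B.3 of Bugni et al.\ 2018) to obtain finite probability limits, whereas you dominate the $A_i$-dependent averages by treatment-free ones, e.g.\ $\vert M_{ij}Y_i\vert\le\vert Y_i(1)\vert+\vert Y_i(0)\vert$, so that the ordinary i.i.d.\ law of large numbers suffices because you only need $O_p(1)$ rather than an exact limit. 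Your route is slightly more elementary and self-contained; the paper's buys the actual limits, which it reuses directly for the fourth identity. Two small points to fix in a write-up: your claimed limit $\bar Y_{[k]}\to E(Y_i\mid B_i=k)$ is not quite right, since under general allocation the limit is $\pi E\{Y_i(1)\mid B_i=k\}+(1-\pi)E\{Y_i(0)\mid B_i=k\}$ and establishing it itself requires the design-based law of large numbers; this is harmless here because your argument only uses $\bar Y_{[k]}=O_p(1)$, which your domination bound delivers. Likewise, your statement that units in stratum $k$ are i.i.d.\ conditionally on $B^{(n)}$ should be restricted to treatment-free functionals of $W_i$ (valid since Assumption~\ref{as5} makes $M_i$ a baseline quantity), which is in effect the only way you use it.
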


\begin{proof}
All the results are obvious for CCOV because it does not involve imputation. For IMP, when $c$ is replaced by its consistent estimator $\hat{c}$, $U_{i, \dag}(\hat{c}) - U_{i, \dag}({c}) = M_i \odot (\hat{c} - c)$, where $\odot$ denotes the Hadamard product. Therefore, 
\begin{align}
&\frac{1}{n} \sum_{i \in [k]} U_{i, \dag}(\hat{c})U_{i, \dag}(\hat{c})^\top  \notag \\
=& \frac{1}{n} \sum_{i \in [k]} U_{i, \dag}({c})\{U_{i, \dag}({c})\}^\top + \frac{1}{n} \sum_{i \in [k]} U_{i, \dag}({c})\{U_{i, \dag}(\hat{c}) - U_{i, \dag}({c})\}^\top \notag \\
& + \frac{1}{n} \sum_{i \in [k]} \{U_{i, \dag}(\hat{c}) - U_{i, \dag}({c})\}\{U_{i, \dag}({c})\}^\top + \frac{1}{n} \sum_{i \in [k]} \{U_{i, \dag}(\hat{c}) - U_{i, \dag}({c})\}\{U_{i, \dag}(\hat{c}) - U_{i, \dag}({c})\}^\top  \notag \\
=& \frac{1}{n} \sum_{i \in [k]} U_{i, \dag}({c})\{U_{i, \dag}({c})\}^\top + \frac{1}{n} \sum_{i \in [k]} U_{i, \dag}(c) \{{M}_i \odot (\hat{c} - c)\}^\top \notag \\
& + \frac{1}{n} \sum_{i \in [k]} \{{M}_i \odot (\hat{c} - c)\}\{U_{i, \dag}(c)\}^\top + \frac{1}{n} \sum_{i \in [k]} \{{M}_i \odot (\hat{c} - c)\}\{{M}_i \odot (\hat{c} - c)\}^\top \notag \\
=& \frac{1}{n} \sum_{i \in [k]} U_{i, \dag}({c})\{U_{i, \dag}({c})\}^\top + o_p(1), \notag
\end{align}
where the last equation follows from Lemma B.3 of \cite{Bugni:2018}, $({1}/{n}) \sum_{i \in [k]} U_{i, \dag}(c)$ has a finite probability limit, $\vert M_{ij} \vert \leq 1$, and $\hat{c} \xrightarrow{P} c$. Similarly, we have
\begin{align*}
    \frac{1}{n} \sum_{i \in [k]} U_{i, \dag}(\hat{c}) = \frac{1}{n} \sum_{i \in [k]} U_{i, \dag}({c}) + \frac{1}{n} \sum_{i \in [k]}\{{M}_{i} \odot (\hat{c} - c)\} = \frac{1}{n} \sum_{i \in [k]} U_{i, \dag}({c}) + o_p(1).
\end{align*}

	For the term $({1}/{n}) \sum_{i \in [k]} U_{i, \dag}(\hat{c})Y_i$, we have
\begin{align}
\frac{1}{n} \sum_{i \in [k]} U_{i, \dag}(\hat{c})Y_i =& \frac{1}{n} \sum_{i \in [k]} A_iU_{i, \dag}(\hat{c})Y_i(1) + \frac{1}{n} \sum_{i \in [k]} (1-A_i)U_{i, \dag}(\hat{c})Y_i(0) \notag \\
=& \frac{1}{n} \sum_{i \in [k]} A_iU_{i, \dag}({c})Y_i(1) + \frac{1}{n} \sum_{i \in [k]} (1-A_i)U_{i, \dag}({c})Y_i(0) \notag \\
&+ \frac{1}{n} \sum_{i \in [k]} A_iM_i \odot (\hat{c} - c)Y_i(1) + \frac{1}{n} \sum_{i \in [k]} (1-A_i)M_i \odot (\hat{c} - c)Y_i(0) \notag \\
=& \frac{1}{n} \sum_{i \in [k]} A_iU_{i, \dag}({c})Y_i(1) + \frac{1}{n} \sum_{i \in [k]} (1-A_i)U_{i, \dag}({c})Y_i(0)+ o_p(1) \notag \\
=& \frac{1}{n} \sum_{i \in [k]} U_{i, \dag}({c})Y_i+ o_p(1), \notag
\end{align}
where the third equation follows from Lemma B.3 of \cite{Bugni:2018} and $\hat{c} \xrightarrow{P} c$. By the same lemma, we also have
\begin{align*}
    \frac{1}{n} \sum_{i \in [k]} Y_i =& \frac{1}{n} \sum_{i \in [k]} A_iY_i(1) + \frac{1}{n} \sum_{i \in [k]} (1-A_i)Y_i(0)  \notag \\
    \xrightarrow{P} & \pi p_{[k]} E\{Y_i(1) \vert B_i = k\}+(1-\pi)p_{[k]}E\{Y_i(0) \vert B_i = k\}.
\end{align*}
Combined with $({1}/{n}) \sum_{i \in [k]} U_{i, \dag}(\hat{c}) = ({1}/{n}) \sum_{i \in [k]} U_{i, \dag}({c}) + o_p(1)$, we have 
$$\bar{U}_{[k], \dag}(\hat{c}) \bar{Y}_{[k]} = \bar{U}_{[k], \dag}({c}) \bar{Y}_{[k]} + o_p(1).$$
\end{proof}

\begin{lemma}
\label{lm2}
Under Assumptions~\ref{as1}--\ref{as5} and $\hat{c} \xrightarrow {P} c$, for $\dag \in \{\ccov, \imp, \mim\}$, we have
\begin{align*}
    &\bar{U}_{[k]1, \dag}(\hat{c}) - \bar{U}_{[k]0, \dag}(\hat{c}) = \bar{U}_{[k]1, \dag}({c}) - \bar{U}_{[k]0, \dag}({c})+ o_p(n^{-1/2}), \\
    & \bar{U}_{[k]1, \dag}(\hat{c}) - \bar{U}_{[k], \dag}(\hat{c}) = \bar{U}_{[k]1, \dag}({c}) - \bar{U}_{[k], \dag}({c})+ o_p(n^{-1/2}), \\
    &\bar{U}_{[k], \dag}(\hat{c}) - \bar{U}_{[k]0, \dag}(\hat{c}) = \bar{U}_{[k], \dag}({c}) - \bar{U}_{[k]0, \dag}({c})+ o_p(n^{-1/2}).
\end{align*}
\end{lemma}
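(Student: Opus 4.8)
The plan is to reduce all three displayed identities to a single rate statement, namely that the treatment--control gap in stratum-specific missingness rates is $O_p(n^{-1/2})$, and then to use the consistency $\hat c \xrightarrow{P} c$ to make the imputation error negligible at the $\sqrt{n}$ scale. First I would dispose of the two easy cases. For $\ccov$ there is no imputation, so $U_{i,\ccov}(\hat c)=U_{i,\ccov}(c)$ and every identity holds with an exact zero remainder. For $\mim$ the appended missing-indicator block does not depend on $c$, so $U_{i,\mim}(\hat c)-U_{i,\mim}(c)=(\{M_i\odot(\hat c-c)\}^\top,0^\top)^\top$, and the claim reduces to the one for $\imp$ applied to the imputed block. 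It therefore suffices to treat $\imp$, where $U_{i,\imp}(\hat c)-U_{i,\imp}(c)=M_i\odot(\hat c-c)$ as already used in the proof of \Cref{lm1}.

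Next I would substitute this difference into each stratum-specific average. Writing $\bar M_{[k]1}=n_{[k]1}^{-1}\sum_{i\in[k]}A_iM_i$, $\bar M_{[k]0}=n_{[k]0}^{-1}\sum_{i\in[k]}(1-A_i)M_i$, and $\bar M_{[k]}=n_{[k]}^{-1}\sum_{i\in[k]}M_i$, the left-hand side of the first identity equals $(\bar M_{[k]1}-\bar M_{[k]0})\odot(\hat c-c)$, and the second and third equal $(\bar M_{[k]1}-\bar M_{[k]})\odot(\hat c-c)$ and $(\bar M_{[k]}-\bar M_{[k]0})\odot(\hat c-c)$. Using $\bar M_{[k]}=\pi_{n[k]}\bar M_{[k]1}+(1-\pi_{n[k]})\bar M_{[k]0}$ gives $\bar M_{[k]1}-\bar M_{[k]}=(1-\pi_{n[k]})(\bar M_{[k]1}-\bar M_{[k]0})$ and $\bar M_{[k]}-\bar M_{[k]0}=\pi_{n[k]}(\bar M_{[k]1}-\bar M_{[k]0})$; since $\pi_{n[k]}\in[0,1]$, all three identities reduce to showing $(\bar M_{[k]1}-\bar M_{[k]0})\odot(\hat c-c)=o_p(n^{-1/2})$.

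The crux is the rate $\bar M_{[k]1}-\bar M_{[k]0}=O_p(n^{-1/2})$, and this is where Assumption~\ref{as5} is essential, since it renders $M_i$ a baseline quantity with $M_i(1)=M_i(0)=M_i$, independent of treatment. Concretely, because $M_i$ is a function of $W_i$, Assumption~\ref{as2} gives $M^{(n)}\perp\!\!\!\perp A^{(n)}\mid B^{(n)}$; conditional on $B^{(n)}$ the variables $\{M_i:i\in[k]\}$ are i.i.d.\ with mean $E(M_i\mid B_i=k)$ and independent of $A^{(n)}$, so conditional on $(A^{(n)},B^{(n)})$ each of $\bar M_{[k]1}$ and $\bar M_{[k]0}$ is an average of, respectively, $n_{[k]1}$ and $n_{[k]0}$ i.i.d.\ bounded vectors. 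Their conditional variances are then $O_p(n_{[k]1}^{-1})$ and $O_p(n_{[k]0}^{-1})$; combined with $n_{[k]1}/n\to\pi p_{[k]}>0$ and $n_{[k]0}/n\to(1-\pi)p_{[k]}>0$ (from Assumption~\ref{as3} and $p_{[k]}>0$), a conditional Chebyshev bound yields $\bar M_{[k]a}-E(M_i\mid B_i=k)=O_p(n^{-1/2})$ for $a=0,1$, and subtracting gives the claimed rate. Since $\hat c-c=o_p(1)$, each Hadamard product is $O_p(n^{-1/2})\cdot o_p(1)=o_p(n^{-1/2})$, closing all three identities.

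The main obstacle is exactly this rate statement: under general covariate-adaptive randomization the treatment indicators need not be independent across units or strata, so one cannot invoke an i.i.d.\ central limit theorem for $A^{(n)}$ directly. The device that circumvents this is to condition on $(A^{(n)},B^{(n)})$ and exploit that the randomness in $M_i$ is, by Assumptions~\ref{as2} and~\ref{as5}, i.i.d.\ within strata and free of the assignment mechanism; boundedness of $M_i$ then makes all requisite moment conditions automatic, mirroring the machinery used for outcomes and covariates in \cite{Ma:2020} and \cite{Bugni:2018}. The remaining algebraic reductions of the second and third identities to the first are routine.
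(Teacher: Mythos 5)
Your proof is correct and follows the same basic route as the paper's: both dispose of $\ccov$ trivially, reduce $\mim$ to $\imp$ via the $c$-free indicator block, write the imputation error as a Hadamard product of the stratum-specific missing-indicator contrast with $\hat c - c$, and reduce the second and third displays to the first through the identity $\bar U_{[k]1,\dag} - \bar U_{[k],\dag} = (n_{[k]0}/n_{[k]})\{\bar U_{[k]1,\dag} - \bar U_{[k]0,\dag}\}$. The one genuine divergence is at the crux rate $\bar M_{[k]1} - \bar M_{[k]0} = O_p(n^{-1/2})$: the paper settles this in one line by citing Theorem 4.1 of Bugni et al.\ (2018) together with Assumption~\ref{as5}, whereas you prove it from first principles by conditioning on $(A^{(n)}, B^{(n)})$, using Assumptions~\ref{as2} and~\ref{as5} to render the within-stratum $M_i$ i.i.d.\ and free of the assignment mechanism, and applying a conditional Chebyshev bound exploiting boundedness of $M_i$. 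Your self-contained version makes explicit exactly where Assumptions~\ref{as2} and~\ref{as5} enter and avoids importing a CLT-level result where a variance bound suffices; the citation is shorter but hides this mechanism (the conditional-to-unconditional passage and the randomness of $n_{[k]a}$, handled on the event $n_{[k]a}\geq \delta n$, are routine in your argument and are in fact the same device the paper itself deploys in its proof of Proposition~\ref{ps2}). One bookkeeping remark: you invoke Assumption~\ref{as3} ($\pi_{n[k]} \xrightarrow{P} \pi$) for the positive limits of $n_{[k]a}/n$, but the lemma is stated under Assumptions~\ref{as1}--\ref{as5}; this is harmless, since Assumption~\ref{as4} together with $n_{[k]}/n \xrightarrow{P} p_{[k]} > 0$ gives $\pi_{n[k]} - \pi = D_{n[k]}/n_{[k]} = O_p(n^{-1/2})$, so the convergence you need already holds under the stated hypotheses.
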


\begin{proof}
All the results are obvious for CCOV because it does not involve imputation. 
For the term $\bar{U}_{[k]1, \dag}(\hat{c}) - \bar{U}_{[k]0, \dag}(\hat{c})$ with $\dag = \imp$, we have
\begin{align}
&\bar{U}_{[k]1, \dag}(\hat{c}) - \bar{U}_{[k]0, \dag}(\hat{c}) \notag \\
=& \bar{U}_{[k]1, \dag}({c}) - \bar{U}_{[k]0, \dag}({c}) + \{\bar{U}_{[k]1, \dag}(\hat{c})-\bar{U}_{[k]1, \dag}({c})\}-\{\bar{U}_{[k]0, \dag}(\hat{c})-\bar{U}_{[k]0, \dag}({c})\} \notag \\
=& \bar{U}_{[k]1, \dag}({c}) - \bar{U}_{[k]0, \dag}({c}) + \bigl\{\frac{1}{n_{[k]0}}\sum_{i \in [k]} A_iM_i - \frac{1}{n_{[k]0}} \sum_{i \in [k]} (1-A_i)M_i\bigr\} \odot (\hat{c} - c) \notag \\
=& \bar{U}_{[k]1, \dag}({c}) - \bar{U}_{[k], \dag}({c}) + o_p(n^{-1/2}), \notag
\end{align}
\noindent where the last equation follows from $\hat{c} \xrightarrow{P} c$ and $\bigl\{({1}/{n_{[k]1}}) \sum_{i \in [k]} A_iM_i - ({1}/{n_{[k]0})} \sum_{i \in [k]} (1-A_i)M_i\bigr\} = O_p(n^{-1/2})$ by Theorem 4.1 of \cite{Bugni:2018} and Assumption \ref{as5}. 


    Note that $\bar{U}_{[k]1, \dag}(\hat{c}) - \bar{U}_{[k], \dag}(\hat{c}) = ({n_{[k]0}}/{n_{[k]}})\{\bar{U}_{[k]1, \dag}(\hat{c}) - \bar{U}_{[k]0, \dag}(\hat{c})\}$, we have $\bar{U}_{[k]1, \dag}(\hat{c}) - \bar{U}_{[k], \dag}(\hat{c}) = \bar{U}_{[k]1, \dag}({c}) - \bar{U}_{[k]0, \dag}({c}) + o_p(n^{-1/2})$. Similarly, $\bar{U}_{[k], \dag}(\hat{c}) - \bar{U}_{[k]0, \dag}(\hat{c}) = \bar{U}_{[k], \dag}({c}) - \bar{U}_{[k]0, \dag}({c})+ o_p(n^{-1/2}).$ 

    As $M_i$ does not involve imputation, the result for $\dag = \mim$ follows from the above proof for $\dag = \imp$.
\end{proof}

\begin{lemma}
\label{lm3}

Under Assumptions~\ref{as1}--\ref{as5} and $\hat{c} \xrightarrow {P} c$, for $\dag \in \{\ccov, \imp, \mim\}$, we have
\[
	\frac{1}{\nkt} \sum_{i \in [k]} A_iU_{i, \dag}(\hat{c})Y_i(1)  = \frac{1}{\nkt} \sum_{i \in [k]} A_iU_{i, \dag}({c})Y_i(1) + o_p(1),
\]
\[
	\frac{1}{\nkc} \sum_{i \in [k]} (1-A_i)U_{i, \dag}(\hat{c})Y_i(0) = \frac{1}{\nkc} \sum_{i \in [k]} (1-A_i)U_{i, \dag}({c})Y_i(0) + o_p(1),
\]
\[
	\bar{U}_{[k]a, \dag}(\hat{c}) \bar{Y}_{[k]a} = \bar{U}_{[k]a, \dag}({c}) \bar{Y}_{[k]a} + o_p(1), \quad a=0,1.
\]
\end{lemma}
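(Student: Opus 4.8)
The plan is to mirror the proof of Lemma~\ref{lm1}, the only change being that the full-stratum averages $\nk^{-1}\sum_{i\in[k]}$ are replaced by the treated- and control-subgroup averages $\nkt^{-1}\sum_{i\in[k]} A_i$ and $\nkc^{-1}\sum_{i\in[k]}(1-A_i)$. For CCOV all three identities are immediate, since CCOV performs no imputation and hence $U_{i,\ccov}(\hat c)=U_{i,\ccov}(c)$ identically. It therefore suffices to treat $\dag=\imp$; the case $\dag=\mim$ follows at once, because the missing-indicator block of $U_{i,\mim}$ carries no imputed entries and so the perturbation acts only on the $X_{i,\imp}$ block exactly as in the IMP case.

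First I would record the algebraic identity supplied by Assumption~\ref{as5}: since $M_i(1)=M_i(0)=M_i$, the observed missing indicator equals the baseline $M_i$ irrespective of treatment, whence $U_{i,\imp}(\hat c)-U_{i,\imp}(c)=M_i\odot(\hat c-c)$ for every unit in either arm. Substituting this into the first display gives
\[
\frac{1}{\nkt}\sum_{i\in[k]}A_iU_{i,\imp}(\hat c)Y_i(1)=\frac{1}{\nkt}\sum_{i\in[k]}A_iU_{i,\imp}(c)Y_i(1)+\Bigl\{\frac{1}{\nkt}\sum_{i\in[k]}A_iM_iY_i(1)\Bigr\}\odot(\hat c-c),
\]
with the analogous decomposition for the control-group sum (replacing $A_i$ by $1-A_i$, $Y_i(1)$ by $Y_i(0)$, and $\nkt$ by $\nkc$). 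For the third display I would write $\bar U_{[k]a,\imp}(\hat c)\bar Y_{[k]a}-\bar U_{[k]a,\imp}(c)\bar Y_{[k]a}=\{\bar U_{[k]a,\imp}(\hat c)-\bar U_{[k]a,\imp}(c)\}\bar Y_{[k]a}$ and combine the subgroup analogue $\bar U_{[k]a,\imp}(\hat c)-\bar U_{[k]a,\imp}(c)=o_p(1)$ (again from the Hadamard-product identity) with $\bar Y_{[k]a}=O_p(1)$.

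The crux is to verify that each bracketed subgroup average, such as the $j$th component $\nkt^{-1}\sum_{i\in[k]}A_iM_{ij}Y_i(1)$, is $O_p(1)$, so that multiplying by $\hat c_j-c_j=o_p(1)$ forces the remainder to be $o_p(1)$. Here I would invoke Lemma~B.3 of \cite{Bugni:2018}, which under Assumptions~\ref{as1}--\ref{as5} delivers the convergence of treated- and control-subgroup sample averages within each stratum to their conditional expectations; the required integrability $E\{|M_{ij}Y_i(1)|\}<\infty$ follows from Cauchy--Schwarz, $|M_{ij}|\le 1$, and $E\{Y_i^2(a)\}<\infty$ in Assumption~\ref{as1}, while $\nkt,\nkc\to\infty$ at rate $n$ because $n_{[k]}/n\xrightarrow{P}p_{[k]}>0$ and $\pi_{n[k]}\xrightarrow{P}\pi\in(0,1)$. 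The hard part will not be any single calculation but the bookkeeping of applying the Bugni subgroup-average lemma separately to the two arms and of confirming that Assumption~\ref{as5} legitimately identifies the treatment-state-specific missing indicators with the common baseline $M_i$, which is precisely what lets the Hadamard perturbation factor out cleanly in both the treated and control sums.
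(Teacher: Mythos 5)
Your proposal is correct and follows essentially the same route as the paper's own proof: the same Hadamard-product identity $U_{i,\dag}(\hat c)-U_{i,\dag}(c)=M_i\odot(\hat c-c)$ licensed by Assumption~\ref{as5}, the same appeal to Lemma~B.3 of Bugni et al.\ (2018) to make the bracketed subgroup averages $O_p(1)$ (the paper establishes the first two displays inside the proof of Lemma~\ref{lm1} with $1/n$ normalization, and the passage to $1/n_{[k]1}$ rests exactly on your observation that $n/n_{[k]1}=O_p(1)$), and the same $o_p(1)\cdot O_p(1)$ factorization with $\bar Y_{[k]a}=O_p(1)$ for the third display. Your reduction of MIM to IMP via the unperturbed indicator block is also precisely the paper's argument.
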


\begin{proof}
    The first two statements have been proved in the proof of Lemma \ref{lm1}. For the third statement, we will only prove the result for $a=1$, as the proof for $a=0$ is similar.

    By Lemma B.3 of \cite{Bugni:2018}, we have $({1}/{n}) \sum_{i \in [k]} A_iY_i(1) \xrightarrow{P} \pi p_{[k]} E\{Y_i(1) \vert B_i = k\}$ and $\nkt / n = \pk \pi_{n[k]}$. Therefore, $\bar{Y}_{[k]1} = O_p(1)$. Moreover,
    \begin{align*}
        \bar{U}_{[k]1, \dag}(\hat{c}) - \bar{U}_{[k]1, \dag}({c}) = (1/\nkt)\sum_{i \in [k]} A_iM_i \odot (\hat c - c) = o_p(1),
    \end{align*}
    as $(1/\nkt)\sum_{i \in [k]} A_iM_i$ is bounded and $\hat c \xrightarrow{P} c$. Therefore, 
    $$\bar{U}_{[k]1, \dag}(\hat{c}) \bar{Y}_{[k]1} - \bar{U}_{[k]1, \dag}({c}) \bar{Y}_{[k]1} = \{\bar{U}_{[k]1, \dag}(\hat{c}) - \bar{U}_{[k]1, \dag}({c})\}\bar{Y}_{[k]1} = o_p(1) \cdot O_p(1) = o_p(1).$$
\end{proof}

\subsection{Proof of \Cref{th1}}

\begin{proof}

By the proof of Theorem 2 of \cite{Ma:2020}, with $X_i$ replaced by $U_{i, \dag}(\hat c)$ for ${\dag} \in \{\ccov,\imp,\mim\}$, $\hat{\tau}_{\adj, \dag}$ has the following form:
\[
	{\hat{\tau}_{\adj, \dag} = \sum_{k=1}^K \omega_{[k]}\bigl[\bar{Y}_{[k]1} - \bar{Y}_{[k]0} - \{\bar{U}_{[k]1, \dag}(\hat c) - \bar{U}_{[k]0, \dag}(\hat c)\}^\top\hat{\beta}_{\adj, U_{\dag}(\hat c)}\bigr],}
\]
\noindent where 
\begin{align*}
	\hat{\beta}_{\adj, {U_{\dag}(\hat c)}} =& \biggl\{\hat{S}_{{U_{\dag}(\hat c)U_{\dag}(\hat c)}}^{\adj} - \hat{\tau}_{\adj, U(\hat c), \dag}\hat{\tau}_{\adj, U(\hat c), \dag}^\top \sum_{k=1}^K \pi_{n[k]}(1-\pi_{n[k]})p_{n[k]}\biggr\}^{-1} \\
    &\biggl\{\hat{S}_{U_{\dag}(\hat c)Y}^{\adj}- \hat{\tau}_{\adj} \hat{\tau}_{\adj, U(\hat c), \dag} \sum_{k=1}^K \pi_{n[k]}(1-\pi_{n[k]})p_{n[k]}  \biggr\},
\end{align*}
\[
    \omega_{[k]} = \frac{\pi_{n[k]}(1-\pi_{n[k]})p_{n[k]}}{\sum_{k^{\prime}=1}^K \pi_{n[k^{\prime}]}(1-\pi_{n[k^{\prime}]})p_{n[k^{\prime}]}}, \quad \hat{\tau}_{\adj, U(\hat c), \dag} = \sum_{k=1}^K \omega_{[k]}\{\bar{U}_{[k]1, \dag}(\hat c) - \bar{U}_{[k]0, \dag}(\hat c)\},
\]
\[
	\hat{S}_{U_{\dag}(\hat c)U_{\dag}(\hat c)}^{\adj} = \frac{1}{n} \sum_{k=1}^K \sum_{i \in [k]} \{U_{i, \dag}(\hat c) - \bar{U}_{[k], \dag}(\hat c)\}\{U_{i, \dag}(\hat c) - \bar{U}_{[k], \dag}(\hat c)\}^\top,
\]
\[
    \hat{S}_{U_{\dag}(\hat c)Y}^{\adj} = \frac{1}{n} \sum_{k=1}^K \sum_{i \in [k]} \{U_{i, \dag}(\hat c) - \bar{U}_{[k], \dag}(\hat c)\}{(Y_i - \bar{Y}_{[k]})}.
\]

First, we prove Lemma~\ref{lm4} below.
\begin{lemma}
\label{lm4}
Under Assumptions~\ref{as1}--\ref{as5} and $\hat{c} \xrightarrow {P} c$, for $\dag \in \{\ccov, \imp, \mim\}$, we have
\[
	\hat{S}_{U_{\dag}(\hat c)U_{\dag}(\hat c)}^{\adj} - \Sigma_{\tilde{U}_{\dag}(c)\tilde{U}_{\dag}(c)} = o_p(1), 
\]
\[ 
    \hat{S}_{U_{\dag}(\hat c)Y}^{\adj} - \{\pi\Sigma_{\tilde{U}_{\dag}(c)\tilde{Y}(1)} + (1-\pi)\Sigma_{\tilde{U}(c)_{\dag}\tilde{Y}(0)}\} = o_p(1),
\]
\[ 
\hat{\beta}_{\adj, U_{\dag}(\hat c)} - {\beta}_{U_{\dag}(c)} = o_p(1) \textnormal{ for } \pi = 1/2. 
\]
\end{lemma}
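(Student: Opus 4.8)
The plan is to establish the three displays in order, in each case first replacing the estimated imputation constant $\hat c$ by its probability limit $c$ and then computing the population limit via the stratum-wise law of large numbers for covariate-adaptive designs (Lemma B.3 of \cite{Bugni:2018}). The reduction is immediate and uniform in $\dag \in \{\ccov,\imp,\mim\}$: writing the pooled sums in unpooled form, $\hat{S}_{U_{\dag}(\hat c)U_{\dag}(\hat c)}^{\adj} = (1/n)\sum_{k}\sum_{i\in[k]} U_{i,\dag}(\hat c)U_{i,\dag}(\hat c)^\top - \sum_k p_{n[k]}\bar U_{[k],\dag}(\hat c)\bar U_{[k],\dag}(\hat c)^\top$ and analogously for $\hat{S}_{U_{\dag}(\hat c)Y}^{\adj}$, \Cref{lm1} lets me substitute $c$ for $\hat c$ in every term at the cost of an additive $o_p(1)$. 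It therefore suffices to identify the limits of $\hat{S}_{U_{\dag}(c)U_{\dag}(c)}^{\adj}$ and $\hat{S}_{U_{\dag}(c)Y}^{\adj}$.

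For the first display I apply Lemma B.3 of \cite{Bugni:2018} stratum by stratum: since $U_\dag(c)$ is a baseline quantity unaffected by treatment, $(1/n)\sum_{i\in[k]}U_{i,\dag}(c)U_{i,\dag}(c)^\top \xrightarrow{P} p_{[k]}E\{U_\dag(c)U_\dag(c)^\top\mid B=k\}$, $\bar U_{[k],\dag}(c)\xrightarrow{P}E\{U_\dag(c)\mid B=k\}$, and $p_{n[k]}\xrightarrow{P}p_{[k]}$. Subtracting the product of means within each stratum gives $p_{[k]}\var\{U_\dag(c)\mid B=k\}$, and summing over $k$ yields $E[\var\{U_\dag(c)\mid B\}]=\Sigma_{\tilde U_\dag(c)\tilde U_\dag(c)}$. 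For the second display the only new ingredient is $Y_i=A_iY_i(1)+(1-A_i)Y_i(0)$; here Lemma B.3 gives $(1/n)\sum_{i\in[k]}U_{i,\dag}(c)Y_i\xrightarrow{P}p_{[k]}[\pi E\{U_\dag(c)Y(1)\mid B=k\}+(1-\pi)E\{U_\dag(c)Y(0)\mid B=k\}]$ and $\bar Y_{[k]}\xrightarrow{P}\pi E\{Y(1)\mid B=k\}+(1-\pi)E\{Y(0)\mid B=k\}$, so after centering the $k$th stratum contributes $p_{[k]}[\pi\cov\{U_\dag(c),Y(1)\mid B=k\}+(1-\pi)\cov\{U_\dag(c),Y(0)\mid B=k\}]$, and summation produces exactly $\pi\Sigma_{\tilde U_\dag(c)\tilde Y(1)}+(1-\pi)\Sigma_{\tilde U_\dag(c)\tilde Y(0)}$.

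For the third display I first dispose of the two rank-one correction terms in the formula for $\hat\beta_{\adj,U_\dag(\hat c)}$. By \Cref{lm2}, $\bar U_{[k]1,\dag}(\hat c)-\bar U_{[k]0,\dag}(\hat c)=\bar U_{[k]1,\dag}(c)-\bar U_{[k]0,\dag}(c)+o_p(n^{-1/2})$, and by Theorem 4.1 of \cite{Bugni:2018} the right-hand side is $O_p(n^{-1/2})=o_p(1)$; since the weights $\omega_{[k]}$ are bounded, $\hat\tau_{\adj,U(\hat c),\dag}=\sum_k\omega_{[k]}\{\bar U_{[k]1,\dag}(\hat c)-\bar U_{[k]0,\dag}(\hat c)\}=o_p(1)$, while $\hat\tau_\adj=O_p(1)$ and $\sum_k\pi_{n[k]}(1-\pi_{n[k]})p_{n[k]}=O_p(1)$. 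Hence both correction terms are $o_p(1)$, and combining with the first two displays via Slutsky and the continuous mapping theorem (the matrix inverse being continuous at the strictly positive-definite $\Sigma_{\tilde U_\dag(c)\tilde U_\dag(c)}$) gives
$$\hat\beta_{\adj,U_\dag(\hat c)}\xrightarrow{P}\Sigma_{\tilde U_\dag(c)\tilde U_\dag(c)}^{-1}\{\pi\Sigma_{\tilde U_\dag(c)\tilde Y(1)}+(1-\pi)\Sigma_{\tilde U_\dag(c)\tilde Y(0)}\}.$$
Comparing with $\beta_{U_\dag(c)}=\Sigma_{\tilde U_\dag(c)\tilde U_\dag(c)}^{-1}\{(1-\pi)\Sigma_{\tilde U_\dag(c)\tilde Y(1)}+\pi\Sigma_{\tilde U_\dag(c)\tilde Y(0)}\}$, the two limits coincide precisely when $\pi=1-\pi$, which explains the restriction to $\pi=1/2$.

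The step I expect to be most delicate is the bookkeeping in the second display. Because the design is covariate-adaptive rather than simple, the treated counts $\nkt$ are random and the imbalances $D_{n[k]}$ are only $O_p(\sqrt n)$, so I must invoke Lemma B.3 of \cite{Bugni:2018} (which is tailored to exactly this setting) rather than an ordinary law of large numbers, and carefully track the asymmetric weights $\pi$ and $1-\pi$ attached to the $Y(1)$ and $Y(0)$ contributions. It is precisely this asymmetry, absent in $\beta_{U_\dag(c)}$, that forces the $\pi=1/2$ condition in the final statement.
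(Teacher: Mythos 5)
Your proof is correct, and it follows the same overall skeleton as the paper's: reduce $\hat c$ to $c$ via \Cref{lm1} and \Cref{lm2}, identify the probability limits of $\hat{S}_{U_{\dag}(\hat c)U_{\dag}(\hat c)}^{\adj}$ and $\hat{S}_{U_{\dag}(\hat c)Y}^{\adj}$, kill the two rank-one correction terms in $\hat{\beta}_{\adj, U_{\dag}(\hat c)}$, and conclude by continuity of the matrix inverse at the strictly positive-definite $\Sigma_{\tilde{U}_{\dag}(c)\tilde{U}_{\dag}(c)}$. Where you differ is in the middle step: the paper does not compute the limits from scratch but outsources them to Lemma 6 of Ma (2020), which requires first verifying that $U_{i,\dag}(c)$ is i.i.d.\ with finite second moments for IMP and MIM (a short Minkowski-inequality argument under Assumption~\ref{as5}), whereas you recompute the limits directly via the stratum-wise law of large numbers (Lemma B.3 of Bugni et al.\ 2018), splitting $Y_i = A_iY_i(1)+(1-A_i)Y_i(0)$ and tracking the $\pi$ and $1-\pi$ weights explicitly; likewise, for $\hat{\tau}_{\adj, U(\hat c), \dag}$ the paper invokes Proposition 2 of Ma (2020) to get the $O_p(n^{-1/2})$ rate, while you get the same rate from Theorem 4.1 of Bugni et al.\ (2018), exactly as the paper itself does in the proof of \Cref{th2}. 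Your route is more self-contained and has one genuine expository advantage: by computing the general-$\pi$ limit $\Sigma_{\tilde{U}_{\dag}(c)\tilde{U}_{\dag}(c)}^{-1}\{\pi\Sigma_{\tilde{U}_{\dag}(c)\tilde{Y}(1)}+(1-\pi)\Sigma_{\tilde{U}_{\dag}(c)\tilde{Y}(0)}\}$ and comparing it with $\beta_{U_{\dag}(c)}$, whose weights are swapped, you make transparent why the third display needs $\pi = 1/2$ --- something the paper leaves implicit. What the paper's delegation buys in exchange is brevity and reuse: the moment verification plugs $U_{i,\dag}(c)$ directly into an established result, and the $O_p(n^{-1/2})$ rate from Proposition 2 of Ma (2020) is the exact ingredient recycled immediately afterward in the expansion of $\hat{\tau}_{\adj,\dag}$ in the main proof of \Cref{th1}. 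No gap: your $o_p(1)$ bounds on the correction terms suffice for the lemma as stated, and your appeals to Assumption~\ref{as5} (so that $U_{i,\dag}(c)$ and $M_i$ are baseline quantities with zero treatment effect) are made at the right places.
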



\begin{proof}[Proof of Lemma~\ref{lm4}]

These results are similar to those of Lemma 6 of \cite{Ma:2020}, with the key difference being the imputation of $\hat c$. Note that the proof of Lemma 6 in \cite{Ma:2020} is invariant if we replace $X_i$ with $U_{i, \dag}(c)$ if $U_{i, \dag}(c)$ are i.i.d. with finite second-order moments. For CCOV, $U_{i, \ccov}(\hat c)=X_{i,\ccov}$, the requirements are clearly satisfied. For IMP and MIM, we consider two cases:
    
    {\bf Case 1.} We interpolate a constant $c$, i.e., $\hat c = c$. For IMP, under Assumption \ref{as5}, we have $(X_i, M_i)$ are i.i.d.. Note that $U_{i,\imp}(c) = X_i \odot (1_p-M_i) + c \odot M_i$, where $\odot$ denotes the Hadamard product, $1_p$ is a $p$-dimensional vector with all elemets equal to $1$. Therefore, $U_{i,\imp}(c)$ are i.i.d.. Applying the Minkovski's inequality, we have
\[
	\| X_i \odot (1_p-M_i) + c \odot M_i \|_2 \leq \| X_i \odot (1_p-M_i) \|_2 + \| c \odot M_i \|_2 \leq \| X_i \|_2 + \|c \|_2 < \infty,
\]
where $\|\cdot\|_2$ denotes the $L_2$ norm of a random variable.

	For MIM, the i.i.d. of $U_{i, \min}(c)$ can be derived similarly to IMP, and $U_{i,\mim}(c)$ has a finite second-order moment because all elements of $M_i$ take values in $\{0, 1\}$.

 {\bf Case 2.} We use $\hat{c}$ to impute the unobserved value. By Lemma~\ref{lm1} and $\nk / n \xrightarrow{P} \pk >0$, we have
\begin{align*}
    & \frac{1}{n} \sum_{i \in [k]} \{U_{i, \dag}(\hat c) - \bar{U}_{[k], \dag}(\hat c)\}\{U_{i, \dag}(\hat c) - \bar{U}_{[k], \dag}(\hat c)\}^\top \\
    = & \frac{1}{n}  \sum_{i \in [k]} U_{i, \dag}({\hat c})\{U_{i, \dag}(\hat c)\}^\top - \frac{\nk}{n} \bar{U}_{[k], \dag}(\hat c) \{\bar{U}_{[k], \dag}(\hat c)\}^\top \\
    = & \frac{1}{n}  \sum_{i \in [k]} U_{i, \dag}({c})\{U_{i, \dag}(c)\}^\top - \frac{\nk}{n} \bar{U}_{[k], \dag}(c) \{\bar{U}_{[k], \dag}(c)\}^\top + o_p(1). 
\end{align*}

As the number of strata $K$ is fixed, we have
\begin{align*}
 \hat{S}_{U_{\dag}(\hat{c})U_{\dag}(\hat{c})}^{\adj} = & \frac{1}{n} \sum_{k=1}^K \sum_{i \in [k]} \{U_{i, \dag}(\hat c) - \bar{U}_{[k], \dag}(\hat c)\}\{U_{i, \dag}(\hat c) - \bar{U}_{[k], \dag}(\hat c)\}^\top  \\
    = & \frac{1}{n} \sum_{k=1}^K \sum_{i \in [k]} \{U_{i, \dag}(c) - \bar{U}_{[k], \dag}(c)\}\{U_{i, \dag}(c) - \bar{U}_{[k], \dag}(c)\}^\top + + o_p(1) \\
    = & \hat{S}_{U_{\dag}(c)U_{\dag}(c)}^{\adj}  + o_p(1). 
\end{align*}

In Case 1, we have $\hat{S}_{U_{\dag}(c)U_{\dag}(c)}^{\adj} - \Sigma_{\tilde{U}_{\dag}(c)\tilde{U}_{\dag}(c)} = o_p(1)$. Therefore, $\hat{S}_{U_{\dag}(\hat{c})U_{\dag}(\hat{c})}^{\adj} - \Sigma_{\tilde{U}_{\dag}(c)\tilde{U}_{\dag}(c)} = o_p(1)$. Similarly, we have $\hat{S}_{U_{\dag}(\hat c)Y}^{\adj} - \{\pi\Sigma_{\tilde{U}_{\dag}(c)\tilde{Y}(1)} + (1-\pi)\Sigma_{\tilde{U}_{\dag}(c)\tilde{Y}(0)}\} = o_p(1)$.

By Proposition 2 of \cite{Ma:2020}, we have $\hat{\tau}_{\adj} \xrightarrow{P} \tau$. When the imputation value is a constant $c$, applying Proposition 2 of \cite{Ma:2020} to $U_{i, \dag}(c)$, we have $\hat{\tau}_{\adj, U(c), \dag} = O_p(n^{-1/2})$. When the imputation value is $\hat{c}$, then by Lemma \ref{lm2}, we have $\bar{U}_{[k]1, \dag}(\hat{c}) - \bar{U}_{[k]0, \dag}(\hat{c}) = \bar{U}_{[k]1, \dag}({c}) - \bar{U}_{[k]0, \dag}({c})+ o_p(n^{-1/2})$ for each stratum $k$, and $\omega_{[k]}$ have finite probability limits. Therefore, $\hat{\tau}_{\adj, U(\hat c), \dag} - \hat{\tau}_{\adj, U(c), \dag} = o_p(n^{-1/2})$, and thus $\hat{\tau}_{\adj, U(\hat c), \dag} = O_p(n^{-1/2})$. Combined with $\pi_{n[k]}(1-\pi_{n[k]})p_{n[k]}$ and $\omega_{[k]}$ have finite probability limits, we have {$\hat{\beta}_{\adj, U_{\dag}(\hat c)} - {\beta}_{U_{\dag}(c)} = o_p(1)$.}

\end{proof}

Now, we can continue the proof of \Cref{th1}. By \Cref{lm2}, \Cref{lm4}, and the fact that $\omega_{[k]}$ have finite probability limits, we have
\begin{align}
\hat{\tau}_{\adj, \dag} &= \sum_{k=1}^K \omega_{[k]}\bigl[\bar{Y}_{[k]1} - \bar{Y}_{[k]0} - \{\bar{U}_{[k]1, \dag}(c) - \bar{U}_{[k]0, \dag}(c)\}^\top {\beta}_{U_{\dag}(c)}\bigr] + o_p(n^{-1/2}) \notag \\
& = \check{\tau}_{\adj, \dag} + o_p(n^{-1/2}), \notag
\end{align}
where $\check{\tau}_{\adj, \dag} = \sum_{k=1}^K \omega_{[k]}[\bar{Y}_{[k]1} - \bar{Y}_{[k]0} - \{\bar{U}_{[k]1, \dag}(c) - \bar{U}_{[k]0, \dag}(c)\}^\top {\beta}_{U_{\dag}(c)}]$. Applying Theorem 2 of \cite{Ma:2020} with $\pi= 1/2$, we have
\[
	\sqrt{n}(\check{\tau}_{\adj, \dag}-\tau_{\dag})\xrightarrow{d} N(0,\varsigma_{r_{\dag}}^2 (\pi) + \varsigma_{H{r_{\dag}}}^2),
\]
\noindent where $\tau_{\dag} = E\{r_{i, \dag}(1)-r_{i, \dag}(0)\} = E\{Y_i(1)-Y_i(0)\} = \tau$. By Slutsky's theorem, when $\pi=1/2$, we have 
\[
	\sqrt{n}(\hat{\tau}_{\adj, \dag}-\tau)\xrightarrow{d} N(0,\varsigma_{r_{\dag}}^2 (\pi) + \varsigma_{H{r_{\dag}}}^2).
\]

    Next, we prove the consistency of the OLS variance estimator. Denote $Z_{\adj}$ as the $n \times (K+p_{\dag}+1)$ design matrix whose $i$th row is
\[
    (A_i, 1-A_i, I_{i \in [1]}-p_{n[1]}, \cdots, I_{i \in [K-1]}-p_{n[K-1]}, \{U_{i,\dag}(\hat c) - \bar{U}_{\dag}(\hat c)\}^\top).
\]
    By the property of OLS, $\hat{\sigma}_{\adj, \dag}^{2}$ is the $(1, 1)+(2, 2)-2(1, 2)$ element of $\{(1/n)Z_{\adj}^{\top}Z_{\adj}\}^{-1}$ (the $(1, 1)$th element plus the $(2, 2)$th elememt minus 2 times the $(1, 2)$th element of the matrix) times
\[
    \frac{1}{n-K-p_{\dag}-1} \sum_{k=1}^K \sum_{i \in [k]} \{Y_i-\bar{Y}_{[k]}-\hat{\tau}_{\adj, \dag}(A_i - \pi_{n[k]})-\{U_{i,\dag}(\hat c)-\bar{U}_{[k], \dag}(\hat c)\}^{\top}\hat{\beta}_{\adj, U_{\dag}(\hat c)}\}^2,
\]
\noindent where $p_{\dag}$ is the dimension of $U_{i,\dag}(\hat c)$. By Theorem 2 of \cite{Ma:2020}, we have  $$\{(1/n)Z_{\adj}^\top Z_{\adj}\}^{-1}_{(1, 1)+(2, 2)-2(1, 2)} \xrightarrow[]{P} \{\pi(1-\pi)\}^{-1}.$$ 

Next, we prove that
\[
    \frac{1}{n} \sum_{k=1}^K \sum_{i \in [k]} \bigl[Y_i-\bar{Y}_{[k]}-\hat{\tau}_{\adj, \dag}(A_i - \pi_{n[k]})-\{U_{i,\dag}(\hat c)-\bar{U}_{[k], \dag}(\hat c)\}\top\hat{\beta}_{\adj, U_{\dag}(\hat c)}\bigr]^2 \xrightarrow{P} \pi(1-\pi)\{\varsigma_{r_{\dag}}^2 (1-\pi) + \varsigma_{H{r_{\dag}}}^2\}.
\]
    When $\hat c = c$, applying Theorem 2 of \cite{Ma:2020} with $X_i$ replaced by $U_{i,\dag}(c)$ immediately gives the result. Otherwise, we need to prove that 
\begin{align}
    &\frac{1}{n} \sum_{k=1}^K \sum_{i \in [k]} \bigl[Y_i-\bar{Y}_{[k]}-\hat{\tau}_{\adj, \dag}(A_i - \pi_{n[k]})-\{U_{i,\dag}(\hat c)-\bar{U}_{[k], \dag}(\hat c)\}^{\top}\hat{\beta}_{\adj, U_{\dag}(\hat c)}\bigr]^2 \notag \\ 
    &= \frac{1}{n} \sum_{k=1}^K \sum_{i \in [k]} \bigl[Y_i-\bar{Y}_{[k]}-\hat{\tau}_{\adj, \dag}(A_i - \pi_{n[k]})-\{U_{i,\dag}(c)-\bar{U}_{[k], \dag}(c)\}^{\top}\hat{\beta}_{\adj, U_{\dag}(c)}\bigr]^2 + o_p(1). \notag
\end{align}

    Note that 
\begin{align}
    &\frac{1}{n} \sum_{k=1}^K \sum_{i \in [k]} \bigl[Y_i-\bar{Y}_{[k]}-\hat{\tau}_{\adj, \dag}(A_i - \pi_{n[k]})-\{U_{i,\dag}(\hat c)-\bar{U}_{[k], \dag}(\hat c)\}^{\top}\hat{\beta}_{\adj, U_{\dag}(\hat c)}\bigr]^2 \notag \\
    =& \frac{1}{n} \sum_{k=1}^K \sum_{i \in [k]} \bigl[Y_i-\bar{Y}_{[k]}-\hat{\tau}_{\adj, \dag}(A_i - \pi_{n[k]})-\{U_{i,\dag}(c)-\bar{U}_{[k], \dag}(c)\}^{\top}\hat{\beta}_{\adj, U_{\dag}(c)}\bigr]^2 \notag \\
    & + \frac{1}{n} \sum_{k=1}^K \sum_{i \in [k]} \bigl[\{U_{i,\dag}(\hat c)-\bar{U}_{[k], \dag}(\hat c)\}^{\top}\hat{\beta}_{\adj, U_{\dag}(\hat c)}-\{U_{i,\dag}(c)-\bar{U}_{[k], \dag}(c)\}^{\top}\hat{\beta}_{\adj, U_{\dag}(c)}\bigr]^2 \notag \\
    & + \frac{2}{n} \sum_{k=1}^K \sum_{i \in [k]} \bigl[Y_i-\bar{Y}_{[k]}-\hat{\tau}_{\adj, \dag}(A_i - \pi_{n[k]})-\{U_{i,\dag}(c)-\bar{U}_{[k], \dag}(c)\}^{\top}\hat{\beta}_{\adj, U_{\dag}(c)}\bigr] \notag \\
    &~~~~~~~~~~~~\quad \bigl[\{U_{i,\dag}(\hat c)-\bar{U}_{[k], \dag}(\hat c)\}^{\top}\hat{\beta}_{\adj, U_{\dag}(\hat c)}-\{U_{i,\dag}(c)-\bar{U}_{[k], \dag}(c)\}^{\top}\hat{\beta}_{\adj, U_{\dag}(c)}\bigr].\notag 
\end{align}
The first term is $O_p(1)$ by Theorem 2 of \cite{Ma:2020}. If the second term is $o_p(1)$, then the third term will be $o_p(1)$ due to Cauchy-Schwarz inequality, and the desired result holds. Therefore, we only need to prove that the second term is $o_p(1)$. In fact, 
\begin{align}
    &\frac{1}{n} \sum_{k=1}^K \sum_{i \in [k]} \bigl[\{U_{i,\dag}(\hat c)-\bar{U}_{[k], \dag}(\hat c)\}^{\top}\hat{\beta}_{\adj, U_{\dag}(\hat c)}-\{U_{i,\dag}(c)-\bar{U}_{[k], \dag}(c)\}^{\top}\hat{\beta}_{\adj, U_{\dag}(c)}\bigr]^2 \notag \\
    =&\frac{1}{n} \sum_{k=1}^K \sum_{i \in [k]} \bigl[\{U_{i,\dag}(c)-\bar{U}_{[k], \dag}(c)\}^{\top}\hat{\beta}_{\adj, U_{\dag}(\hat c)}-\{U_{i,\dag}(c)-\bar{U}_{[k], \dag}(c)\}^{\top}\hat{\beta}_{\adj, U_{\dag}(c)}\bigr]^2 \notag \\
    & + \frac{1}{n} \sum_{k=1}^K \sum_{i \in [k]} \bigl[\{U_{i,\dag}(\hat c)-\bar{U}_{[k], \dag}(\hat c)\}^{\top}\hat{\beta}_{\adj, U_{\dag}(\hat c)}-\{U_{i,\dag}(c)-\bar{U}_{[k], \dag}(c)\}^{\top}\hat{\beta}_{\adj, U_{\dag}(\hat c)}\bigr]^2 \notag \\
    & + \frac{2}{n} \sum_{k=1}^K \sum_{i \in [k]} \bigl[\{U_{i,\dag}(c)-\bar{U}_{[k], \dag}(c)\}^{\top}\hat{\beta}_{\adj, U_{\dag}(\hat c)}-\{U_{i,\dag}(c)-\bar{U}_{[k], \dag}(c)\}^{\top}\hat{\beta}_{\adj, U_{\dag}(c)}\bigr]\notag \\
    &~~~~~~~~~~~~\quad \bigl[\{U_{i,\dag}(\hat c)-\bar{U}_{[k], \dag}(\hat c)\}^{\top}\hat{\beta}_{\adj, U_{\dag}(\hat c)}-\{U_{i,\dag}(c)-\bar{U}_{[k], \dag}(c)\}^{\top}\hat{\beta}_{\adj, U_{\dag}(\hat c)}\bigr].\notag 
\end{align}   
The first term is $o_p(1)$ by \Cref{lm2} and \Cref{lm4}, and the second term is $o_p(1)$ by \Cref{lm4}. By Cauchy--Schwarz inequality, the whole term is $o_p(1)$. Therefore, $\hat{\sigma}_{\adj, \dag}^{2} \xrightarrow{P} \varsigma_{r_{\dag}}^2 (1-\pi) + \varsigma_{H{r_{\dag}}}^2 = \sigma^2_{\dag} $ when $\pi = 1/2$.

Finally, the relationship between different asymptotic variances ($\sigma_{\mim}^2  \leq \sigma_{\imp}^2  \leq  
 \sigma_{\ccov}^2 \leq \sigma_{\adj}^2$) can be derived from \Cref{th2}, since when $\pi = 1/2$, the asymptotic variances  are the same as those in \Cref{th2}. We will prove this result in the following proof of \Cref{th2}.
	

\end{proof}

\subsection{Proof of \Cref{th2}}

\begin{proof}

By the proof of Theorem 3 of \cite{Ma:2020}, with $X_i$ replaced by $U_{i, \dag}(\hat c)$ for ${\dag} \in \{\ccov,\imp,\mim\}$, $\hat{\tau}_{\int, \dag}$ has the following expression:
\begin{align*}
   \hat{\tau}_{\int, \dag} = \sum_{k=1}^K p_{n[k]} \bigg\{ & \bigl[\bar{Y}_{[k]1} - \{\bar{U}_{[k]1, \dag}(\hat c) - \bar{U}_{[k], \dag}(\hat c)\}^\top \hat{\beta}_{U_{\dag}(\hat c)}(1)\bigr] \\
   &- \bigl[\bar{Y}_{[k]0} - \{\bar{U}_{[k]0, \dag}(\hat c) - \bar{U}_{[k], \dag}(\hat c)\}^\top \hat{\beta}_{U_{\dag}(\hat c)}(0)\bigr]\biggr\}, 
\end{align*}
where $\hat{\beta}_{U_{\dag}(\hat c)}(a) = {{S}_{\tilde{U}_{\dag}(\hat c)\tilde{U}_{\dag}(\hat c)}^{-1}}(a){S}_{\tilde{U}_{\dag}(\hat c)\tilde{Y}(a)}$, $a=0,1$, with
\[
	{S}_{\tilde{U}_{\dag}(\hat c)\tilde{U}_{\dag}(\hat c)}(1) = \frac{1}{n_1} \sum_{k=1}^K \sum_{i \in [k]} A_i \{U_{i, \dag}(\hat c) - \bar{U}_{[k]1, \dag}(\hat c)\}\{U_{i, \dag}(\hat c) - \bar{U}_{[k]1, \dag}(\hat c)\}^\top, 
\]
\[
	{S}_{\tilde{U}_{\dag}(\hat c)\tilde{U}_{\dag}(\hat c)}(0) = \frac{1}{n_0} \sum_{k=1}^K \sum_{i \in [k]} (1-A_i) \{U_{i, \dag}(\hat c) - \bar{U}_{[k]0, \dag}(\hat c)\}\{U_{i, \dag}(\hat c) - \bar{U}_{[k]0, \dag}(\hat c)\}^\top,
\]
\[
	{S}_{\tilde{U}_{\dag}(\hat c)\tilde{Y}(1)} = \frac{1}{n_1} \sum_{k=1}^K \sum_{i \in [k]} A_i \{U_{i, \dag}(\hat c) - \bar{U}_{[k]1, \dag}(\hat c)\}{(Y_i - \bar{Y}_{[k]1})}, \]
\[ {S}_{\tilde{U}_{\dag}(\hat c)\tilde{Y}(0)} = \frac{1}{n_0} \sum_{k=1}^K \sum_{i \in [k]} (1-A_i) \{U_{i, \dag}(\hat c) - \bar{U}_{[k]0, \dag}(\hat c)\}{(Y_i - \bar{Y}_{[k]0})}.
\]

	Next, we prove that
\[
	{S}_{\tilde{U}_{\dag}(\hat c)\tilde{U}_{\dag}(\hat c)}(a) - \Sigma_{\tilde{U}_{\dag}(c)\tilde{U}_{\dag}(c)} = o_p(1),~~ {S}_{\tilde{U}_{\dag}(\hat c)\tilde{Y}(a)} - \Sigma_{\tilde{U}_{\dag}(c)\tilde{Y}(a)} = o_p(1),~~\hat{\beta}_{U_{\dag}(\hat c)}(a) - {\beta}_{U_{\dag}(c)}(a) = o_p(1).
\]

We will only prove the results for $a=1$, as the proof for $a=0$ is similar. When the imputation value is a constant $c$, the above statements are direct results of Lemma 7 of \cite{Ma:2020}. Thus, we only need to prove that these conclusions hold with $c$ replaced by $\hat c$. Note that
$$
{S}_{\tilde{U}_{\dag}(\hat c)\tilde{U}_{\dag}(\hat c)}(1) = \sumk \frac{\nkt}{n_1} \frac{1}{\nkt} \sum_{i \in [k]} A_iU_{i, \dag}({\hat c})\{U_{i, \dag}(\hat c)\}^\top - \sumk \frac{\nkt}{n_1} \bar{U}_{[k]1, \dag}(\hat c) \{\bar{U}_{[k]1, \dag}(\hat c)\}^\top.
$$

By \Cref{lm2}, \Cref{lm3}, and the fact that ${\nkt} / {n_1} - \pk = o_p(1)$, we have
\begin{align}
{S}_{\tilde{U}_{\dag}(\hat c)\tilde{U}_{\dag}(\hat c)}(1) &= \sumk \frac{\nkt}{n_1} \frac{1}{\nkt} \sum_{i \in [k]} A_iU_{i, \dag}({c})\{U_{i, \dag}(c)\}^\top - \sumk \frac{\nkt}{n_1} \bar{U}_{[k]1, \dag}(c) \{\bar{U}_{[k]1, \dag}(c)\}^\top + o_p(1)  \notag \\
&= {S}_{\tilde{U}_{\dag}(c)\tilde{U}_{\dag}(c)}(1)  + o_p(1). \notag 
\end{align}

Combined with the fact that ${S}_{\tilde{U}_{\dag}(c)\tilde{U}_{\dag}(c)}(1) - \Sigma_{\tilde{U}_{\dag}(c)\tilde{U}_{\dag}(c)} = o_p(1)$, we have ${S}_{\tilde{U}_{\dag}(\hat c)\tilde{U}_{\dag}(\hat c)}(1) - \Sigma_{\tilde{U}_{\dag}(c)\tilde{U}_{\dag}(c)} = o_p(1)$. Similarly, we have ${S}_{\tilde{U}_{\dag}(\hat c)\tilde{Y}(a)} - \Sigma_{\tilde{U}_{\dag}(c)\tilde{Y}(a)} = o_p(1)$. The third statement is a direct result of the first two statements.

Using these properties and Lemma \ref{lm2}, we have 
\begin{align}
\hat{\tau}_{\int, \dag} =& \sum_{k=1}^K p_{n[k]} \bigg\{\bigl[\bar{Y}_{[k]1} - \{\bar{U}_{[k]1, \dag}(c) - \bar{U}_{[k], \dag}(c)\}^\top {\hat{\beta}_{U_{\dag}(c)}(1)}\bigr] - \notag \\
& \{\bar{Y}_{[k]0} - \bigl[\bar{U}_{[k]0, \dag}(c) - \bar{U}_{[k], \dag}(c)\}^\top \hat{\beta}_{U_{\dag}}(0)\bigr]\biggr\} + o_p(n^{-1/2}) \notag \\
=& \sum_{k=1}^K p_{n[k]} \biggl\{\bar{Y}_{[k]1} - \bar{Y}_{[k]0} - \{\bar{U}_{[k]1, \dag}(c) - \bar{U}_{[k]0, \dag}(c)\}^\top \bigl[(1-\pi_{n[k]})\hat{\beta}_{U_{\dag}(c)}(1) + \pi_{n[k]} \hat{\beta}_{U_{\dag}(c)}(0)\bigr]\biggr\} \notag \\
&+ o_p(n^{-1/2}) \notag \\
=& \sum_{k=1}^K p_{n[k]} \bigl[\bar{Y}_{[k]1} - \bar{Y}_{[k]0} - \{\bar{U}_{[k]1, \dag}(c) - \bar{U}_{[k]0, \dag}(c)\}^\top \beta_{U_{\dag}(c)}\bigr] + o_p(n^{-1/2}) \notag \\
&- \sum_{k=1}^K p_{n[k]}\{{\bar{U}_{[k]1, \dag}(c) - \bar{U}_{[k]0, \dag}(c)}\}^\top \{(1-\pi_{n[k]}){\hat{\beta}_{U_{\dag}(c)}(1)} + \pi_{n[k]} \hat{\beta}_{U_{\dag}(c)}(0) - {\beta}_{U_{\dag}(c)}\}. \notag 
\end{align}

    By Theorem 4.1 of \cite{Bugni:2018}, we have $\bar{U}_{[k]1, \dag}(c) - \bar{U}_{[k]0, \dag}(c) = O_p(n^{-1/2})$. By Lemma \ref{lm2}, the conclusion also holds if we replace $c$ with $\hat{c}$. Noting that $p_{n[k]}$ and $\pi_{n[k]}$ have finite probability limits and $\hat{\beta}_{U_{\dag}(\hat c)}(a) - {\beta}_{U_{\dag}(c)}(a) = o_p(1)$, we have
\[
	\hat{\tau}_{\int, \dag} = \sum_{k=1}^K p_{n[k]} \bigl[\bar{Y}_{[k]1} - \bar{Y}_{[k]0} - \{\bar{U}_{[k]1, \dag}(c) - \bar{U}_{[k]0, \dag}(c)\}^\top {\beta_{U_{\dag}(c)}}\bigr] + o_p(n^{-1/2}).
\]

	Denote the first item of the right-hand side as $\check{\tau}_{\int, \dag}$. By Proposition 3 of \cite{Ma:2020} and $E[\{U_{i, \dag}(c)\}(1)] - E\bigl[\{U_{i, \dag}(c)\}(0)\bigr] = 0$, we have 
\[
	\sqrt{n}(\check{\tau}_{\int, \dag}-\tau)\xrightarrow{d} N(0,\varsigma_{r_{\dag}}^2 (\pi) + \varsigma_{H{r_{\dag}}}^2).
\]
By Slutsky's theorem, we have
\[
	\sqrt{n}(\hat{\tau}_{\int, \dag}-\tau)\xrightarrow{d} N(0,\varsigma_{r_{\dag}}^2 (\pi) + \varsigma_{H{r_{\dag}}}^2).
\]

	Next, we prove the consistency of the variance estimator. When the interpolation value is a constant $c$, the result is a direct application of Theorem 3 of \cite{Ma:2020}. Otherwise, if $\sum_{k=1}^K ({p_{n[k]}}/{n_{[k]1}}) \sum_{i \in [k]} A_i \{U_{i,\dag}(\hat c) - \bar{U}_{[k]1, \dag}(\hat c)\}\{U_{i,\dag}(\hat c) - \bar{U}_{[k]1, \dag}(\hat c)\}^{\top} - \Sigma_{\tilde{U, \dag}^{}( c)\tilde{U, \dag}^{}(c)} = o_p(1)$ and $\hat \beta_{\int, U_{\dag}(\hat c) }(a) - \beta_{U_{\dag}(c) }(a) = o_p(1)$, then the imputation does not affect the asymptotic limit of the variance estimator, and thus, $\hat \sigma_{\int,\dag}^2 \xrightarrow{P} \sigma^2_{\dag}$. Therefore, we only need to verify these two conditions. For the first condition, note that when $\hat c = c$, it is a direct result of the weak law of large numbers under covariate-adaptive randomization. Otherwise, by \Cref{lm3}, we have
    \begin{align*}
        &\sum_{k=1}^K ({p_{n[k]}}/{n_{[k]1}}) \sum_{i \in [k]} A_i \{U_{i,\dag}(\hat c) - \bar{U}_{[k]1, \dag}(\hat c)\}\{U_{i,\dag}(\hat c) - \bar{U}_{[k]1, \dag}(\hat c)\}^{\top} \\
        =&\sum_{k=1}^K ({p_{n[k]}}/{n_{[k]1}}) \sum_{i \in [k]} A_i \{U_{i,\dag}(c) - \bar{U}_{[k]1, \dag}(c)\}\{U_{i,\dag}(c) - \bar{U}_{[k]1, \dag}(c)\}^{\top} + o_p(1).
    \end{align*}
The second condition $\hat{\beta}_{U_{\dag}(\hat c)}(a) - {\beta}_{U_{\dag}(c)}(a) = o_p(1)$ has been proved already. Therefore, the consistency of the variance estimator is proved.


Finally, we derive the order of different asymptotic variances.  
By definition, $r_{i, \dag}(a) = Y_i(a) - U_{i, \dag}(c)^\top \beta_{U_{\dag}(c)}$. Therefore,
\begin{align*}
\varsigma_{Hr_{\dag}}^2 &=E \big([ E\{ r_{i,\dag}(1) \mid B_i \} - E\{ r_{i,\dag}(1) \} ] - [ E\{ r_{i,\dag}(0) \mid B_i \} - E\{ r_{i,\dag}(0) \} ] \big)^2 \\
&= E\big(E\{Y_i(1) - Y_i(0) \vert B_i \} - E\{Y_i(1) - Y_i(0)\}\big)^2, 
\end{align*}
which is irrelevant to $U_{i, \dag}(c)$, and thus $\varsigma_{H{r_{\dag}}}^2$ is irrelevant to $U_{i, \dag}(c)$.

Recall that $\tilde V_i = V_i - E(V_i \mid B_i)$. For the first term in the asymptotic variance, we have
\begin{align}
\sigma_{r(a)}^2 &= \var\bigl[Y_i(a) - U_{i, \dag}(c)^\top\beta_{U_{\dag}(c)} - E\{Y_i(a) - U_{i, \dag}(c)^\top\beta_{U_{\dag}(c)} \vert B_i\}\bigr] \notag \\
&= E\bigl[Y_i(a) - U_{i, \dag}(c)^\top\beta_{U_{\dag}(c)} - E\{Y_i(a) - U_{i, \dag}(c)^\top\beta_{U_{\dag}(c)} \vert B_i\}\bigr]^2 \notag \\
&= E\{\tilde{Y}_i(a) - \tilde{U}_{i, \dag}(c)^\top\beta_{U_{\dag}(c)}\}^2 \notag \\
&= E\{\tilde{Y}_i(a)\}^2 - 2\cov\{\tilde{U}_{i, \dag}(c), Y_i(a)\}^\top\beta_{U_{\dag}(c)} + \beta_{U_{\dag}(c)}^\top\Sigma_{\tilde{U}_{\dag}(c)\tilde{U}_{\dag}(c)}\beta_{U_{\dag}(c)} \notag \\
&= E\{\tilde{Y}_i(a)\}^2 - 2\beta_{U_{\dag}(c)}(a)^\top\Sigma_{\tilde{U}_{\dag}(c)\tilde{U}_{\dag}(c)}\beta_{U_{\dag}(c)} + \beta_{U_{\dag}(c)}^\top\Sigma_{\tilde{U}_{\dag}(c)\tilde{U}_{\dag}(c)}\beta_{U_{\dag}(c)}. \notag
\end{align}
The first term is irrelevant to $U_{i, \dag}$, so we only need to consider the latter two terms. Simple calculation gives
\begin{align}
\varsigma_{r_{\dag}}^2 (\pi) =& \frac{1}{\pi}\beta_{U_{\dag}(c)}^\top\Sigma_{\tilde{U}_{\dag}(c)\tilde{U}_{\dag}(c)}\beta_{U_{\dag}(c)} - \frac{2}{\pi}\beta_{U_{\dag}(c)}(1)^\top\Sigma_{\tilde{U}_{\dag}(c)\tilde{U}_{\dag}(c)}\beta_{U_{\dag}(c)} + \frac{1}{1-\pi} \beta_{U_{\dag}(c)}^\top\Sigma_{\tilde{U}_{\dag}(c)\tilde{U}_{\dag}(c)}\beta_{U_{\dag}(c)} \notag \\
&- \frac{2}{1-\pi}\beta_{U_{\dag}(c)}(0)^\top\Sigma_{\tilde{U}_{\dag}(c)\tilde{U}_{\dag}(c)}\beta_{U_{\dag}(c)} + Const \notag \\
=& -\{\pi(1-\pi)\}^{-1}\beta_{U_{\dag}(c)}^\top\Sigma_{\tilde{U}_{\dag}(c)\tilde{U}_{\dag}(c)}\beta_{U_{\dag}(c)} + Const. \notag
\end{align}
\noindent where $Const$ is a quantity irrelevant to $U_{i, \dag}(c)$, and the last equation is due to $\beta_{U_{\dag}(c)} = (1-\pi)\beta_{U_{\dag}(c)}(1) + \pi\beta_{U_{\dag}(c)}(0)$. We can rewrite $\beta_{U_{\dag}(c)}^\top\Sigma_{\tilde{U}_{\dag}(c)\tilde{U}_{\dag}(c)}\beta_{U_{\dag}(c)}$ as
\[
	\beta_{U_{\dag}(c)}^\top\Sigma_{\tilde{U}_{\dag}(c)\tilde{U}_{\dag}(c)}\beta_{U_{\dag}(c)} = \{(1-\pi)\Sigma_{\tilde{U}_{\dag}(c)\tilde{Y}(1)} + \pi \Sigma_{\tilde{U}_{\dag}(c)\tilde{Y}(0)}\}^\top\Sigma_{\tilde{U}_{\dag}(c)\tilde{U}_{\dag}(c)}^{-1}\{(1-\pi)\Sigma_{\tilde{U}_{\dag}(c)\tilde{Y}(1)} + \pi \Sigma_{\tilde{U}_{\dag}(c)\tilde{Y}(0)}\}.
\]

It suffices for $\sigma_{\mim}^2  \leq \sigma_{\imp}^2  \leq  
 \sigma_{\ccov}^2 \leq \sigma_{\int}^2$ to show that if $U_{i,(1)}$ is a subvector of $U_{i,(2)}$ for $i = 1, \ldots, n$, then $\beta_{U_{(1)}}^\top\Sigma_{\tilde{U}_{(1)}\tilde{U}_{(1)}}\beta_{U_{(1)}} \leq \beta_{U_{(2)}}^\top\Sigma_{\tilde{U}_{(2)}\tilde{U}_{(2)}}\beta_{U_{(2)}}$. To ease notation, denote $(1-\pi)\Sigma_{\tilde{U}_{(1)}\tilde{Y}(1)} + \pi \Sigma_{\tilde{U}_{(1)}\tilde{Y}(0)}$ as $V_{\tilde{U}_{(1)}\tilde{Y}}$ and $U_{i,(2)} = ( U_{i,(1)}^\top, U_{i,(3)}^\top )^\top$.
Simple calculation gives
\begin{align}
 & \beta_{U_{(1)}}^\top\Sigma_{\tilde{U}_{(1)}\tilde{U}_{(1)}}\beta_{U_{(1)}} - \beta_{U_{(2)}}^\top\Sigma_{\tilde{U}_{(2)}\tilde{U}_{(2)}}\beta_{U_{(2)}} \notag \\
 =& V_{\tilde{U}_{(1)}\tilde{Y}}^\top\Sigma_{\tilde{U}_{(1)}\tilde{U}_{(1)}}^{-1}V_{\tilde{U}_{(1)}\tilde{Y}} - 
\left( 
\begin{matrix}
	V_{\tilde{U}_{(1)}\tilde{Y}}^\top  & V_{\tilde{U}_{(3)}\tilde{Y}}^\top
\end{matrix}
\right)
\left( 
\begin{matrix}
		\Sigma_{\tilde{U}_{(1)}\tilde{U}_{(1)}}^{-1} + FDF^\top  & B \\
		B^\top                                                   & D 
\end{matrix}
\right)
\left( 
\begin{matrix}
	V_{\tilde{U}_{(1)}\tilde{Y}}  \\ V_{\tilde{U}_{(3)}\tilde{Y}}
\end{matrix}
\right) \notag \\
=& -\Big(V_{\tilde{U}_{(1)}\tilde{Y}}^\top FDF^\top V_{\tilde{U}_{(1)}\tilde{Y}} + 2V_{\tilde{U}_{(1)}\tilde{Y}}^\top BV_{\tilde{U}_{(3)}\tilde{Y}} + V_{\tilde{U}_{(3)}\tilde{Y}}^\top D V_{\tilde{U}_{(3)}\tilde{Y}} \Big) \notag \\
=& -\Big(F^\top V_{\tilde{U}_{(1)}\tilde{Y}} - V_{\tilde{U}_{(3)}\tilde{Y}})^\top D(F^\top V_{\tilde{U}_{(1)}\tilde{Y}} - V_{\tilde{U}_{(3)}\tilde{Y}} \Big) \notag \leq 0,
\end{align}
\noindent where
\[
	F = \Sigma_{\tilde{U}_{(1)}\tilde{U}_{(1)}}^{-1}\Sigma_{\tilde{U}_{(1)}\tilde{U}_{(3)}},~~~~ B = -\Sigma_{\tilde{U}_{(1)}\tilde{U}_{(1)}}^{-1}\Sigma_{\tilde{U}_{(1)}\tilde{U}_{(3)}}D = -FD,
\]
\[
	D^{-1} = \Sigma_{\tilde{U}_{(3)}\tilde{U}_{(3)}} - \Sigma_{\tilde{U}_{(3)}\tilde{U}_{(1)}}\Sigma_{\tilde{U}_{(1)}\tilde{U}_{(1)}}^{-1}\Sigma_{\tilde{U}_{(1)}\tilde{U}_{(3)}}.
\]

\end{proof}

\subsection{Proof of Theorem \ref{th3}}

\begin{proof}
    
We first derive the expression of $\hat{\tau}_{\tom, {\dag}}^{}$, which is a direct result of Proposition 4 in \cite{Lu:2022}.

\begin{lemma}{\cite[][Proposition 4]{Lu:2022}}
\label{lm7}
For $\dag \in \{\ccov, \imp, \mim\}$, $\hat{\tau}_{\tom, {\dag}}^{} = \hat{\tau}_{\int} - \hat{\beta}_{\tom, U_{\dag}(\hat c)}^\top \hat{\tau}_{\int, U_{\dag} (\hat c) }$, where
\[
    \hat{\tau}_{\int} = \sum_{k=1}^K p_{n[k]}(\bar{Y}_{[k]1}-\bar{Y}_{[k]0}),
~~~~\hat{\tau}_{\int, U_{\dag} (\hat c) } = \sum_{k=1}^K p_{n[k]}\{\bar{U}_{[k]1, \dag}(\hat c)-\bar{U}_{[k]0,\dag}(\hat c)\},
\]
\begin{align*}
    \hat{\beta}_{\tom, U_{\dag}(\hat c)} =& \bigg[\sum_{k=1}^K p_{n[k]}\big\{\pi_{n[k]}^{-1}\hat{S}_{ U_{\dag}(\hat c) U_{\dag}(\hat c)[k]}^{(1)} + (1-\pi_{n[k]})^{-1}\hat{S}_{ U_{\dag}(\hat c) U_{\dag}(\hat c)[k]}^{(0)}\big\}\bigg]^{-1} \\   
    &\bigg[\sum_{k=1}^K p_{n[k]}\big\{\pi_{n[k]}^{-1}\hat{S}_{ U_{\dag}(\hat c)Y(1)[k]} + (1-\pi_{n[k]})^{-1}\hat{S}_{ U_{\dag}(\hat c)Y(0)[k]}\big\}\bigg],
\end{align*}
\[
    \hat{S}_{ U_{\dag}(\hat c) U_{\dag}(\hat c)[k]}^{(1)} = \frac{1}{n_{[k]1}} \sum_{i \in [k]} A_i\{U_{i,\dag}(\hat c) - \bar{U}_{[k]1, \dag}(\hat c)\}\{U_{i,\dag}(\hat c) - \bar{U}_{[k]1, \dag}(\hat c)\}^\top,
\]
\[
    \hat{S}_{ U_{\dag}(\hat c) U_{\dag}(\hat c)[k]}^{(0)} = \frac{1}{n_{[k]0}} \sum_{i \in [k]} (1-A_i)\{U_{i,\dag}(\hat c) - \bar{U}_{[k]0,\dag}(\hat c)\}\{U_{i,\dag}(\hat c) - \bar{U}_{[k]0,\dag}(\hat c)\}^\top,
\]
\[
    \hat{S}_{ U_{\dag}(\hat c)Y(1)[k]} = \frac{1}{n_{[k]1}} \sum_{i \in [k]} A_i\{U_{i,\dag}(\hat c) - \bar{U}_{[k]1, \dag}(\hat c)\}(Y_i - \bar{Y}_{[k]1}),
\]
\[
    \hat{S}_{ U_{\dag}(\hat c)Y(0)[k]} = \frac{1}{n_{[k]0}} \sum_{i \in [k]} (1-A_i)\{U_{i,\dag}(\hat c) - \bar{U}_{[k]0,\dag}(\hat c)\}(Y_i - \bar{Y}_{[k]0}).
\]

\end{lemma}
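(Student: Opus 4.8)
The plan is to prove the identity by pure linear algebra on the weighted least squares (WLS) normal equations, exactly as in Proposition~4 of \cite{Lu:2022}; since the statement concerns only the algebraic form of the fit, it is insensitive to the data-generating process, and the imputed value $\hat c$ enters merely as a plugged-in constant vector, so no probabilistic input is required (Assumptions~\ref{as1}--\ref{as5} are needed only for the downstream asymptotics of Theorem~\ref{th3}). The starting observation is structural: the intercept together with the centered stratum dummies $\{I_{i\in[k]}-p_{n[k]}\}_{k=1}^{K-1}$ spans the same column space as $\{I_{i\in[k]}\}_{k=1}^{K}$, while $A_i$ together with the interactions $\{A_i(I_{i\in[k]}-p_{n[k]})\}_{k=1}^{K-1}$ spans $\{A_iI_{i\in[k]}\}_{k=1}^{K}$. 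Hence the two blocks jointly span all $2K$ stratum-by-treatment cell indicators, and ToM regression is a \emph{saturated} cell model augmented by a common covariate slope $\beta$, fitted under the within-cell-constant weights $w_i=A_i/\pi^2+(1-A_i)/(1-\pi)^2$.

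First I would isolate $\hat\beta_{\tom,U_{\dag}(\hat c)}$ by the Frisch--Waugh--Lovell theorem, partialling the cell-indicator block out of both $Y_i$ and $U_{i,\dag}(\hat c)$ in the $w$-weighted inner product. Because $w_i$ is constant within each cell $(k,a)$, the weighted projection onto the cell indicators is ordinary within-cell demeaning, so the residuals are $\tilde Y_i=Y_i-\bar Y_{[k]a}$ and $\tilde U_i=U_{i,\dag}(\hat c)-\bar U_{[k]a,\dag}(\hat c)$ (sample residuals, distinct from the population-centered $\tilde V$ of Section~\ref{sec2}). Substituting these into $\big(\sum_i w_i\tilde U_i\tilde U_i^\top\big)^{-1}\sum_i w_i\tilde U_i\tilde Y_i$ and splitting each sum by treatment group, the treated part of the Gram matrix contributes $\pi^{-2}\sum_k n_{[k]1}\hat S^{(1)}_{U_{\dag}(\hat c)U_{\dag}(\hat c)[k]}$ and the control part contributes $(1-\pi)^{-2}\sum_k n_{[k]0}\hat S^{(0)}_{U_{\dag}(\hat c)U_{\dag}(\hat c)[k]}$, with the analogous split for the cross term producing the $\hat S_{U_{\dag}(\hat c)Y(a)[k]}$ pieces; rewriting $n_{[k]a}$ through $n\,p_{n[k]}$, $\pi_{n[k]}$, $(1-\pi_{n[k]})$ then delivers the displayed weighted combination for $\hat\beta_{\tom,U_{\dag}(\hat c)}$.

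With $\hat\beta_{\tom,U_{\dag}(\hat c)}$ fixed, the coefficient on $A_i$ follows from the saturated structure. Set $R_i=Y_i-U_{i,\dag}(\hat c)^\top\hat\beta_{\tom,U_{\dag}(\hat c)}$; the remaining coefficients minimize $\sum_i w_i(R_i-\text{cell fit})^2$ over a saturated parametrization, so the WLS fit reproduces the (weighted, hence ordinary) cell means $\mu_{ka}=\bar Y_{[k]a}-\bar U_{[k]a,\dag}(\hat c)^\top\hat\beta_{\tom,U_{\dag}(\hat c)}$. In the chosen parametrization the fitted cell contrast reads $\mu_{k1}-\mu_{k0}=\tau+\sum_{j=1}^{K-1}\nu_j(\delta_{jk}-p_{n[j]})$ with $\delta_{jk}=I(j=k)$; multiplying by $p_{n[k]}$, summing over $k$, and using $\sum_{k=1}^K p_{n[k]}=1$ together with $\sum_{k=1}^K p_{n[k]}(\delta_{jk}-p_{n[j]})=0$ isolates $\tau=\sum_{k=1}^K p_{n[k]}(\mu_{k1}-\mu_{k0})$. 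Expanding $\mu_{ka}$ and separating the $Y$ and $U$ parts gives $\hat\tau_{\tom,\dag}=\sum_k p_{n[k]}(\bar Y_{[k]1}-\bar Y_{[k]0})-\hat\beta_{\tom,U_{\dag}(\hat c)}^\top\sum_k p_{n[k]}\{\bar U_{[k]1,\dag}(\hat c)-\bar U_{[k]0,\dag}(\hat c)\}=\hat\tau_{\int}-\hat\beta_{\tom,U_{\dag}(\hat c)}^\top\hat\tau_{\int,U_{\dag}(\hat c)}$, as claimed.

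The main obstacle is the weight bookkeeping in the second step: one must track how the two group weights $\pi^{-2}$ and $(1-\pi)^{-2}$ combine with the cell counts $n_{[k]a}$ to match \emph{exactly} the stratum-specific coefficients $\pi_{n[k]}^{-1}$ and $(1-\pi_{n[k]})^{-1}$ in the displayed $\hat\beta_{\tom,U_{\dag}(\hat c)}$. This is precisely where the ``slight modification for the difference in weights'' relative to \cite{Lu:2022} enters, since Lu's stratified ToM weights use within-stratum proportions; care is needed to confirm that the aggregation lands on the stated form (or to record the exact combination produced by the global-$\pi$ weights). Everything else---the cell-saturation argument, the within-cell demeaning under constant weights, and the centered-dummy contrast identity---is standard and transfers verbatim, independent of the randomization scheme and of the missingness-processing choice $\dag$.
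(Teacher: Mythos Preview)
Your Frisch--Waugh--Lovell argument---exploiting that the non-covariate regressors saturate the $2K$ treatment-by-stratum cells and that $w_i$ is constant within each cell---is exactly the content of Proposition~4 in \cite{Lu:2022}; the paper gives no independent proof but simply records the lemma as a direct consequence of that reference, so your approach and the paper's coincide.

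On the obstacle you flag: your suspicion is warranted. With the global-$\pi$ weights $w_i=A_i/\pi^2+(1-A_i)/(1-\pi)^2$ specified here, the within-cell Gram matrix picks up $\pi^{-2}n_{[k]1}=n\,p_{n[k]}\,\pi^{-2}\pi_{n[k]}$ on the treated block and $(1-\pi)^{-2}n_{[k]0}=n\,p_{n[k]}\,(1-\pi)^{-2}(1-\pi_{n[k]})$ on the control block, so the exact $\hat\beta_{\tom,U_{\dag}(\hat c)}$ carries coefficients $\pi^{-2}\pi_{n[k]}$ and $(1-\pi)^{-2}(1-\pi_{n[k]})$ rather than the $\pi_{n[k]}^{-1}$ and $(1-\pi_{n[k]})^{-1}$ displayed in the lemma (the latter are what stratum-specific weights $A_i/\pi_{n[B_i]}^2+(1-A_i)/(1-\pi_{n[B_i]})^2$ would deliver). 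The two expressions differ in finite samples unless $\pi_{n[k]}=\pi$ for every $k$, but share the same probability limit under $\pi_{n[k]}\to\pi$, which is the only fact about $\hat\beta_{\tom,U_{\dag}(\hat c)}$ that the proof of Theorem~\ref{th3} invokes. The decomposition $\hat\tau_{\tom,\dag}=\hat\tau_{\int}-\hat\beta_{\tom,U_{\dag}(\hat c)}^\top\hat\tau_{\int,U_{\dag}(\hat c)}$ itself holds exactly with either expression plugged in for $\hat\beta_{\tom,U_{\dag}(\hat c)}$, so your derivation of that identity is complete.
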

    
    First, notice that
\begin{align}
\hat{\tau}_{\tom, {\dag}}^{} =& \hat{\tau}_{\int} - \hat{\beta}_{\tom, U_{\dag}(\hat c)}^\top \hat{\tau}_{\int, U_{\dag} (\hat c) } 
= \sum_{k=1}^K p_{n[k]}\bigl[\bar{Y}_{[k]1}-\bar{Y}_{[k]0}-\hat{\beta}_{\tom, U_{\dag}(\hat c)}^\top\{\bar{U}_{[k]1, \dag}(\hat c) - \bar{U}_{[k]0, \dag}(\hat c)\}\bigr] \notag \\
=& \sum_{k=1}^K p_{n[k]}\bigl[\bar{Y}_{[k]1}-\bar{Y}_{[k]0}-{\beta}_{U_{\dag}(c)}^\top\{\bar{U}_{[k]1, \dag}(\hat c) - \bar{U}_{[k]0, \dag}(\hat c)\}\bigr] \notag \\
&+ \sum_{k=1}^K p_{n[k]}\{{\beta}_{U_{\dag}(c)} - \hat{\beta}_{\tom, U_{\dag}(\hat c)}\}^\top\{\bar{U}_{[k]1, \dag}(\hat c) - \bar{U}_{[k]0, \dag}(\hat c)\}. \notag
\end{align}
Then we prove the following properties:
\[
    \hat{S}_{U_{\dag}(\hat c)U_{\dag}(\hat c)[k]}^{(a)} - \Sigma_{U_{\dag}(c)U_{\dag}(c)[k]} = o_p(1),~~~~a = 0, 1;
\]
\[
    \hat{S}_{U_{\dag}(\hat c)Y(a)[k]} - \Sigma_{U_{\dag}(c)Y(a)[k]} = o_p(1),~~~~a= 0, 1;
\]
\[
    \hat{\beta}_{\tom, U_{\dag}(\hat c)} - {\beta}_{U_{\dag}(c)} = o_p(1),
\]
where $\Sigma_{PQ[k]} = E\bigl[\{P-E(P \vert B_i = k)\}\{Q-E(Q \vert B_i = k)\}^\top \vert B_i = k\bigr]$. 
	The proof of the first two conclusions is straightforward, following the same method as in the proof of \Cref{th1}. Noting that $p_{n[k]} = \pk + o_p(1)$ and $\pi_{n[k]} = \pi + o_p(1)$, the third conclusion directly follows from the first two.
    
    We have shown in the proof of \Cref{th2} that $\bar{U}_{[k]1, \dag}(\hat c) - \bar{U}_{[k]0, \dag}(\hat c) = O_p(n^{-1/2})$. Together with $\hat{\beta}_{\tom, U_{\dag}(\hat c)}-{\beta}_{U_{\dag}(c)}=o_p(1)$, we have
\[
    \sum_{k=1}^K p_{n[k]}\{{\beta}_{U_{\dag}(c)} - \hat{\beta}_{\tom, U_{\dag}(\hat c)}\}^\top\{\bar{U}_{[k]1, \dag}(\hat c) - \bar{U}_{[k]0, \dag}(\hat c)\} = o_p(n^{-1/2}).
\]
    In the proof of \Cref{th2}, we have obtained
\[
    \hat{\tau}_{\int,\dag} = \sum_{k=1}^K p_{n[k]}\bigl[\bar{Y}_{[k]1}-\bar{Y}_{[k]0}-{\beta}_{U_{\dag}(c)}^\top\{\bar{U}_{[k]1, \dag}(\hat c) - \bar{U}_{[k]0, \dag}(\hat c)\}\bigr] + o_p(n^{-1/2}). \notag
\]
Therefore, $\hat{\tau}_{\tom, {\dag}}^{} - \hat{\tau}_{\int,\dag} = o_p(n^{-1/2})$, that is, $\hat{\tau}_{\tom, {\dag}}^{}$ and $\hat{\tau}_{\int,\dag}$ has the same asymptotic distribution. Thus, $\sqrt{n}(\hat{\tau}_{\tom, \dag}-\tau)\xrightarrow{d} N(0,\sigma_{\dag}^2)
$. Noting that $\hat{\beta}_{\tom, U_{\dag}(\hat c)} - {\beta}_{U_{\dag}(c)} = o_p(1)$, the proof of the consistency of the variance estimator follows from that of \Cref{th2} if we replace $\hat{\beta}_{U_{\dag}(\hat c)}$ by $\hat{\beta}_{\tom, U_{\dag}(\hat c)}$. Finally, the relationship among the asymptotic variances still holds since they are the same as those in \Cref{th2}.

\end{proof}

\subsection{Proof of Theorem \ref{th4}}

\begin{proof}

    (i) {\bf Stratum-specific Fisher's regression}.	First, we show that
\[
	\hat{\tau}_{\adj, \dag, \ss} = \sum_{k=1}^K p_{n[k]} \bigl[\bar{Y}_{[k]1} - \bar{Y}_{[k]0} - \{\bar{U}_{[k]1, \dag}(c) - \bar{U}_{[k]0, \dag}(c)\}^\top \beta_{U_{\dag}(c)[k]}\bigr] + o_p(n^{-1/2}).
\]
Within each stratum, using the same procedure as in the proof of \Cref{th1},  we have
\[
	\hat{\tau}_{\adj [k], \dag} = \bar{Y}_{[k]1} - \bar{Y}_{[k]0} - \{\bar{U}_{[k]1, \dag}(c) - \bar{U}_{[k]0, \dag}(c)\}^\top \beta_{U_{\dag}(c)[k]} + o_p(n^{-1/2}).
\]  
	Noting that $\hat{\tau}_{\adj, \dag, \ss} = \sum_{k=1}^K p_{n[k]} \hat{\tau}_{\adj [k], \dag, \ss}$, we have
\begin{align}
\hat{\tau}_{\adj, \dag, \ss} &= \sum_{k=1}^K p_{n[k]} \bigl[\bar{Y}_{[k]1} - \bar{Y}_{[k]0} - \{\bar{U}_{[k]1, \dag}(c) - \bar{U}_{[k]0, \dag}(c)\}^\top \beta_{U_{\dag}(c)[k]} + o_p(n^{-1/2})\bigr] \notag \\
&= \sum_{k=1}^K p_{n[k]} \bigl[\bar{Y}_{[k]1} - \bar{Y}_{[k]0} - \{\bar{U}_{[k]1, \dag}(c) - \bar{U}_{[k]0, \dag}(c)\}^\top \beta_{U_{\dag}(c)[k]}\bigr] + o_p(n^{-1/2}) \notag \\
&:=\check{\tau}_{\adj, \dag, \ss} + o_p(n^{-1/2}). \notag
\end{align}
Applying Proposition 3 of \cite{Ma:2020} and Slutsky's theorem immediately yields
\[
	\sqrt{n}(\hat{\tau}_{\adj, \dag, \ss}-\tau) \xrightarrow{d} N(0,\varsigma_{r_{\dag, \ss}}^2 (\pi) + \varsigma_{H{r_{\dag, \ss}}}^2).
\]

Define the following quantities:
$$
\varsigma_{r[k]}^2(\pi) = \pi^{-1} \var\big[ r_i(1) - E\{ r_i(1) \mid B_i \} \mid B_i = k\big] +  (1 - \pi)^{-1} \var\big[ r_i(0) - E\{ r_i(0) \mid B_i \}\mid B_i = k\big],
$$
$$
\varsigma_{Hr[k]}^2 =E \big([ E\{ r_i(1) \mid B_i \} - E\{ r_i(1) \} ] - [ E\{ r_i(0) \mid B_i \} - E\{ r_i(0) \} ]\mid B_i = k \big)^2.
$$
\begin{eqnarray*}
\hat{\varsigma}_{r[k]}^2(\pi) &=& \frac{1}{\pi} \Bigl\{\frac{1}{n_{[k]1}} \sum_{i \in [k]}A_i{(r_i-\bar{r}_{[k]1})}^2\Bigr\} +  \frac{1}{1- \pi} \Bigl\{\frac{1}{n_{[k]0}} \sum_{i \in [k]}(1-A_i)  {(r_i-\bar{r}_{[k]0})}^2\Bigr\},    
\end{eqnarray*}
\begin{eqnarray*}
\hat{\varsigma}_{Hr[k]}^2 = {\bigl\{(\bar{r}_{[k]1}-\bar{r}_{1})-(\bar{r}_{[k]0}-\bar{r}_{0})\bigr\}}^2. 
\end{eqnarray*}
 
    For the OLS variance estimator, applying \Cref{th1} to each stratum, we have $\hat{\sigma}_{\adj [k], \dag}^2 \xrightarrow{P}  \sigma^2_{[k], \dag, \ss}$, where $ \sigma^2_{[k], \dag, \ss} = \varsigma_{r_{\dag,\ss}[k]}^2 (\pi) + \varsigma_{Hr_{\dag,\ss}[k]}^2$. Therefore, when $\pi= 1/2$, we have $\hat \sigma^2_{\adj, \dag, \ss} \xrightarrow{P} \sumk \pk \sigma^2_{[k], \dag, \ss} = \sigma_{\dag,\ss}^2$ because of $p_{n[k]} \xrightarrow{P} p_{[k]}$ and the continuity theorem.

	(ii) {\bf Stratum-specific Lin's regression}. Note that
\begin{align*}
\hat{\tau}_{\int,\dag,\ss} = & \sum_{k=1}^K p_{n[k]}  \bigg\{\bigl[\bar{Y}_{[k]1} - \{\bar{U}_{[k]1, \dag}(\hat c) - \bar{U}_{[k], \dag}(\hat c)\}^\top \hat \beta_{U_{\dag}(\hat c)[k]}(1)\bigr] \\
&  - \bigl[\bar{Y}_{[k]0} - \{\bar{U}_{[k]0, \dag}(\hat c) - \bar{U}_{[k], \dag}(\hat c)\}^\top \hat\beta_{U_{\dag}(\hat c)[k]}(0)\bigr]\biggr\} \\
:= & \sumk p_{n[k]} \hat{\tau}_{\int [k],\dag,\ss}.
\end{align*}
Within each stratum, using the same procedure as in the proof of \Cref{th2},  we have
\[
	\hat{\tau}_{\int [k], \dag, \ss} = \bar{Y}_{[k]1} - \bar{Y}_{[k]0} - \{\bar{U}_{[k]1, \dag}(c) - \bar{U}_{[k]0, \dag}(c)\}^\top {\beta}_{U_{\dag}(c)[k]} + o_p(n^{-1/2}).
\]  
	Noting that $\hat{\tau}_{\int, \dag, \ss} = \sum_{k=1}^K p_{n[k]} \hat{\tau}_{\int [k], \dag, \ss}$, we have
\begin{align}
\hat{\tau}_{\int, \dag, \ss} &= \sum_{k=1}^K p_{n[k]} \bigl[\bar{Y}_{[k]1} - \bar{Y}_{[k]0} - \{\bar{U}_{[k]1, \dag}(c) - \bar{U}_{[k]0, \dag}(c)\}^\top {\beta}_{U_{\dag}(c)[k]} + o_p(n^{-1/2})\bigr] \notag \\
&= \sum_{k=1}^K p_{n[k]} \bigl[\bar{Y}_{[k]1} - \bar{Y}_{[k]0} - \{\bar{U}_{[k]1, \dag}(c) - \bar{U}_{[k]0, \dag}(c)\}^\top {\beta}_{U_{\dag}(c)[k]} \bigr] + o_p(n^{-1/2}) \notag \\
&:=\check{\tau}_{\int, \dag, \ss} + o_p(n^{-1/2}). \notag
\end{align}
Applying Proposition 3 of \cite{Ma:2020} and Slutsky's theorem immediately yield
\[
	\sqrt{n}(\hat{\tau}_{\int, \dag, \ss}-\tau) \xrightarrow{d} N(0,\varsigma_{r_{\dag, \ss}}^2 (\pi) + \varsigma_{H{r_{\dag, \ss}}}^2).
\]



Applying \Cref{th2} to each stratum $k$, we have $\hat \sigma^2_{\int [k],\dag,\ss} \xrightarrow{P} \sigma_{[k],\dag,\ss}^2$, where $\hat \sigma^2_{\int[k],\dag,\ss} = \hat{\varsigma}_{\hat{r}_{\int,\dag, \ss}[k]}^2 (\pi) + \hat{\varsigma}_{H{ \hat r_{\int,\dag, \ss}[k]}}^2 $ with $\hat \sigma^2_{\int, \dag,\ss} = \sumk p_{n[k]} \hat \sigma_{\int[k],\dag,\ss}^2$. Therefore, $\hat \sigma^2_{\int, \dag, \ss} \xrightarrow{P} \sumk \pk \sigma^2_{[k],\dag, \ss} = \sigma_{\dag,\ss}^2$ because of $p_{n[k]} \xrightarrow{P} p_{[k]}$ and the continuity theorem.

    (iii) {\bf Stratum-specific ToM regression}. Note that
\begin{align*}
\hat{\tau}_{\tom,\dag,\ss} = & \sum_{k=1}^K p_{n[k]}  \bigg\{\bigl[\bar{Y}_{[k]1} - \{\bar{U}_{[k]1, \dag}(\hat c) - \bar{U}_{[k], \dag}(\hat c)\}^\top \hat \beta_{\tom, U_{\dag}(\hat c)[k]}\bigr] \\
&  - \bigl[\bar{Y}_{[k]0} - \{\bar{U}_{[k]0, \dag}(\hat c) - \bar{U}_{[k], \dag}(\hat c)\}^\top \hat \beta_{\tom, U_{\dag}(\hat c)[k]}\bigr]\biggr\} \\
:= & \sumk p_{n[k]} \hat{\tau}_{\tom [k],\dag}.
\end{align*}
Within each stratum, using the same procedure as in the proof of \Cref{th3},  we have
\[
	\hat{\tau}_{\tom [k],\dag} = \bar{Y}_{[k]1} - \bar{Y}_{[k]0} - \{\bar{U}_{[k]1, \dag}(c) - \bar{U}_{[k]0, \dag}(c)\}^\top {\beta}_{U_{\dag}(c)[k]} + o_p(n^{-1/2}).
\]  


	Noting that $\hat{\tau}_{\tom, \dag, \ss} = \sum_{k=1}^K p_{n[k]} \hat{\tau}_{\tom [k],\dag}$, we have
\begin{align}
\hat{\tau}_{\tom, \dag, \ss} &= \sum_{k=1}^K p_{n[k]} \bigl[\bar{Y}_{[k]1} - \bar{Y}_{[k]0} - \{\bar{U}_{[k]1, \dag}(c) - \bar{U}_{[k]0, \dag}(c)\}^\top {\beta}_{U_{\dag}(c)[k]} + o_p(n^{-1/2})\bigr] \notag \\
&= \sum_{k=1}^K p_{n[k]} \bigl[\bar{Y}_{[k]1} - \bar{Y}_{[k]0} - \{\bar{U}_{[k]1, \dag}(c) - \bar{U}_{[k]0, \dag}(c)\}^\top {\beta}_{U_{\dag}(c)[k]} \bigr] + o_p(n^{-1/2}) \notag \\
&:=\check{\tau}_{\int, \dag, \ss} + o_p(n^{-1/2}). \notag
\end{align}
Applying Proposition 3 of \cite{Ma:2020} and Slutsky's theorem immediately yields
\[
	\sqrt{n}(\hat{\tau}_{\tom, \dag, \ss}-\tau) \xrightarrow{d} N(0,\varsigma_{r_{\dag, \ss}}^2 (\pi) + \varsigma_{H{r_{\dag, \ss}}}^2).
\]

Applying \Cref{th3} to each stratum $k$, we have $\hat \sigma^2_{\tom [k],\dag,\ss} \xrightarrow{P} \sigma_{[k],\dag,\ss}^2$, where $\hat \sigma^2_{\tom [k],\dag,\ss} = \hat{\varsigma}_{\hat{r}_{\tom,\dag, \ss}[k]}^2 (\pi) + \hat{\varsigma}_{H{ \hat r_{\tom,\dag, \ss}[k]}}^2 $ with $\hat \sigma^2_{\tom, \dag,\ss} = \sumk p_{n[k]} \hat \sigma_{\int[k],\dag}^2$. Therefore, $\hat \sigma^2_{\tom, \dag, \ss} \xrightarrow{P} \sumk \pk \sigma^2_{[k],\dag, \ss} = \sigma_{\dag,\ss}^2$ because of $p_{n[k]} \xrightarrow{P} p_{[k]}$ and the continuity theorem.

(iv) {\bf Variance comparison.} For the first part, it suffices to show that if $U_{i,(1)}$ is a subvector of $U_{i,(2)}$ for $i = 1, \ldots, n$, then $\sigma_{U_{(1)}, \ss}^2 \leq \sigma_{U_{(2)}, \ss}^2$, where $\sigma_{U_{(1)}, \ss}^2$ and $\sigma_{U_{(2)}, \ss}^2$ are the asymptotic variances of the stratum-specific estimators using $U_{(1)}$ and $U_{(2)}$ for covariate adjustments. By \Cref{th3}, we have $\sigma_{U_{(1)}[k], \ss}^2 \leq \sigma_{U_{(2)}[k], \ss}^2$, where $\sigma_{U_{(1)}[k], \ss}^2$ and $\sigma_{U_{(2)}[k], \ss}^2$ are the corresponding asymptotic variances within stratum $k$. Since $\sigma_{U_{(1)}, \ss}^2 = \sum_{k=1}^K p_{n[k]}\sigma_{U_{(1)}[k], \ss}^2$ and $\sigma_{U_{(2)}, \ss}^2 = \sum_{k=1}^K p_{n[k]}\sigma_{U_{(2)}[k], \ss}^2$, then we obtain the first inequality in (iv).

    For the second part, note that $\sigma^2_{[k], \dag, \ss} \leq \sigma^2_{[k], \dag}$ by using Theorem 4 of \cite{Ma:2020}. The desired conclusion holds since $\sigma^2_{\dag, \ss} = \sum_{k=1}^K p_{[k]}\sigma^2_{[k], \dag, \ss}$ and $\sigma^2_{\dag} = \sum_{k=1}^K p_{[k]}\sigma^2_{[k], \dag}$.

\end{proof}

\subsection{Proof of \Cref{ps2}}

\begin{proof}
    
Recall that $C_i = 1$ represents that the $i$th observation contains missing values and $C_i=0$ otherwise. Define $D_{n[k]}^{\prime}$ as follows:
\begin{align}
D_{n[k]}^{\prime} &= \sum_{i=1}^n (A_i - \pi)I_{i \in [k]}I_{C_1 = 0} \notag \\
&= \sum_{i=1}^n (A_i - \pi)I_{i \in [k]}P(C_i = 0 \vert B_i = k) + \sum_{i=1}^n (A_i - \pi)I_{i \in [k]}\{I_{C_i = 0} - P(C_i = 0 \vert B_i = k)\} \notag \\
&:= I_1 + I_2, \notag
\end{align}
\noindent where $I_1 = \sum_{i=1}^n (A_i - \pi)I_{i \in [k]}P(C_i = 0 \vert B_i = k)$, $I_2 = \sum_{i=1}^n (A_i - \pi)I_{i \in [k]}\{I_{C_i = 0} - P(C_i = 0 \vert B_i = k)\}$. Given $A^{(n)}$ and $B^{(n)}$,
\begin{align}
& n^{-1/2}I_2 \notag \\
=& n^{-1/2}\bigl[(1-\pi)\sum_{i \in [k], A_i = 1}\{I_{C_i = 0} - P(C_i = 0 \vert B_i = k)\} \notag \\
&+ \pi \sum_{i \in [k], A_i = 0}\{I_{C_i = 0} - P(C_i = 0 \vert B_i = k)\}\bigr] \notag \\
=& \sqrt{n} \biggl[\frac{1-\pi}{n}\sum_{i \in [k], A_i = 1}\{I_{C_i = 0} - P(C_i = 0 \vert B_i = k)\} - \frac{\pi}{n} \sum_{i \in [k], A_i = 0}\{I_{C_i = 0} - P(C_i = 0 \vert B_i = k)\}\biggr] \notag \\
\xrightarrow{d} & N\bigl(0, \pi(1-\pi)p_{[k]}P(C_i = 0 \vert B_i = k)\{1-P(C_i = 0 \vert B_i = k)\}\bigr). \notag
\end{align}
where the asymptotic normality is derived as follows:

	First, note that $I_{C_i = 0}$ are i.i.d. within a given stratum given $A^{(n)}$ and $B^{(n)}$. In fact, 
\[
	P(C_i = 0 \vert A^{(n)}, B^{(n)}) = P(C_i = 0 \vert B^{(n)}) = P(C_i = 0 \vert B_i),
\]
\[
	P(C_i = a, C_j = a^{\prime} \vert A^{(n)}, B^{(n)}) = P(C_i = a, C_j = a^{\prime} \vert B^{(n)}) = P(C_i = a, C_j = a^{\prime} \vert B_i, B_j).
\]
\noindent where $a, a^{\prime} \in \{0, 1\}$. The first expression shows that $I_{C_i = 0}$ are identically distributed within a given stratum given $A^{(n)}$ and $B^{(n)}$ because $(B_i, C_i)$ are identically distributed. The reason for independence is as follows:
\begin{align}
P(C_i = a, C_j = a^{\prime} \vert B_i, B_j) &=  P(C_i = a \vert B_i, B_j, C_j = a^{\prime})P(C_j = a^{\prime} \vert B_i, B_j)  \notag \\
&= P(C_i = a \vert B_i)P(C_j = a^{\prime} \vert B_j)  \notag \\
&= P(C_i = a \vert A^{(n)}, B^{(n)})P(C_j = a^{\prime} \vert A^{(n)}, B^{(n)}). \notag
\end{align}

Therefore, by central limit theorem, given $A^{(n)}$ and $B^{(n)}$, we have 
\begin{align*}
	& \frac{1-\pi}{n_{[k]1}}\sum_{i \in [k], A_i = 1}\{I_{C_i = 0} - P(C_i = 0 \vert B_i = k)\} \\
     \xrightarrow{d} & N\bigl(0, (1-\pi)^2P(C_i = 0 \vert B_i = k)\{1-P(C_i = 0 \vert B_i = k)\}\bigr),
\end{align*}
\begin{align*}
	& \frac{\pi}{n_{[k]0}}\sum_{i \in [k], A_i = 0}\{I_{C_i = 0} - P(C_i = 0 \vert B_i = k)\} \\
    \xrightarrow{d} & N\bigl(0, \pi^2P(C_i = 0 \vert B_i = k)\{1-P(C_i = 0 \vert B_i = k)\}\bigr).
\end{align*}

Note that these two terms are conditionally independent. Therefore, the asymptotic normality of $n^{-1/2}I_2$ is derived by $n_{[k]1}/n = (n_{[k]1}/n_{[k]})(n_{[k]}/n)\xrightarrow{P}\pi p_{[k]}$, $n_{[k]0}/n = (n_{[k]0}/n_{[k]})(n_{[k]}/n)$
 
    \noindent $\xrightarrow{P}(1-\pi) p_{[k]}$ and Slutsky's theorem.

Note that the asymptotic normality is irrelevant to $A^{(n)}$. Therefore, according to the bounded convergence theorem (BCT), given $B^{(n)}$, we have 
\[
	n^{-1/2}I_2 \xrightarrow{d} N\bigl(0, \pi(1-\pi)p_{[k]}P(C_i = 0 \vert B_i = k)\{1-P(C_i = 0 \vert B_i = k)\}\bigr).
\]

	By Assumption \ref{as3}, given $B^{(n)}$, we have
\[
	n^{-1/2}I_1 \xrightarrow{d} N\bigl(0, p_{[k]}\qk P(C_i = 0 \vert B_i = k)^2\bigr).
\]

	To derive the asymptotic normality of $n^{-1/2}I_1 + n^{-1/2}I_2$ given $B^{(n)}$, we first derive the asymptotic normality of $(n^{-1/2}I_1, n^{-1/2}I_2)$ given $B^{(n)}$. In fact,
\begin{align}
&P\{n^{-1/2}I_1 \leq x, n^{-1/2}I_2 \leq y \vert B^{(n)}\} \notag \\
=& E\{I_{n^{-1/2}I_1 \leq x}I_{n^{-1/2}I_2 \leq y} \vert B^{(n)} \} \notag \\
=& E\bigl[E\{I_{n^{-1/2}I_1 \leq x}I_{n^{-1/2}I_2 \leq y} \vert A^{(n)}\} \vert B^{(n)} \bigr]  \notag \\
=& E\bigl[E\{I_{n^{-1/2}I_1 \leq x}I_{n^{-1/2}I_2 \leq y} \vert A^{(n)}, B^{(n)}\} \vert B^{(n)} \bigr]  \notag \\
=& E\bigl[I_{n^{-1/2}I_1 \leq x}E\{I_{n^{-1/2}I_2 \leq y} \vert A^{(n)}, B^{(n)}\} \vert B^{(n)} \bigr]  \notag \\
=& E\biggl[I_{n^{-1/2}I_1 \leq x}\bigl[E\{I_{n^{-1/2}I_2 \leq y}\vert A^{(n)}, B^{(n)}\} - \phi_2(y)\bigr] \vert B^{(n)} \biggr]  \notag + E\bigl\{I_{n^{-1/2}I_1 \leq x} \vert B^{(n)}\bigr\}\phi_2(y)  \notag \\
\xrightarrow{} & \phi_1(x) \phi_2(y), \notag
\end{align}
\noindent where the final convergence is due to BCT. Therefore, according to the asymptotic independence of $n^{-1/2}I_1$ and $n^{-1/2}I_2$, we have
\[
	n^{-1/2}I_1 + n^{-1/2}I_2 \xrightarrow{d} N(0, p_{[k]}\tilde{q}_{[k]}),
\]
\noindent where $\tilde{q}_{[k]} = \pi(1-\pi)P(C_i = 0 \vert B_i = k)\{1-P(C_i = 0 \vert B_i = k)\} + \qk P(C_i = 0 \vert B_i = k)^2$. Denote the number of fully observed observations as $n^{\prime}$ and the number of fully observed observations within stratum $k$ as $n_{[k]}^{\prime}$. Then, we have
\[
	\frac{n^{\prime}}{n} = \frac{\sum_{k=1}^K n_{[k]}^{\prime}}{n} \xrightarrow{P} \sum_{k=1}^K p_{[k]}P(C_i = 0 \vert B_i = k),
\]
\noindent where the convergence is because of $n_{[k]}^{\prime}/n_{[k]} \xrightarrow{P} P(C_i = 0 \vert B_i = k)$ according to the law of large numbers and $n_{[k]}/n \xrightarrow{P} p_{[k]}$. Therefore, we have 
\[
	(n^{\prime})^{-1/2}I_1 + (n^{\prime})^{-1/2}I_2 \xrightarrow{d} N(0, p_{[k]}{q}_{[k]}^{\prime}).
\]
Therefore, $D_{n[k]}^{\prime} = O_p(\sqrt{n'})$.

\end{proof}

\section{Additional simulation results}
\label{secII}

	This part gives additional simulation results under simple randomization and minimization in \ref{tab5}-\ref{tab12}. All results are similar to those under stratified block randomization.

\newpage

\begin{table}[H]
\tiny
\centering
\caption{Simulation results under simple randomization, $p = 5$, $\pi = 1/2$}
\label{tab5}
\begin{threeparttable}
\begin{tabular}{cccccccccccccccc}
\toprule
\multicolumn{3}{c}{~} & \multicolumn{4}{c}{Model 1} & \multicolumn{4}{c}{Model 2} & \multicolumn{4}{c}{Model 3} \\ \hline
estimator	&	bias	&	SD	&	SE	&	RMSE	&	CP	&	bias	&	SD	&	SE	&	RMSE	&	CP	&	bias	&	SD	&	SE	&	RMSE	&	CP	\\ \midrule
$\hat{\tau}_{B}$	&	-0.03	&	1.29	&	1.27	&	1.29	&	0.95	&	-0.01	&	0.84	&	0.83	&	0.84	&	0.95	&	0.00&	0.45	&	0.45	&	0.45	&	0.95 \\
$\hat{\tau}_{\adj,\ccov}$	&	-0.05	&	1.28	&	1.27	&	1.29	&	0.95	&	-0.01	&	0.79	&	0.79	&	0.79	&	0.95	&	-0.02	&	0.40&	0.40&	0.40&	0.95	\\
$\hat{\tau}_{\int, \ccov}$	&	-0.03	&	1.28	&	1.29	&	1.28	&	0.95	&	-0.01	&	0.79	&	0.80&	0.79	&	0.95	&	-0.01	&	0.40&	0.41	&	0.40&	0.95	\\
$\hat{\tau}_{\tom, \ccov}$	&	-0.05	&	1.28	&	1.29	&	1.29	&	0.95	&	-0.01	&	0.79	&	0.80&	0.79	&	0.95	&	-0.02	&	0.40&	0.41	&	0.40&	0.95	\\
$\hat{\tau}_{\adj, \ccov, \ss}$	&	-0.12	&	1.29	&	1.28	&	1.30&	0.95	&	0.00&	0.80&	0.80&	0.80&	0.95	&	-0.04	&	0.39	&	0.39	&	0.40&	0.94	\\
$\hat{\tau}_{\int, \ccov, \ss}$	&	-0.04	&	1.30&	1.29	&	1.30&	0.95	&	-0.02	&	0.80&	0.80&	0.80&	0.95	&	-0.02	&	0.39	&	0.39	&	0.39	&	0.95	\\
$\hat{\tau}_{\tom, \ccov, \ss}$	&	-0.12	&	1.29	&	1.28	&	1.30&	0.95	&	0.00&	0.80&	0.80&	0.80&	0.95	&	-0.04	&	0.39	&	0.39	&	0.40&	0.94	\\
$\hat{\tau}_{\adj, \imp}$	&	-0.04	&	1.19	&	1.18	&	1.19	&	0.95	&	-0.01	&	0.61	&	0.61	&	0.61	&	0.95	&	-0.02	&	0.39	&	0.39	&	0.39	&	0.95	\\
$\hat{\tau}_{\int, \imp}$	&	-0.02	&	1.20&	1.24	&	1.20&	0.95	&	-0.01	&	0.61	&	0.64	&	0.61	&	0.96	&	-0.01	&	0.39	&	0.41	&	0.39	&	0.96	\\
$\hat{\tau}_{\tom, \imp}$	&	-0.04	&	1.20&	1.24	&	1.20&	0.95	&	-0.01	&	0.61	&	0.64	&	0.61	&	0.96	&	-0.02	&	0.39	&	0.41	&	0.39	&	0.96	\\
$\hat{\tau}_{\adj, \imp, \ss}$	&	-0.10&	1.01	&	1.00&	1.01	&	0.94	&	0.01	&	0.62	&	0.62	&	0.62	&	0.95	&	-0.05	&	0.39	&	0.38	&	0.39	&	0.94	\\
$\hat{\tau}_{\int, \imp, \ss}$	&	-0.01	&	1.02	&	1.01	&	1.02	&	0.95	&	-0.01	&	0.63	&	0.63	&	0.63	&	0.95	&	-0.01	&	0.39	&	0.39	&	0.39	&	0.95	\\
$\hat{\tau}_{\tom, \imp, \ss}$	&	-0.10&	1.01	&	1.00&	1.01	&	0.94	&	0.01	&	0.62	&	0.62	&	0.63	&	0.95	&	-0.05	&	0.39	&	0.38	&	0.39	&	0.94	\\
$\hat{\tau}_{\adj, \mim}$	&	-0.06	&	1.18	&	1.17	&	1.18	&	0.94	&	-0.02	&	0.59	&	0.59	&	0.59	&	0.95	&	-0.04	&	0.38	&	0.37	&	0.38	&	0.94	\\
$\hat{\tau}_{\int, \mim}$	&	-0.02	&	1.19	&	1.28	&	1.19	&	0.96	&	-0.01	&	0.59	&	0.64	&	0.59	&	0.97	&	-0.01	&	0.38	&	0.41	&	0.38	&	0.96	\\
$\hat{\tau}_{\tom, \mim}$	&	-0.06	&	1.19	&	1.28	&	1.19	&	0.96	&	-0.02	&	0.59	&	0.64	&	0.59	&	0.97	&	-0.04	&	0.38	&	0.41	&	0.38	&	0.96	\\
$\hat{\tau}_{\adj, \mim, \ss}$	&	-0.18	&	1.00&	0.99	&	1.02	&	0.94	&	-0.08	&	0.61	&	0.61	&	0.62	&	0.95	&	-0.14	&	0.38	&	0.37	&	0.41	&	0.92	\\
$\hat{\tau}_{\int, \mim, \ss}$	&	-0.04	&	1.04	&	1.02	&	1.04	&	0.94	&	-0.03	&	0.63	&	0.62	&	0.63	&	0.95	&	-0.04	&	0.39	&	0.39	&	0.39	&	0.95	\\
$\hat{\tau}_{\tom, \mim, \ss}$	&	-0.18	&	1.00&	1.00&	1.02	&	0.94	&	-0.08	&	0.61	&	0.61	&	0.62	&	0.95	&	-0.14	&	0.38	&	0.37	&	0.41	&	0.92	\\ \bottomrule
\end{tabular}

\end{threeparttable}
\end{table}

\begin{table}[H]
\tiny
\centering
\caption{Simulation results under simple randomization, $p = 5$, $\pi = 2/3$}
\label{tab6}
\begin{threeparttable}
\begin{tabular}{cccccccccccccccc}
\toprule
\multicolumn{3}{c}{~} & \multicolumn{4}{c}{Model 1} & \multicolumn{4}{c}{Model 2} & \multicolumn{4}{c}{Model 3} \\ \hline
estimator	&	bias	&	SD	&	SE	&	RMSE	&	CP	&	bias	&	SD	&	SE	&	RMSE	&	CP	&	bias	&	SD	&	SE	&	RMSE	&	CP	\\ \midrule
$\hat{\tau}_{B}$	&	0.04	&	1.33	&	1.30&	1.33	&	0.95	&	0.02	&	0.88	&	0.88	&	0.88	&	0.95	&	0.01	&	0.43	&	0.42	&	0.43	&	0.95	\\
$\hat{\tau}_{\adj,\ccov}$	&	0.02	&	1.33	&	1.39	&	1.33	&	0.96	&	0.02	&	0.84	&	0.84	&	0.84	&	0.95	&	0.00&	0.39	&	0.47	&	0.39	&	0.98	\\
$\hat{\tau}_{\int, \ccov}$	&	0.04	&	1.33	&	1.33	&	1.33	&	0.95	&	0.02	&	0.84	&	0.85	&	0.84	&	0.95	&	0.00&	0.38	&	0.39	&	0.38	&	0.95	\\
$\hat{\tau}_{\tom, \ccov}$	&	0.02	&	1.33	&	1.33	&	1.33	&	0.95	&	0.03	&	0.84	&	0.85	&	0.84	&	0.95	&	0.00&	0.38	&	0.39	&	0.38	&	0.95	\\
$\hat{\tau}_{\adj, \ccov, \ss}$	&	-0.05	&	1.35	&	1.41	&	1.35	&	0.96	&	0.03	&	0.85	&	0.85	&	0.85	&	0.95	&	-0.02	&	0.38	&	0.45	&	0.38	&	0.98	\\
$\hat{\tau}_{\int, \ccov, \ss}$	&	0.04	&	1.36	&	1.32	&	1.36	&	0.95	&	0.02	&	0.86	&	0.85	&	0.86	&	0.94	&	0.00&	0.38	&	0.37	&	0.38	&	0.94	\\
$\hat{\tau}_{\tom, \ccov, \ss}$	&	-0.05	&	1.35	&	1.32	&	1.35	&	0.95	&	0.04	&	0.85	&	0.84	&	0.85	&	0.95	&	-0.03	&	0.37	&	0.37	&	0.38	&	0.94	\\
$\hat{\tau}_{\adj, \imp}$	&	0.01	&	1.25	&	1.27	&	1.25	&	0.95	&	0.01	&	0.65	&	0.63	&	0.65	&	0.94	&	-0.01	&	0.38	&	0.45	&	0.38	&	0.98	\\
$\hat{\tau}_{\int, \imp}$	&	0.03	&	1.25	&	1.31	&	1.25	&	0.96	&	0.01	&	0.66	&	0.70&	0.66	&	0.96	&	0.00&	0.37	&	0.39	&	0.37	&	0.96	\\
$\hat{\tau}_{\tom, \imp}$	&	0.01	&	1.25	&	1.30&	1.25	&	0.96	&	0.02	&	0.66	&	0.70&	0.66	&	0.96	&	-0.01	&	0.37	&	0.39	&	0.37	&	0.96	\\
$\hat{\tau}_{\adj, \imp, \ss}$	&	-0.06	&	1.07	&	1.07	&	1.07	&	0.95	&	0.03	&	0.67	&	0.65	&	0.67	&	0.94	&	-0.03	&	0.38	&	0.44	&	0.38	&	0.97	\\
$\hat{\tau}_{\int, \imp, \ss}$	&	0.01	&	1.11	&	1.07	&	1.11	&	0.94	&	0.01	&	0.71	&	0.68	&	0.71	&	0.94	&	-0.01	&	0.38	&	0.37	&	0.38	&	0.94	\\
$\hat{\tau}_{\tom, \imp, \ss}$	&	-0.08	&	1.08	&	1.05	&	1.08	&	0.94	&	0.02	&	0.69	&	0.67	&	0.69	&	0.94	&	-0.04	&	0.37	&	0.37	&	0.37	&	0.94	\\
$\hat{\tau}_{\adj, \mim}$	&	-0.01	&	1.24	&	1.25	&	1.24	&	0.95	&	-0.01	&	0.64	&	0.60&	0.64	&	0.93	&	-0.03	&	0.39	&	0.42	&	0.39	&	0.97	\\
$\hat{\tau}_{\int, \mim}$	&	0.03	&	1.25	&	1.39	&	1.26	&	0.97	&	0.01	&	0.65	&	0.73	&	0.65	&	0.97	&	0.00&	0.37	&	0.41	&	0.37	&	0.97	\\
$\hat{\tau}_{\tom, \mim}$	&	-0.01	&	1.25	&	1.38	&	1.25	&	0.97	&	-0.01	&	0.65	&	0.73	&	0.65	&	0.97	&	-0.03	&	0.37	&	0.40&	0.37	&	0.97	\\
$\hat{\tau}_{\adj, \mim, \ss}$	&	-0.15	&	1.07	&	1.05	&	1.08	&	0.94	&	-0.06	&	0.67	&	0.62	&	0.67	&	0.93	&	-0.13	&	0.39	&	0.41	&	0.41	&	0.94	\\
$\hat{\tau}_{\int, \mim, \ss}$	&	-0.03	&	1.16	&	1.09	&	1.16	&	0.93	&	-0.03	&	0.75	&	0.69	&	0.75	&	0.93	&	-0.06	&	0.40&	0.38	&	0.40&	0.94	\\
$\hat{\tau}_{\tom, \mim, \ss}$	&	-0.18	&	1.09	&	1.06	&	1.10&	0.94	&	-0.08	&	0.69	&	0.67	&	0.70&	0.94	&	-0.15	&	0.38	&	0.37	&	0.41	&	0.91	\\ \bottomrule
\end{tabular}

\end{threeparttable}
\end{table}

\begin{table}[H]
\tiny
\centering
\caption{Simulation results under simple randomization, $p = 7$, $\pi = 1/2$}
\label{tab7}
\begin{threeparttable}
\begin{tabular}{cccccccccccccccc}
\toprule
\multicolumn{3}{c}{~} & \multicolumn{4}{c}{Model 1} & \multicolumn{4}{c}{Model 2} & \multicolumn{4}{c}{Model 3} \\ \hline
estimator	&	bias	&	SD	&	SE	&	RMSE	&	CP	&	bias	&	SD	&	SE	&	RMSE	&	CP	&	bias	&	SD	&	SE	&	RMSE	&	CP	\\ \midrule
$\hat{\tau}_{B}$	&	0.02	&	1.29	&	1.27	&	1.29	&	0.95	&	0.02	&	0.84	&	0.83	&	0.84	&	0.95	&	0.00&	0.45	&	0.44	&	0.45	&	0.95	\\
$\hat{\tau}_{\adj,\ccov}$	&	-0.01	&	1.28	&	1.27	&	1.28	&	0.95	&	0.02	&	0.79	&	0.79	&	0.79	&	0.95	&	-0.01	&	0.41	&	0.40&	0.41	&	0.95	\\
$\hat{\tau}_{\int, \ccov}$	&	0.02	&	1.28	&	1.29	&	1.28	&	0.95	&	0.01	&	0.79	&	0.80&	0.79	&	0.95	&	0.00&	0.41	&	0.41	&	0.41	&	0.95	\\
$\hat{\tau}_{\tom, \ccov}$	&	-0.01	&	1.28	&	1.29	&	1.28	&	0.95	&	0.02	&	0.79	&	0.80&	0.79	&	0.95	&	-0.01	&	0.41	&	0.41	&	0.41	&	0.95	\\
$\hat{\tau}_{\adj, \ccov, \ss}$	&	-0.07	&	1.29	&	1.28	&	1.30&	0.94	&	0.03	&	0.80&	0.80&	0.80&	0.95	&	-0.03	&	0.40&	0.38	&	0.40&	0.94	\\
$\hat{\tau}_{\int, \ccov, \ss}$	&	0.01	&	1.30&	1.28	&	1.30&	0.95	&	0.01	&	0.80&	0.80&	0.80&	0.95	&	-0.01	&	0.40&	0.39	&	0.40&	0.95	\\
$\hat{\tau}_{\tom, \ccov, \ss}$	&	-0.07	&	1.29	&	1.28	&	1.30&	0.95	&	0.03	&	0.80&	0.79	&	0.80&	0.95	&	-0.03	&	0.40&	0.39	&	0.40&	0.94	\\
$\hat{\tau}_{\adj, \imp}$	&	-0.01	&	1.20&	1.18	&	1.20&	0.95	&	0.02	&	0.61	&	0.61	&	0.61	&	0.95	&	0.00&	0.40&	0.39	&	0.40&	0.94	\\
$\hat{\tau}_{\int, \imp}$	&	0.02	&	1.20&	1.30&	1.20&	0.97	&	0.02	&	0.61	&	0.67	&	0.61	&	0.97	&	0.01	&	0.40&	0.43	&	0.40&	0.96	\\
$\hat{\tau}_{\tom, \imp}$	&	0.00&	1.20&	1.30&	1.20&	0.97	&	0.02	&	0.61	&	0.67	&	0.61	&	0.97	&	0.00&	0.40&	0.43	&	0.40&	0.96	\\
$\hat{\tau}_{\adj, \imp, \ss}$	&	-0.03	&	1.03	&	1.02	&	1.03	&	0.95	&	0.06	&	0.64	&	0.64	&	0.64	&	0.95	&	-0.01	&	0.40&	0.39	&	0.40&	0.94	\\
$\hat{\tau}_{\int, \imp, \ss}$	&	0.06	&	1.05	&	1.05	&	1.05	&	0.95	&	0.04	&	0.65	&	0.65	&	0.66	&	0.95	&	0.03	&	0.41	&	0.40&	0.41	&	0.95	\\
$\hat{\tau}_{\tom, \imp, \ss}$	&	-0.03	&	1.03	&	1.03	&	1.03	&	0.95	&	0.06	&	0.64	&	0.64	&	0.65	&	0.95	&	-0.01	&	0.40&	0.40&	0.40&	0.94	\\
$\hat{\tau}_{\adj, \mim}$	&	-0.05	&	1.19	&	1.18	&	1.20&	0.95	&	-0.02	&	0.59	&	0.59	&	0.59	&	0.95	&	-0.04	&	0.38	&	0.37	&	0.38	&	0.94	\\
$\hat{\tau}_{\int, \mim}$	&	0.01	&	1.20&	1.45	&	1.20&	0.98	&	0.01	&	0.59	&	0.72	&	0.59	&	0.98	&	0.00&	0.38	&	0.46	&	0.38	&	0.98	\\
$\hat{\tau}_{\tom, \mim}$	&	-0.05	&	1.20&	1.43	&	1.20&	0.98	&	-0.02	&	0.59	&	0.71	&	0.59	&	0.98	&	-0.05	&	0.38	&	0.45	&	0.38	&	0.98	\\
$\hat{\tau}_{\adj, \mim, \ss}$	&	-0.21	&	1.05	&	1.03	&	1.07	&	0.93	&	-0.12	&	0.64	&	0.63	&	0.65	&	0.94	&	-0.20&	0.40&	0.38	&	0.44	&	0.89	\\
$\hat{\tau}_{\int, \mim, \ss}$	&	0.00&	1.13	&	1.18	&	1.13	&	0.96	&	-0.01	&	0.69	&	0.72	&	0.69	&	0.96	&	-0.03	&	0.42	&	0.44	&	0.42	&	0.96	\\
$\hat{\tau}_{\tom, \mim, \ss}$	&	-0.21	&	1.06	&	1.06	&	1.08	&	0.94	&	-0.12	&	0.64	&	0.65	&	0.66	&	0.94	&	-0.20&	0.40&	0.39	&	0.45	&	0.91	\\ \bottomrule
\end{tabular}

\end{threeparttable}
\end{table}

\begin{table}[H]
\tiny
\centering
\caption{Simulation results under simple randomization, $p = 7$, $\pi = 2/3$}
\label{tab8}
\begin{threeparttable}
\begin{tabular}{cccccccccccccccc}
\toprule
\multicolumn{3}{c}{~} & \multicolumn{4}{c}{Model 1} & \multicolumn{4}{c}{Model 2} & \multicolumn{4}{c}{Model 3} \\ \hline
estimator	&	bias	&	SD	&	SE	&	RMSE	&	CP	&	bias	&	SD	&	SE	&	RMSE	&	CP	&	bias	&	SD	&	SE	&	RMSE	&	CP	\\ \midrule
$\hat{\tau}_{B}$	&	0.03	&	1.30&	1.30&	1.30&	0.95	&	0.03	&	0.89	&	0.88	&	0.89	&	0.94	&	0.01	&	0.43	&	0.42	&	0.43	&	0.94	\\
$\hat{\tau}_{\adj,\ccov}$	&	0.01	&	1.30&	1.38	&	1.30&	0.96	&	0.03	&	0.85	&	0.83	&	0.85	&	0.94	&	0.00&	0.39	&	0.46	&	0.39	&	0.98	\\
$\hat{\tau}_{\int, \ccov}$	&	0.03	&	1.30&	1.33	&	1.30&	0.95	&	0.03	&	0.85	&	0.85	&	0.85	&	0.95	&	0.01	&	0.39	&	0.39	&	0.39	&	0.95	\\
$\hat{\tau}_{\tom, \ccov}$	&	0.01	&	1.30&	1.33	&	1.30&	0.95	&	0.03	&	0.85	&	0.85	&	0.85	&	0.95	&	0.00&	0.39	&	0.39	&	0.39	&	0.95	\\
$\hat{\tau}_{\adj, \ccov, \ss}$	&	-0.05	&	1.31	&	1.39	&	1.31	&	0.96	&	0.04	&	0.85	&	0.84	&	0.86	&	0.94	&	-0.02	&	0.38	&	0.45	&	0.38	&	0.97	\\
$\hat{\tau}_{\int, \ccov, \ss}$	&	0.03	&	1.33	&	1.32	&	1.33	&	0.95	&	0.03	&	0.87	&	0.85	&	0.87	&	0.94	&	0.00&	0.38	&	0.37	&	0.38	&	0.94	\\
$\hat{\tau}_{\tom, \ccov, \ss}$	&	-0.05	&	1.31	&	1.32	&	1.32	&	0.95	&	0.04	&	0.86	&	0.84	&	0.86	&	0.94	&	-0.02	&	0.38	&	0.37	&	0.38	&	0.94	\\
$\hat{\tau}_{\adj, \imp}$	&	0.01	&	1.22	&	1.27	&	1.22	&	0.96	&	0.03	&	0.66	&	0.63	&	0.66	&	0.94	&	0.00&	0.39	&	0.45	&	0.39	&	0.97	\\
$\hat{\tau}_{\int, \imp}$	&	0.03	&	1.22	&	1.39	&	1.22	&	0.97	&	0.02	&	0.66	&	0.74	&	0.66	&	0.97	&	0.01	&	0.38	&	0.41	&	0.38	&	0.96	\\
$\hat{\tau}_{\tom, \imp}$	&	0.00&	1.22	&	1.38	&	1.22	&	0.97	&	0.02	&	0.66	&	0.74	&	0.66	&	0.97	&	0.00&	0.38	&	0.41	&	0.38	&	0.96	\\
$\hat{\tau}_{\adj, \imp, \ss}$	&	-0.03	&	1.08	&	1.09	&	1.08	&	0.95	&	0.06	&	0.69	&	0.67	&	0.69	&	0.94	&	-0.01	&	0.40&	0.44	&	0.40&	0.97	\\
$\hat{\tau}_{\int, \imp, \ss}$	&	0.02	&	1.15	&	1.11	&	1.15	&	0.94	&	0.02	&	0.75	&	0.71	&	0.75	&	0.93	&	0.01	&	0.40&	0.39	&	0.40&	0.94	\\
$\hat{\tau}_{\tom, \imp, \ss}$	&	-0.07	&	1.10&	1.09	&	1.10&	0.95	&	0.04	&	0.71	&	0.70&	0.71	&	0.94	&	-0.03	&	0.39	&	0.38	&	0.39	&	0.94	\\
$\hat{\tau}_{\adj, \mim}$	&	-0.06	&	1.23	&	1.25	&	1.23	&	0.95	&	-0.04	&	0.65	&	0.59	&	0.65	&	0.93	&	-0.06	&	0.39	&	0.41	&	0.39	&	0.96	\\
$\hat{\tau}_{\int, \mim}$	&	0.00&	1.24	&	1.70&	1.24	&	0.99	&	0.00&	0.66	&	0.91	&	0.66	&	0.99	&	-0.01	&	0.37	&	0.48	&	0.37	&	0.99	\\
$\hat{\tau}_{\tom, \mim}$	&	-0.06	&	1.23	&	1.68	&	1.23	&	0.99	&	-0.04	&	0.66	&	0.90&	0.66	&	0.99	&	-0.06	&	0.37	&	0.47	&	0.38	&	0.98	\\
$\hat{\tau}_{\adj, \mim, \ss}$	&	-0.24	&	1.11	&	1.08	&	1.14	&	0.93	&	-0.15	&	0.69	&	0.64	&	0.71	&	0.92	&	-0.21	&	0.41	&	0.41	&	0.46	&	0.91	\\
$\hat{\tau}_{\int, \mim, \ss}$	&	-0.07	&	1.36	&	1.31	&	1.37	&	0.95	&	-0.06	&	0.90&	0.84	&	0.90&	0.94	&	-0.08	&	0.45	&	0.45	&	0.46	&	0.95	\\
$\hat{\tau}_{\tom, \mim, \ss}$	&	-0.29	&	1.14	&	1.19	&	1.17	&	0.95	&	-0.18	&	0.73	&	0.76	&	0.75	&	0.95	&	-0.25	&	0.41	&	0.41	&	0.48	&	0.90\\ \bottomrule
\end{tabular}

\end{threeparttable}
\end{table}

\begin{table}[H]
\tiny
\centering
\caption{Simulation results under minimization, $p = 5$, $\pi = 1/2$}
\label{tab9}
\begin{threeparttable}
\begin{tabular}{cccccccccccccccc}
\toprule
\multicolumn{3}{c}{~} & \multicolumn{4}{c}{Model 1} & \multicolumn{4}{c}{Model 2} & \multicolumn{4}{c}{Model 3} \\ \hline
estimator	&	bias	&	SD	&	SE	&	RMSE	&	CP	&	bias	&	SD	&	SE	&	RMSE	&	CP	&	bias	&	SD	&	SE	&	RMSE	&	CP	\\ \midrule
$\hat{\tau}_{B}$	&	0.03	&	1.27	&	1.27	&	1.27	&	0.95	&	0.01	&	0.84	&	0.83	&	0.84	&	0.95	&	0.01	&	0.44	&	0.44	&	0.44	&	0.95	\\
$\hat{\tau}_{\adj,\ccov}$	&	0.00&	1.27	&	1.26	&	1.27	&	0.95	&	0.02	&	0.79	&	0.78	&	0.79	&	0.95	&	0.00&	0.40&	0.40&	0.40&	0.95	\\
$\hat{\tau}_{\int, \ccov}$	&	0.02	&	1.27	&	1.29	&	1.27	&	0.95	&	0.01	&	0.79	&	0.80&	0.79	&	0.95	&	0.01	&	0.40&	0.41	&	0.40&	0.95	\\
$\hat{\tau}_{\tom, \ccov}$	&	0.00&	1.27	&	1.29	&	1.27	&	0.95	&	0.02	&	0.79	&	0.80&	0.79	&	0.95	&	0.00&	0.40&	0.41	&	0.40&	0.95	\\
$\hat{\tau}_{\adj, \ccov, \ss}$	&	-0.06	&	1.27	&	1.27	&	1.28	&	0.94	&	0.02	&	0.80&	0.79	&	0.80&	0.95	&	-0.02	&	0.39	&	0.38	&	0.39	&	0.94	\\
$\hat{\tau}_{\int, \ccov, \ss}$	&	0.02	&	1.28	&	1.28	&	1.28	&	0.95	&	0.01	&	0.80&	0.80&	0.80&	0.95	&	0.00&	0.39	&	0.39	&	0.39	&	0.95	\\
$\hat{\tau}_{\tom, \ccov, \ss}$	&	-0.06	&	1.27	&	1.28	&	1.28	&	0.95	&	0.02	&	0.80&	0.79	&	0.80&	0.95	&	-0.02	&	0.39	&	0.39	&	0.39	&	0.95	\\
$\hat{\tau}_{\adj, \imp}$	&	0.00&	1.18	&	1.17	&	1.18	&	0.95	&	0.02	&	0.61	&	0.60&	0.61	&	0.95	&	0.00&	0.39	&	0.39	&	0.39	&	0.95	\\
$\hat{\tau}_{\int, \imp}$	&	0.03	&	1.18	&	1.24	&	1.18	&	0.96	&	0.01	&	0.61	&	0.64	&	0.61	&	0.96	&	0.01	&	0.39	&	0.41	&	0.39	&	0.96	\\
$\hat{\tau}_{\tom, \imp}$	&	0.00&	1.18	&	1.24	&	1.18	&	0.96	&	0.02	&	0.61	&	0.64	&	0.61	&	0.96	&	0.00&	0.39	&	0.41	&	0.39	&	0.96	\\
$\hat{\tau}_{\adj, \imp, \ss}$	&	-0.06	&	1.00&	0.99	&	1.01	&	0.95	&	0.03	&	0.62	&	0.95	&	0.62	&	0.95	&	-0.03	&	0.38	&	0.37	&	0.38	&	0.94	\\
$\hat{\tau}_{\int, \imp, \ss}$	&	0.03	&	1.01	&	1.01	&	1.01	&	0.95	&	0.01	&	0.62	&	0.63	&	0.62	&	0.95	&	0.01	&	0.38	&	0.39	&	0.38	&	0.95	\\
$\hat{\tau}_{\tom, \imp, \ss}$	&	-0.06	&	1.00&	1.00&	1.01	&	0.95	&	0.03	&	0.62	&	0.62	&	0.62	&	0.95	&	-0.03	&	0.38	&	0.38	&	0.38	&	0.95	\\
$\hat{\tau}_{\adj, \mim}$	&	-0.01	&	1.16	&	1.16	&	1.16	&	0.95	&	0.00&	0.59	&	0.58	&	0.59	&	0.95	&	-0.02	&	0.38	&	0.37	&	0.38	&	0.94	\\
$\hat{\tau}_{\int, \mim}$	&	0.03	&	1.17	&	1.28	&	1.17	&	0.97	&	0.01	&	0.59	&	0.64	&	0.59	&	0.97	&	0.00&	0.38	&	0.41	&	0.38	&	0.97	\\
$\hat{\tau}_{\tom, \mim}$	&	-0.01	&	1.17	&	1.27	&	1.17	&	0.97	&	0.00&	0.59	&	0.64	&	0.59	&	0.97	&	-0.02	&	0.38	&	0.41	&	0.38	&	0.97	\\
$\hat{\tau}_{\adj, \mim, \ss}$	&	-0.14	&	0.99	&	0.98	&	1.00&	0.94	&	-0.06	&	0.61	&	0.60&	0.61	&	0.95	&	-0.12	&	0.38	&	0.36	&	0.40&	0.92	\\
$\hat{\tau}_{\int, \mim, \ss}$	&	0.00&	1.02	&	1.02	&	1.02	&	0.95	&	-0.01	&	0.62	&	0.62	&	0.62	&	0.95	&	-0.02	&	0.38	&	0.38	&	0.38	&	0.95	\\
$\hat{\tau}_{\tom, \mim, \ss}$	&	-0.14	&	0.99	&	0.99	&	1.00&	0.95	&	-0.06	&	0.61	&	0.61	&	0.61	&	0.95	&	-0.12	&	0.38	&	0.37	&	0.40&	0.93	\\ \bottomrule
\end{tabular}

\end{threeparttable}
\end{table}

\begin{table}[H]
\tiny
\centering
\caption{Simulation results under minimization, $p = 5$, $\pi = 2/3$}
\label{tab10}
\begin{threeparttable}
\begin{tabular}{cccccccccccccccc}
\toprule
\multicolumn{3}{c}{~} & \multicolumn{4}{c}{Model 1} & \multicolumn{4}{c}{Model 2} & \multicolumn{4}{c}{Model 3} \\ \hline
estimator	&	bias	&	SD	&	SE	&	RMSE	&	CP	&	bias	&	SD	&	SE	&	RMSE	&	CP	&	bias	&	SD	&	SE	&	RMSE	&	CP	\\ \midrule
$\hat{\tau}_{B}$	&	0.01	&	1.30&	1.31	&	1.30&	0.95	&	0.00&	0.89	&	0.88	&	0.89	&	0.94	&	0.00&	0.42	&	0.43	&	0.42	&	0.95	\\
$\hat{\tau}_{\adj,\ccov}$	&	-0.02	&	1.30&	1.39	&	1.30&	0.96	&	0.01	&	0.85	&	0.84	&	0.85	&	0.94	&	-0.01	&	0.38	&	0.47	&	0.38	&	0.98	\\
$\hat{\tau}_{\int, \ccov}$	&	0.01	&	1.30&	1.33	&	1.30&	0.96	&	0.00&	0.85	&	0.85	&	0.85	&	0.95	&	0.00&	0.37	&	0.39	&	0.37	&	0.95	\\
$\hat{\tau}_{\tom, \ccov}$	&	-0.02	&	1.30&	1.33	&	1.30&	0.96	&	0.01	&	0.85	&	0.85	&	0.85	&	0.95	&	-0.01	&	0.37	&	0.39	&	0.37	&	0.95	\\
$\hat{\tau}_{\adj, \ccov, \ss}$	&	-0.08	&	1.32	&	1.41	&	1.32	&	0.96	&	0.01	&	0.86	&	0.84	&	0.86	&	0.94	&	-0.03	&	0.37	&	0.45	&	0.37	&	0.98	\\
$\hat{\tau}_{\int, \ccov, \ss}$	&	0.01	&	1.32	&	1.32	&	1.32	&	0.95	&	0.01	&	0.87	&	0.85	&	0.87	&	0.94	&	-0.01	&	0.37	&	0.37	&	0.37	&	0.95	\\
$\hat{\tau}_{\tom, \ccov, \ss}$	&	-0.08	&	1.31	&	1.32	&	1.32	&	0.95	&	0.02	&	0.86	&	0.84	&	0.86	&	0.94	&	-0.03	&	0.37	&	0.37	&	0.37	&	0.95	\\
$\hat{\tau}_{\adj, \imp}$	&	-0.01	&	1.23	&	1.28	&	1.23	&	0.96	&	0.01	&	0.65	&	0.64	&	0.65	&	0.94	&	-0.01	&	0.37	&	0.45	&	0.37	&	0.98	\\
$\hat{\tau}_{\int, \imp}$	&	0.01	&	1.22	&	1.31	&	1.22	&	0.96	&	0.01	&	0.66	&	0.70&	0.66	&	0.96	&	0.00&	0.37	&	0.39	&	0.37	&	0.96	\\
$\hat{\tau}_{\tom, \imp}$	&	-0.02	&	1.22	&	1.31	&	1.22	&	0.96	&	0.01	&	0.66	&	0.70&	0.66	&	0.96	&	-0.01	&	0.36	&	0.39	&	0.37	&	0.96	\\
$\hat{\tau}_{\adj, \imp, \ss}$	&	-0.07	&	1.07	&	1.07	&	1.07	&	0.95	&	0.03	&	0.67	&	0.65	&	0.67	&	0.94	&	-0.04	&	0.37	&	0.44	&	0.38	&	0.97	\\
$\hat{\tau}_{\int, \imp, \ss}$	&	0.01	&	1.09	&	1.07	&	1.09	&	0.94	&	0.01	&	0.70&	0.68	&	0.70&	0.94	&	-0.01	&	0.37	&	0.37	&	0.37	&	0.95	\\
$\hat{\tau}_{\tom, \imp, \ss}$	&	-0.08	&	1.07	&	1.05	&	1.07	&	0.94	&	0.03	&	0.69	&	0.67	&	0.69	&	0.94	&	-0.05	&	0.36	&	0.37	&	0.37	&	0.95	\\
$\hat{\tau}_{\adj, \mim}$	&	-0.04	&	1.23	&	1.26	&	1.23	&	0.95	&	-0.01	&	0.64	&	0.60&	0.64	&	0.93	&	-0.03	&	0.38	&	0.42	&	0.38	&	0.97	\\
$\hat{\tau}_{\int, \mim}$	&	0.01	&	1.22	&	1.38	&	1.22	&	0.97	&	0.01	&	0.65	&	0.73	&	0.65	&	0.97	&	0.00&	0.36	&	0.40&	0.36	&	0.97	\\
$\hat{\tau}_{\tom, \mim}$	&	-0.04	&	1.22	&	1.38	&	1.22	&	0.97	&	-0.01	&	0.65	&	0.73	&	0.65	&	0.97	&	-0.03	&	0.36	&	0.40&	0.36	&	0.97	\\
$\hat{\tau}_{\adj, \mim, \ss}$	&	-0.16	&	1.07	&	1.05	&	1.08	&	0.94	&	-0.06	&	0.67	&	0.62	&	0.67	&	0.93	&	-0.13	&	0.39	&	0.41	&	0.41	&	0.95	\\
$\hat{\tau}_{\int, \mim, \ss}$	&	-0.05	&	1.14	&	1.09	&	1.14	&	0.94	&	-0.04	&	0.73	&	0.69	&	0.73	&	0.93	&	-0.06	&	0.38	&	0.38	&	0.39	&	0.94	\\
$\hat{\tau}_{\tom, \mim, \ss}$	&	-0.19	&	1.08	&	1.06	&	1.09	&	0.94	&	-0.08	&	0.69	&	0.67	&	0.69	&	0.94	&	-0.16	&	0.37	&	0.37	&	0.40&	0.92	\\ \bottomrule
\end{tabular}

\end{threeparttable}
\end{table}

\begin{table}[H]
\tiny
\centering
\caption{Simulation results under minimization, $p = 7$, $\pi = 1/2$}
\label{tab11}
\begin{threeparttable}
\begin{tabular}{cccccccccccccccc}
\toprule
\multicolumn{3}{c}{~} & \multicolumn{4}{c}{Model 1} & \multicolumn{4}{c}{Model 2} & \multicolumn{4}{c}{Model 3} \\ \hline
estimator	&	bias	&	SD	&	SE	&	RMSE	&	CP	&	bias	&	SD	&	SE	&	RMSE	&	CP	&	bias	&	SD	&	SE	&	RMSE	&	CP	\\ \midrule
$\hat{\tau}_{B}$	&	0.01	&	1.26	&	1.27	&	1.26	&	0.95	&	-0.01	&	0.83	&	0.83	&	0.83	&	0.95	&	0.00&	0.44	&	0.44	&	0.44	&	0.95	\\
$\hat{\tau}_{\adj,\ccov}$	&	-0.02	&	1.26	&	1.26	&	1.26	&	0.95	&	0.00&	0.78	&	0.78	&	0.78	&	0.95	&	-0.01	&	0.40&	0.40&	0.40&	0.95	\\
$\hat{\tau}_{\int, \ccov}$	&	0.00&	1.26	&	1.29	&	1.26	&	0.95	&	0.00&	0.78	&	0.80&	0.78	&	0.96	&	0.00&	0.40&	0.41	&	0.40&	0.95	\\
$\hat{\tau}_{\tom, \ccov}$	&	-0.02	&	1.26	&	1.29	&	1.26	&	0.95	&	0.00&	0.78	&	0.80&	0.78	&	0.96	&	-0.01	&	0.40&	0.41	&	0.40&	0.95	\\
$\hat{\tau}_{\adj, \ccov, \ss}$	&	-0.08	&	1.27	&	1.27	&	1.28	&	0.95	&	0.01	&	0.79	&	0.79	&	0.79	&	0.95	&	-0.03	&	0.39	&	0.38	&	0.39	&	0.94	\\
$\hat{\tau}_{\int, \ccov, \ss}$	&	0.00&	1.28	&	1.28	&	1.28	&	0.95	&	-0.01	&	0.79	&	0.80&	0.79	&	0.95	&	-0.01	&	0.39	&	0.39	&	0.39	&	0.95	\\
$\hat{\tau}_{\tom, \ccov, \ss}$	&	-0.09	&	1.27	&	1.28	&	1.27	&	0.95	&	0.01	&	0.79	&	0.79	&	0.79	&	0.95	&	-0.03	&	0.39	&	0.39	&	0.39	&	0.94	\\
$\hat{\tau}_{\adj, \imp}$	&	-0.01	&	1.18	&	1.17	&	1.18	&	0.95	&	0.01	&	0.61	&	0.60&	0.61	&	0.95	&	0.00&	0.39	&	0.39	&	0.39	&	0.95	\\
$\hat{\tau}_{\int, \imp}$	&	0.01	&	1.18	&	1.30&	1.18	&	0.97	&	0.01	&	0.61	&	0.67	&	0.61	&	0.97	&	0.01	&	0.39	&	0.43	&	0.39	&	0.96	\\
$\hat{\tau}_{\tom, \imp}$	&	-0.01	&	1.18	&	1.29	&	1.18	&	0.97	&	0.01	&	0.61	&	0.67	&	0.62	&	0.97	&	0.00&	0.39	&	0.43	&	0.39	&	0.96	\\
$\hat{\tau}_{\adj, \imp, \ss}$	&	-0.05	&	1.03	&	1.01	&	1.03	&	0.95	&	0.04	&	0.64	&	0.63	&	0.65	&	0.94	&	-0.02	&	0.40&	0.38	&	0.40&	0.94	\\
$\hat{\tau}_{\int, \imp, \ss}$	&	0.04	&	1.04	&	1.05	&	1.04	&	0.95	&	0.03	&	0.65	&	0.65	&	0.65	&	0.95	&	0.02	&	0.40&	0.40&	0.40&	0.95	\\
$\hat{\tau}_{\tom, \imp, \ss}$	&	-0.05	&	1.03	&	1.03	&	1.03	&	0.95	&	0.04	&	0.64	&	0.64	&	0.65	&	0.95	&	-0.02	&	0.40&	0.40&	0.40&	0.94	\\
$\hat{\tau}_{\adj, \mim}$	&	-0.05	&	1.17	&	1.17	&	1.17	&	0.95	&	-0.03	&	0.59	&	0.58	&	0.59	&	0.95	&	-0.05	&	0.38	&	0.37	&	0.38	&	0.94	\\
$\hat{\tau}_{\int, \mim}$	&	0.00&	1.18	&	1.43	&	1.18	&	0.98	&	0.00&	0.59	&	0.71	&	0.59	&	0.98	&	0.00&	0.37	&	0.46	&	0.37	&	0.98	\\
$\hat{\tau}_{\tom, \mim}$	&	-0.05	&	1.17	&	1.42	&	1.17	&	0.98	&	-0.03	&	0.59	&	0.71	&	0.59	&	0.98	&	-0.05	&	0.38	&	0.45	&	0.38	&	0.98	\\
$\hat{\tau}_{\adj, \mim, \ss}$	&	-0.23	&	1.04	&	1.02	&	1.07	&	0.94	&	-0.13	&	0.64	&	0.62	&	0.65	&	0.94	&	-0.20&	0.39	&	0.37	&	0.44	&	0.90\\
$\hat{\tau}_{\int, \mim, \ss}$	&	-0.02	&	1.10&	1.16	&	1.10&	0.96	&	-0.02	&	0.68	&	0.71	&	0.68	&	0.96	&	-0.03	&	0.41	&	0.43	&	0.41	&	0.96	\\
$\hat{\tau}_{\tom, \mim, \ss}$	&	-0.23	&	1.04	&	1.06	&	1.07	&	0.95	&	-0.14	&	0.64	&	0.65	&	0.65	&	0.95	&	-0.20&	0.39	&	0.39	&	0.44	&	0.91	\\ \bottomrule
\end{tabular}

\end{threeparttable}
\end{table}

\begin{table}[H]
\tiny
\centering
\caption{Simulation results under minimization, $p = 7$, $\pi = 2/3$}
\label{tab12}
\begin{threeparttable}
\begin{tabular}{cccccccccccccccc}
\toprule
\multicolumn{3}{c}{~} & \multicolumn{4}{c}{Model 1} & \multicolumn{4}{c}{Model 2} & \multicolumn{4}{c}{Model 3} \\ \hline
estimator	&	bias	&	SD	&	SE	&	RMSE	&	CP	&	bias	&	SD	&	SE	&	RMSE	&	CP	&	bias	&	SD	&	SE	&	RMSE	&	CP	\\ \midrule
$\hat{\tau}_{B}$	&	0.03	&	1.31	&	1.30&	1.31	&	0.95	&	0.03	&	0.87	&	0.88	&	0.87	&	0.95	&	0.00&	0.42	&	0.43	&	0.42	&	0.95	\\
$\hat{\tau}_{\adj,\ccov}$	&	0.01	&	1.31	&	1.39	&	1.31	&	0.96	&	0.02	&	0.83	&	0.84	&	0.83	&	0.95	&	0.00&	0.38	&	0.47	&	0.38	&	0.99	\\
$\hat{\tau}_{\int, \ccov}$	&	0.03	&	1.31	&	1.33	&	1.31	&	0.95	&	0.02	&	0.83	&	0.85	&	0.83	&	0.95	&	0.01	&	0.38	&	0.39	&	0.38	&	0.95	\\
$\hat{\tau}_{\tom, \ccov}$	&	0.01	&	1.31	&	1.33	&	1.31	&	0.95	&	0.02	&	0.83	&	0.85	&	0.83	&	0.95	&	0.00&	0.38	&	0.39	&	0.38	&	0.95	\\
$\hat{\tau}_{\adj, \ccov, \ss}$	&	-0.06	&	1.32	&	1.40&	1.32	&	0.96	&	0.03	&	0.84	&	0.84	&	0.84	&	0.95	&	-0.02	&	0.37	&	0.45	&	0.37	&	0.98	\\
$\hat{\tau}_{\int, \ccov, \ss}$	&	0.03	&	1.33	&	1.32	&	1.33	&	0.95	&	0.02	&	0.85	&	0.85	&	0.85	&	0.95	&	0.00&	0.37	&	0.37	&	0.37	&	0.95	\\
$\hat{\tau}_{\tom, \ccov, \ss}$	&	-0.06	&	1.32	&	1.31	&	1.32	&	0.95	&	0.04	&	0.84	&	0.84	&	0.84	&	0.95	&	-0.02	&	0.37	&	0.37	&	0.37	&	0.95	\\
$\hat{\tau}_{\adj, \imp}$	&	0.00&	1.24	&	1.28	&	1.24	&	0.96	&	0.01	&	0.66	&	0.64	&	0.66	&	0.94	&	0.00&	0.38	&	0.45	&	0.38	&	0.98	\\
$\hat{\tau}_{\int, \imp}$	&	0.02	&	1.24	&	1.39	&	1.24	&	0.97	&	0.01	&	0.66	&	0.75	&	0.66	&	0.97	&	0.00&	0.37	&	0.41	&	0.37	&	0.97	\\
$\hat{\tau}_{\tom, \imp}$	&	-0.01	&	1.24	&	1.38	&	1.24	&	0.97	&	0.01	&	0.66	&	0.75	&	0.66	&	0.97	&	-0.01	&	0.37	&	0.41	&	0.37	&	0.97	\\
$\hat{\tau}_{\adj, \imp, \ss}$	&	-0.04	&	1.10&	1.09	&	1.10&	0.95	&	0.05	&	0.68	&	0.67	&	0.69	&	0.94	&	-0.01	&	0.39	&	0.45	&	0.39	&	0.97	\\
$\hat{\tau}_{\int, \imp, \ss}$	&	0.00&	1.17	&	1.11	&	1.17	&	0.94	&	0.00&	0.75	&	0.71	&	0.75	&	0.94	&	0.00&	0.39	&	0.39	&	0.39	&	0.95	\\
$\hat{\tau}_{\tom, \imp, \ss}$	&	-0.08	&	1.12	&	1.09	&	1.12	&	0.94	&	0.03	&	0.71	&	0.70&	0.71	&	0.94	&	-0.03	&	0.38	&	0.38	&	0.38	&	0.95	\\
$\hat{\tau}_{\adj, \mim}$	&	-0.06	&	1.24	&	1.26	&	1.25	&	0.95	&	-0.05	&	0.65	&	0.60&	0.65	&	0.93	&	-0.06	&	0.38	&	0.42	&	0.39	&	0.96	\\
$\hat{\tau}_{\int, \mim}$	&	0.00&	1.26	&	1.69	&	1.26	&	0.99	&	-0.01	&	0.66	&	0.91	&	0.66	&	0.99	&	-0.01	&	0.37	&	0.47	&	0.37	&	0.99	\\
$\hat{\tau}_{\tom, \mim}$	&	-0.06	&	1.25	&	1.68	&	1.25	&	0.99	&	-0.04	&	0.65	&	0.90&	0.66	&	0.99	&	-0.06	&	0.37	&	0.47	&	0.37	&	0.99	\\
$\hat{\tau}_{\adj, \mim, \ss}$	&	-0.24	&	1.12	&	1.08	&	1.15	&	0.94	&	-0.15	&	0.69	&	0.64	&	0.71	&	0.92	&	-0.21	&	0.41	&	0.42	&	0.46	&	0.91	\\
$\hat{\tau}_{\int, \mim, \ss}$	&	-0.08	&	1.44	&	1.31	&	1.44	&	0.94	&	-0.06	&	0.88	&	0.83	&	0.89	&	0.94	&	-0.08	&	0.44	&	0.45	&	0.45	&	0.94	\\
$\hat{\tau}_{\tom, \mim, \ss}$	&	-0.30&	1.15	&	1.19	&	1.18	&	0.95	&	-0.18	&	0.73	&	0.76	&	0.75	&	0.94	&	-0.25	&	0.41	&	0.41	&	0.47	&	0.90\\ \bottomrule
\end{tabular}

\end{threeparttable}
\end{table}

\end{document}